\documentclass{article}

\usepackage{microtype}
\usepackage{graphicx}
\usepackage{subfigure}
\usepackage{booktabs}% professional tables

\usepackage[utf8]{inputenc}
\usepackage[T1]{fontenc}

\usepackage{xcolor}
\usepackage{textgreek}
\definecolor{sthlmLightBlue}{RGB}{214,237,252}
\definecolor{sthlmBlue}{RGB}{0,110,191}
\definecolor{sthlmLightGreen}{RGB}{213,247,244}
\definecolor{sthlmGreen}{RGB}{0,134,127}
\definecolor{sthlmLightGrey}{RGB}{213,217,225}
\definecolor{sthlmGrey}{RGB}{245,243,238}
\definecolor{sthlmDarkGrey}{RGB}{51,51,51}
\definecolor{sthlmLightOrange}{RGB}{255,215,210}
\definecolor{sthlmOrange}{RGB}{221,74,44}
\definecolor{sthlmLightPurple}{RGB}{241,230,252}
\definecolor{sthlmPurple}{RGB}{93,35,125}
\definecolor{sthlmLightRed}{RGB}{254,222,237}
\definecolor{sthlmRed}{RGB}{196,0,100}
\definecolor{sthlmYellow}{RGB}{252,191,10}

\usepackage[accepted]{icml2026}

\usepackage{amsmath, amssymb, amsfonts, mathtools, amsthm}
\usepackage[normalem]{ulem}
\usepackage{thmtools}
\usepackage[cal=boondox]{mathalfa}
\usepackage[mathcal]{euscript}
\usepackage{stmaryrd}

\usepackage{array,multirow,makecell}
\usepackage{pifont}
\usepackage{paralist}
\usepackage{enumerate}
\usepackage{wrapfig}
\usepackage{caption}
\usepackage{subcaption}
\usepackage{colortbl}
\usepackage{multirow}
\usepackage{array}

\usepackage{algorithm}
\usepackage{algorithmic}

\usepackage{pgfplots}
\pgfplotsset{compat=1.18}
\usepgfplotslibrary{fillbetween}
\usepackage{tikz}
\usetikzlibrary{shapes,shadows,shadows.blur,shapes,shapes.callouts,shapes.geometric,arrows,automata,fit,shadows,trees,decorations,decorations.pathreplacing,decorations.pathmorphing,decorations.text,positioning,patterns,backgrounds,matrix,arrows.meta,calc,petri,chains,bending,scopes}
\tikzset{arrow/.style={-{Stealth[length=2mm]},thick}}

\usepackage{url}
\usepackage[colorlinks]{hyperref}
\hypersetup{
  linkcolor=sthlmRed,
  citecolor=sthlmGreen,
  urlcolor=sthlmBlue
}

\usepackage[capitalize,noabbrev]{cleveref}

\usepackage{todonotes} % comment out before camera-ready if desired

\usepackage{xspace}
\def\ie{{\em i.e.,}\xspace}

\def\cf{{\em cf.}\xspace}

\definecolor{RUGRed}{HTML}{C00016}
\definecolor{P1}{HTML}{2B8CBE}
\definecolor{P2}{HTML}{EB6A3D}
\definecolor{P3}{HTML}{2E8540}
\definecolor{P4}{HTML}{9B4D96}
\definecolor{P5}{HTML}{800080}
\definecolor{PanelGray}{RGB}{245,247,250}

\newcommand{\myrowcolour}{\rowcolor[gray]{0.925}}

\newcommand{\highest}[1]{\textcolor{sthlmRed}{\textbf{#1}}}

\DeclareMathOperator*{\argmax}{arg\,max}
\DeclareMathOperator*{\argmin}{arg\,min}

\theoremstyle{plain}
\newtheorem{theorem}{Theorem}[section]
\newtheorem{proposition}[theorem]{Proposition}
\newtheorem{lemma}[theorem]{Lemma}

\theoremstyle{definition}
\newtheorem{definition}[theorem]{Definition}

\theoremstyle{remark}

\newcommand{\defeq}{\doteq}

\tikzset{
 rightrobot/.pic={
code={
 \tikzset{scale=10/10}, 
 
 \draw[fill=gray!30, thick] (0,4) circle (0.5);

 \draw[fill=gray!30, thick]  (-0.5,4)--(-0.5,2.5)  --  (0.5,2.5) -- (0.5,4);

 \draw[fill=white, thick] (-.65,3.889) rectangle (-.265,3.959);
 \draw[fill=white, thick] (-.6,4) rectangle (-.265,3.85);

 \draw[thick] (-.85,3.2) -- (0,3.55);
 \draw[thick] (-.85,3.2) -- (-.85,2.95);
 \draw[thick] (-.85,3.2) -- (-1.1,3.2);

 \draw[fill=white, thick] (0,2.5) circle (0.25);
 
 \draw[fill=white] (0.1,4.155) ellipse (.295cm and .25cm) node[scale=10/10] at (0.1,4.155) {}; %$m^\mathtt{A}$$\pi(s)$
}
  }
}

\tikzset{
 leftrobot/.pic={
code={
 \tikzset{scale=10/10}, 
 
 \draw[fill=gray!30, thick] (0,4) circle (0.5);

 \draw[fill=gray!30, thick]  (-0.5,4)--(-0.5,2.5)  --  (0.5,2.5) -- (0.5,4);

 \draw[fill=white, thick] (+.65,3.889) rectangle (+.385,3.959);
 \draw[fill=white, thick] (.6,4) rectang le (+.335,3.85);

 \draw[thick] (.85,3.2) -- (0,3.55);
 \draw[thick] (.85,3.2) -- (.85,2.95);
 \draw[thick] (.85,3.2) -- (1.1,3.2);

 \draw[fill=white, thick] (0,2.5) circle (0.25);
 
 \draw[fill=white] (-.1,4.155) ellipse (.295cm and .25cm) node[scale=10/10] at (-.1,4.155) {$\pi(s)$}; 
}
  }
}

\tikzset{
 passiveleftrobot/.pic={
code={ 
 \draw[fill=gray!30, thick] (0,4) circle (0.5);

 \draw[fill=gray!30, thick]  (-0.5,4) -- (-0.5,2.5) -- (0.5,2.5) -- (0.5,4);

 \draw[fill=white, thick] (+.65,3.889) rectangle (+.385,3.959);
 \draw[fill=white, thick] (.6,4) rectangle (+.335,3.85);

 \draw[thick] (.85,3.2) -- (0,3.55);
 \draw[thick] (.85,3.2) -- (.85,2.95);
 \draw[thick] (.85,3.2) -- (1.1,3.2);

 \draw[fill=white, thick] (0,2.5) circle (0.25);
 
 \draw[fill=white] (-.1,4.155) ellipse (.295cm and .25cm) node at (-.1,4.155) {}; 
}
  }
}

\tikzset{
 passiverightrobot/.pic={
code={ 
 \draw[fill=gray!30, thick] (0,4) circle (0.5);

 \draw[fill=gray!30, thick]  (-0.5,4) -- (-0.5,2.5) -- (0.5,2.5) -- (0.5,4);

 \draw[fill=white, thick] (-.65,3.889) rectangle (-.265,3.959);
 \draw[fill=white, thick] (-.6,4) rectangle (-.265,3.85);

 \draw[thick] (-.85,3.2) -- (0,3.55);
 \draw[thick] (-.85,3.2) -- (-.85,2.95);
 \draw[thick] (-.85,3.2) -- (-1.1,3.2);

 \draw[fill=white, thick] (0,2.5) circle (0.25);
 
 \draw[fill=white] (-.1,4.155) ellipse (.295cm and .25cm) node at (-.1,4.155) {}; 
}
  }
}

\icmltitlerunning{An \(\varepsilon\)-Optimal Sequential Approach for Solving zs-POSGs}

\begin{document}
\interdisplaylinepenalty=1000
\setlength{\abovedisplayskip}{6pt}
\setlength{\belowdisplayskip}{6pt}

\twocolumn[
\icmltitle{An \(\varepsilon\)-Optimal Sequential Approach for Solving zs-POSGs}
\begin{icmlauthorlist}
\icmlauthor{Erwan C. Escudie}{groningen}
\icmlauthor{Matthia Sabatelli}{groningen}
\icmlauthor{Jilles S. Dibangoye}{groningen}
\end{icmlauthorlist}

\icmlaffiliation{groningen}{
Bernoulli Institute, University of Groningen,
Nijenborgh 4, NL-9747 AG Groningen, Netherlands}

\icmlcorrespondingauthor{Erwan C. Escudie}{e.c.escudie@rug.nl}
\icmlcorrespondingauthor{Matthia Sabatelli}{m.sabatelli@rug.nl}
\icmlcorrespondingauthor{Jilles S. Dibangoye}{j.s.dibangoye@rug.nl}

\icmlkeywords{}

\vskip 0.3in
]

\printAffiliationsAndNotice

\begin{abstract}
While recent reductions of zero-sum partially observable stochastic games (zs-POSGs) to transition-independent stochastic games (TI-SGs) theoretically admit dynamic programming, practical solutions remain stifled by the inherent non-linearity and exponential complexity of the simultaneous minimax backup. In this work, we surmount this computational barrier by rigorously recasting the simultaneous interaction as a sequential decision process via the principle of separation. We introduce distinct sufficient statistics for valuation and execution—the sequential occupancy state and the private occupancy family—which reveal a latent geometry in the optimal value function. This structural insight allows us to linearise the backup operator, reducing the update complexity from exponential to polynomial while enabling the direct extraction of safe policies without heuristic bookkeeping. Experimental results demonstrate that algorithms leveraging this sequential framework significantly outperform state-of-the-art methods, effectively rendering previously intractable domains solvable.
\end{abstract}

\section{Introduction}

The principle of optimality, introduced by \citeauthor{bellman} \citep{bellman} in the 1950s, remains the cornerstone of sequential decision-making. While its rigorous application to fully observable Markov Decision Processes (MDPs) and zero-sum Stochastic Games (SGs) \citep{shapley} is now settled theory, extending this level of control to simultaneous-move zero-sum partially observable stochastic games (zs-POSGs) has long been regarded as one of the hardest open problems in algorithmic game theory. The combination of partial observability, adversarial dynamics, and simultaneous moves induces a severe curse of dimensionality that has historically defied efficient solution methods \citep{HansenBZ04}.

Significant progress has been made in the adjacent field of cooperative POSGs (Dec-POMDPs), where the adoption of a centralised planning perspective allowed for the reformulation of the problem as a continuous-state MDP \citep{nayyar2013decentralized,oliehoek2013sufficient,DibangoyeABC13}. This approach facilitated the transfer of powerful dynamic programming techniques \citep{Dibangoye:OMDP:2016,pmlr-v80-dibangoye18a} and heuristic search algorithms \citep{Szer05,OliehoekSDA10,dibangoye2012scaling,peralezsolving,Peralez_Delage_Castellini_Cunha_Dibangoye_2025}, effectively rendering complex cooperative domains tractable. In stark contrast, the adversarial POSG landscape has remained fragmented. Approaches relying on heuristic search \citep{horak2017heuristic,HorBos-aaai19,Buffet2020,Buffet2020bis}, deep reinforcement learning \citep{BrownBLG20,Moravck2017DeepStackEA}, or regularised public-belief abstractions \citep{SokotaDL0KB23} often struggle to balance scalability with game-theoretic guarantees. While recent work has explored convex value functions \citep{Wiggers16,cunha2023convex,7963513}, these methods typically grapple with intricate bookkeeping \citep{DelBufDibSaf-DGAA-23} or restrictive structural assumptions to extract safe policies.

Recent advances have begun to crack this barrier. Notably, \citet{escudie2025varepsilonoptimally} achieved a structural breakthrough by demonstrating that general zs-POSGs can be losslessly reduced to transition-independent zero-sum stochastic games (TI-zs-SGs). This reduction is profound: it lifts the nebulous space of partial observability into a structured Markovian game played over \emph{occupancy states} (beliefs over states and joint histories), theoretically opening the door to dynamic programming in the spirit of \citeauthor{shapley}. However, in practice, the resulting computational burden remains prohibitive. The fundamental bottleneck lies in the simultaneous nature of the backup operator. In the TI-zs-SG formulation, the ``actions'' available to the planner are decision rules—distributions over private actions conditioned on history. Even in the reduced game, the standard Bellman backup requires solving a matrix game (minimax optimisation) over these decision rules at every state. Since the space of decision rules grows exponentially with the horizon, the simultaneous backup operator is inherently non-linear and computationally ruinous. Thus, despite the structural reduction, the community remains stifled by a computational wall.

\textbf{The Sequential Paradigm Shift.} In this work, we argue that the intractability of the simultaneous backup is not an intrinsic property of the game, but a consequence of how we structure the optimisation. We propose a paradigm shift: we show that optimal reasoning in simultaneous games can be rigorously recast as a sequential decision process, effectively decoupling the agents' choices in the planner's reasoning phase without altering the simultaneous semantics of the execution phase. Therefore, in this paper, we investigate the following research question:

\tikzstyle{mybox} = [draw=black, very thick, rectangle, rounded corners, inner ysep=10pt, inner xsep=10pt]
\vspace{-12pt}
\begin{center}
\begin{tikzpicture}
\node [mybox] (box){
\begin{minipage}{.93\linewidth}
\emph{How can the computationally ruinous simultaneous minimax backup be decomposed into tractable sequential updates without sacrificing game-theoretic optimality?}
\end{minipage}
};
\end{tikzpicture}
\end{center}
\vspace{-8pt}

Our answer rests on the principle of separation \citep{DibangoyeB015}, which suggests that the statistics required to compute \emph{values} do not need to be identical to those required to extract \emph{policies}. By formalising this distinction, we introduce two novel information structures: the \emph{sequential occupancy state}, a value-sufficient statistic that allows us to decompose the joint value function, and the \emph{private occupancy family}, a policy-sufficient statistic that enables the direct extraction of safe strategies. This structural dichotomy allows us to reveal the geometry of the optimal value function, fundamentally altering the optimisation landscape. We demonstrate that the sequential backup operator becomes polynomial, transforming the intractable non-linear simultaneous update into a sequence of tractable linear programs. By disentangling the players' decision variables, we collapse the complexity of the Bellman update from exponential to polynomial complexity. Finally, we integrate these findings into a \emph{point-based value iteration} (PBVI) framework \citep{pineau2003point}, demonstrating that our sequential approach solves established benchmarks from the literature \citep{Wiggers16} with significantly higher numerical stability and lower exploitability than previous simultaneous methods \citep{escudie2025varepsilonoptimally,DelBufDibSaf-DGAA-23,Tammelin14}, effectively bridging the gap between theoretical solvability and practical tractability.

\section{Preliminaries}
\label{sec:preliminaries}

In this section, we formally define zero-sum partially observable stochastic games (zs-POSGs) and establish the notation used throughout our work. We then review the prevailing simultaneous-move central-planner formulation and show why, despite yielding a theoretically sound reduction, it encounters a fundamental computational barrier. This will motivate the sequential paradigm shift developed in the remainder of the paper.

\subsection{Simultaneous-Move zs-POSGs}

A simultaneous-move zs-POSG, denoted \(\mathcal{M}\), is defined by the tuple \(\mathcal{M} = (\mathcal{S}, \mathcal{A}_{\textcolor{sthlmRed}{1}}, \mathcal{A}_{\textcolor{sthlmRed}{2}}, \mathcal{Z}_{\textcolor{sthlmRed}{1}}, \mathcal{Z}_{\textcolor{sthlmRed}{2}}, p, r, b, \gamma, \ell)\). The game proceeds over a finite horizon \(\ell\), involving two competing agents, player \(\textcolor{sthlmRed}{1}\) (maximiser) and player \(\textcolor{sthlmRed}{2}\) (minimiser). The finite set \(\mathcal{S}\) represents the hidden states of the environment. At each stage \(t\), both players simultaneously select private actions \(a_{\textcolor{sthlmRed}{1}} \in \mathcal{A}_{\textcolor{sthlmRed}{1}}\) and \(a_{\textcolor{sthlmRed}{2}} \in \mathcal{A}_{\textcolor{sthlmRed}{2}}\). These actions, combined with the current state \(s\), trigger a stochastic transition to a next state \(s'\) and the generation of observations according to the joint probability kernel \(p(s', z_{\textcolor{sthlmRed}{1}}, z_{\textcolor{sthlmRed}{2}} \mid s, a_{\textcolor{sthlmRed}{1}}, a_{\textcolor{sthlmRed}{2}})\). Here, \(z_{\textcolor{sthlmRed}{i}} \in \mathcal{Z}_{\textcolor{sthlmRed}{i}}\) denotes the private observation received by player \(\textcolor{sthlmRed}{i}\). The immediate payoff \(r(s, a_{\textcolor{sthlmRed}{1}}, a_{\textcolor{sthlmRed}{2}})\) rewards player \(\textcolor{sthlmRed}{1}\) at the expense of player \(\textcolor{sthlmRed}{2}\). The game initiates from a distribution \(b \in \Delta(\mathcal{S})\) and discounts future rewards by \(\gamma \in [0,1)\). 

Unlike in fully observable games, players here act based on imperfect information. We define the private history \(h_{\textcolor{sthlmRed}{i},t}\) as the sequence of actions and observations available to player \(\textcolor{sthlmRed}{i}\) up to stage \(t\). To manage this uncertainty, players employ \emph{policies} \(\pi_{\textcolor{sthlmRed}{i}} = (d_{\textcolor{sthlmRed}{i},0}, \dots, d_{\textcolor{sthlmRed}{i},\ell-1})\), which are sequences of history-dependent decision rules. A decision rule \(d_{\textcolor{sthlmRed}{i},t}: \mathcal{H}_{\textcolor{sthlmRed}{i},t} \to \Delta(\mathcal{A}_{\textcolor{sthlmRed}{i}})\) maps private histories to probability distributions over actions. We denote the set of all such decision rules at stage \(t\) by \(\mathcal{D}_{\textcolor{sthlmRed}{i},t}\). The value of the game under a joint policy \((\pi_{\textcolor{sthlmRed}{1}}, \pi_{\textcolor{sthlmRed}{2}})\) is the expected cumulative discounted return from the initial belief \(b\), denoted \(v_{\pi_{\textcolor{sthlmRed}{1}}, \pi_{\textcolor{sthlmRed}{2}}}(b)\). Under the minimax theorem \citep{Neumann1928,Sion1958}, the game admits a unique optimal value \(v_{\mathrm{sim}}^*(b) = \max_{\pi_{\textcolor{sthlmRed}{1}}} \min_{\pi_{\textcolor{sthlmRed}{2}}} v_{\pi_{\textcolor{sthlmRed}{1}}, \pi_{\textcolor{sthlmRed}{2}}}(b)\).

\subsection{The Computational Barrier of Simultaneity}

Extending \citeauthor{bellman}'s principle of optimality to zs-POSGs requires an \emph{information state} that makes the interaction Markovian while respecting players' information constraints. Since no shared state or joint history is commonly known, recent work has pursued \emph{reductions} that embed a zs-POSG into a fully observable stochastic game and then apply dynamic programming (DP) in the reduced model. Following \citet{sanjari2023isomorphism}, a reduction is \emph{lossless} if it satisfies three criteria: \emph{value preservation} (the expected return of any joint policy is unchanged), \emph{equilibrium correspondence} (original and new equilibria correspond under a surjective mapping), and \emph{information-structure equivalence} (the reduction neither adds spurious information nor collapses strategically relevant distinctions). These criteria ensure that planning in the reduced game is both optimal and game-theoretically interpretable for the original zs-POSG.

Several reductions fall short of this standard or remain computationally intractable. Public-belief reductions exploit rich public signals: in POSGs with public observations, the common signal supports a Markov game on public beliefs \citep{ghosh2004zero,HorBos-aaai19,horak2017heuristic}. The PuB-AMG of \citet{BrownBLG20,SokotaDL0KB23} extend this idea via public policy commitments and a stochastic game over public beliefs, solving a regularised minimax problem. However, the added announcements modify the original information structure, and the resulting strategies can be highly exploitable in the underlying zs-POSG; there is no guarantee of value preservation or exact equilibrium recovery, so the reduction is not lossless for general zs-POSGs. Occupancy Markov games (zs-OMGs) instead adopt a central planner acting on \emph{occupancy states}, posterior distributions over hidden states, and joint histories \citep{Wiggers16,Buffet2020,DelBufDibSaf-DGAA-23}. While this supports Bellman-style DP, the planner chooses joint decision rules as if it observed the latent occupancy state, breaking equilibrium correspondence. Ad hoc policy-tracking can patch execution but does not restore a clean lossless reduction, and, crucially, each backup involves linear programs over exponentially many history-dependent decision rules, leading to exponential complexity in the horizon and information structure.

\citet{escudie2025varepsilonoptimally} resolve the \emph{correctness} side by giving a \emph{lossless} reduction from any zs-POSG to a transition-independent zero-sum stochastic game whose states are occupancy states and player-specific occupancy sets. The reduction preserves value, equilibria, and information structure, and it restores a clean dynamic-programming view: Bellman-style optimality equations become available, so point-based methods such as \texttt{PBVI} can be transferred in a principled way.
The remaining difficulty is \emph{algorithmic}. A direct Bellman backup at the central-planning level is a simultaneous minimax over \emph{history-dependent decision rules}, i.e., a stage game whose rows and columns are entire decision rules rather than primitive actions—an exponential object that is dead on arrival. \citet{escudie2025varepsilonoptimally} avoid this by trading “quantify over all opponent decision rules” for “optimise against an explicit, finite value representation”: the opponent’s continuation values are cached as a family of finite envelopes, and greedy updates are computed by a linear program whose constraints are indexed by cached envelopes and their witness vectors, not by opponent decision rules. The exponential quantification thus disappears from the \emph{formulation}.

However, this trade shifts the burden onto what is stored and how it is optimised. As \texttt{PBVI} iterates, new witness vectors are generated and the cache grows; meanwhile, the greedy step is realised by a \emph{single monolithic} linear program that tightly couples cache indices with the opponent’s action--observation branching. In practice, this coupling makes the program large quickly, so the backup remains the dominant runtime cost even though it is polynomial in explicit representation size. The present work targets precisely this bottleneck by reformulating the central-planner optimisation itself: we replace the monolithic greedy backup with a \emph{sequential} backup that factorises the update into smaller, structured programs, retaining the same lossless foundation while markedly reducing the coupling paid at each iteration.

\section{The Sequential Reformulation}
\label{sec:sequential_approach}

The computational barrier in \emph{simultaneous zs-POSGs} stems from a single source: the Bellman backup operator couples the players through a minimax optimisation over the joint space of decision rules \(\mathcal{D}_{\textcolor{sthlmRed}{1},t}\times \mathcal{D}_{\textcolor{sthlmRed}{2},t}\). Since the size of these spaces is exponential in the history length, solving this matrix game at every step becomes intractable. However, this coupling is a constraint of the \emph{simultaneous formulation}, not intrinsic to the optimal value function itself. To dismantle this barrier, we propose a topological transformation: we expand each simultaneous decision stage into two internal \emph{sub-stages}. Instead of selecting a joint profile in one shot, the planner first selects a decision rule for player~\textcolor{sthlmRed}{1}, and then—conditioned on this choice—selects a decision rule for player~\textcolor{sthlmRed}{2}. Crucially, this sequentialisation occurs only within the planner's optimisation process; the physical game remains simultaneous.

\subsection{Lifting to Sequential Occupancy States}
\label{subsec:sequential_states}

To support this two-step reasoning, we must lift the state space to capture the system's status at these distinct sub-stages. We introduce \emph{sequential occupancy states}: the first ($x_{\textcolor{sthlmRed}{1},t}$) represents the standard simultaneous state, while the second ($x_{\textcolor{sthlmRed}{2},t}$) captures the intermediate state after player~\textcolor{sthlmRed}{1}'s commitment but before player~\textcolor{sthlmRed}{2}'s response.

\begin{definition}%[Sequential Occupancy States]
\label{def:sequential:occupancy:state}
Let \(\theta_{\textcolor{sthlmRed}{1},t} \doteq (d_{\textcolor{sthlmRed}{1},0}, d_{\textcolor{sthlmRed}{2},0}, \dots, d_{\textcolor{sthlmRed}{1},t-1}, d_{\textcolor{sthlmRed}{2},t-1})\) and \(\theta_{\textcolor{sthlmRed}{2},t} \doteq (\theta_{\textcolor{sthlmRed}{1},t}, d_{\textcolor{sthlmRed}{1},t})\) denote the exhaustive information states (histories of joint decision rules) of the sequential planner at sub-stages \((\textcolor{sthlmRed}{1},t)\) and \((\textcolor{sthlmRed}{2},t)\), respectively.
The sequential occupancy states are the posteriors:
\begin{align*}
x_{\textcolor{sthlmRed}{1},t}(s,h_{\textcolor{sthlmRed}{1},t},h_{\textcolor{sthlmRed}{2},t})
&\defeq \Pr(s,h_{\textcolor{sthlmRed}{1},t},h_{\textcolor{sthlmRed}{2},t}\mid \theta_{\textcolor{sthlmRed}{1},t}),\\
x_{\textcolor{sthlmRed}{2},t}(s,h_{\textcolor{sthlmRed}{1},t},h_{\textcolor{sthlmRed}{2},t},a_{\textcolor{sthlmRed}{1}})
&\defeq \Pr(s,h_{\textcolor{sthlmRed}{1},t},h_{\textcolor{sthlmRed}{2},t},a_{\textcolor{sthlmRed}{1}}\mid \theta_{\textcolor{sthlmRed}{2},t}).
\end{align*}
Moreover, the internal transition $(\textcolor{sthlmRed}{1},t)\!\to\!(\textcolor{sthlmRed}{2},t)$ induced by $d_{\textcolor{sthlmRed}{1},t}$ is the deterministic pushforward
\(
x_{\textcolor{sthlmRed}{2},t}(s,h_{\textcolor{sthlmRed}{1},t},h_{\textcolor{sthlmRed}{2},t},a_{\textcolor{sthlmRed}{1}})
=
x_{\textcolor{sthlmRed}{1},t}(s,h_{\textcolor{sthlmRed}{1},t},h_{\textcolor{sthlmRed}{2},t})\, d_{\textcolor{sthlmRed}{1},t}(a_{\textcolor{sthlmRed}{1}}\mid h_{\textcolor{sthlmRed}{1},t})
\).
\end{definition}

\begin{figure*}
 \centering
 \begin{tikzpicture}[->,>=stealth',semithick,auto,node distance=3.75cm, square/.style={regular polygon,regular polygon sides=4}] %
  \tikzstyle{every state}=[draw=black,text=black,inner color= white,outer color= white,draw= black,text=black, drop shadow]
  \tikzstyle{place}=[thick,draw=sthlmBlue,fill=blue!20,minimum size=12mm, opacity=.5]
  \tikzstyle{red place}=[square,place,draw=sthlmRed,fill=sthlmLightRed,minimum size=20mm]
  \tikzstyle{green place}=[diamond,place,draw=sthlmGreen,fill=sthlmLightGreen,minimum size=15mm]

 %-=-=-=-=-=-=-=-=-=-=-=-=-=-=-=-=-=-=-=-=-=-=-=-=
 %	FRAME: discrete time systems
 %-=-=-=-=-=-=-=-=-=-=-=-=-=-=-=-=-=-=-=-=-=-=-=-=

  \node[fill=white, scale=.75] (T) at (-3.5,-3.75) {};
  \node[fill=white, scale=.75] (T0) at (0,-3.75) {$0$};
  \node[fill=white, scale=.75]     (T1) [right of=T0,node distance=3.75cm, fill=white] {$t$};
  \node[fill=white, scale=.75]     (T2) [right of=T1,node distance=3.75cm] {$\textcolor{sthlmRed}{\cdots}$};
  \node[fill=white, scale=.75]     (T3) [right of=T2,fill=white, node distance=3.75cm] {$t+1$};
  \node[fill=white, scale=.75]     (T4) [right of=T3,fill=white, node distance=1.75cm] {$\ldots$};
  \draw[->,>=stealth',semithick,dashed, color=black,anchor=mid] (T3) -- (T4); 
  \draw[->,>=stealth',semithick, color=black, anchor=mid] (T2) -- (T3); 
  \draw[->,>=stealth',semithick, color=black, anchor=mid] (T1) -- (T2); 
  \draw[->,>=stealth',semithick,dashed, color=black,anchor=mid] (T0) -- (T1); 
  \draw[->,>=stealth',semithick,rounded corners, color=black,anchor=mid] (T) node[fill=white, scale=.75]{Time} -- (T0); 
 %

 %-=-=-=-=-=-=-=-=-=-=-=-=-=-=-=-=-=-=-=-=-=-=-=-=
 %	FRAME: Markov chains
 %-=-=-=-=-=-=-=-=-=-=-=-=-=-=-=-=-=-=-=-=-=-=-=-=
 \node[state,place, scale=.75] (S0)            {$x_{\textcolor{sthlmRed}{1},0}$};
  \node[state,place, scale=.75]     (S1) [right of=S0] {$x_{\textcolor{sthlmRed}{1},t}$};
  \node[state,place, scale=.75]     (S2) [ right of=S1] {$x_{\textcolor{sthlmRed}{2},t}$};
  %\node[state,place, scale=.75]     (S23) [right of=S2,node distance=1.875cm] {$\cdots$};
  \node[state,place, scale=.75]     (S3) [ right of=S2] {$x_{\textcolor{sthlmRed}{1},t+1}$};
  \node[, scale=.75]     (S4) [ right of=S3,node distance=1.75cm] 		 {$\cdots$};

 % State path
 \path[sthlmBlue] (S0) edge[dashed] node[midway,text=black,draw=none,scale=.7, above=-5pt] 		{} (S1) 
      (S1) edge node[midway,text=black,draw=none,scale=.7, above=-5pt] 		{} (S2) 
      (S2)   edge[dashed] (S3) 
      (S3) edge[dashed] (S4);

 %-=-=-=-=-=-=-=-=-=-=-=-=-=-=-=-=-=-=-=-=-=-=-=-=
 %	FRAME: controllable Markov chains
 %-=-=-=-=-=-=-=-=-=-=-=-=-=-=-=-=-=-=-=-=-=-=-=-=
{
 \node[, scale=.75]     (A_SLACK) [below of=S0,node distance=2.6cm] {};

 \node[state,red place, scale=.65]     (A0) [below right of=A_SLACK,node distance=2cm] {$d_{\textcolor{sthlmRed}{2},0}$};
 \node[state,red place, scale=.65]     (A1) [right of=A0,node distance=4.25cm]    {$d_{\textcolor{sthlmRed}{1},t}$};
 %\node[state,red place, scale=.65]     (A12) [right of=A1,node distance=2.125cm]    {$\cdots$};
 \node[state,red place, scale=.65]     (A2) [right of=A1,node distance=4.25cm]    {$d_{\textcolor{sthlmRed}{2},t}$};
 \node[, scale=.75]     (A3) [right of=A2,node distance=5.5cm]    {};

 % Decision rule-State path
 \path[sthlmRed] (A0) edge [out=90, in=-155, dashed] node[sthlmGreen, scale=.75] {} (S1)
     (A1) edge [out=90, in=-155] node[sthlmGreen, scale=.75] {} (S2)
     (A2) edge [out=90, in=-155] node[sthlmGreen, scale=.75] {} (S3);   
}

 %-=-=-=-=-=-=-=-=-=-=-=-=-=-=-=-=-=-=-=-=-=-=-=-=
 %	FRAME: partially observable controllable Markov chains
 %-=-=-=-=-=-=-=-=-=-=-=-=-=-=-=-=-=-=-=-=-=-=-=-=
{
 \node[state,green place, scale=.75]     (O0) [below of=S0,node distance=2cm] {$r_{\textcolor{sthlmRed}{1},0}$};
 \node[state,green place, scale=.75]     (O1) [below of=S1,node distance=2cm] {$r_{\textcolor{sthlmRed}{1},t}$};
 \node[state,green place, scale=.75]     (O2) [below of=S2,node distance= 2cm] {$r_{\textcolor{sthlmRed}{2},t}$};
 \node[state,green place, scale=.75]     (O3) [below of=S3,node distance= 2cm] {$r_{\textcolor{sthlmRed}{1},t+1}$};
  \node[, scale=.75]     (O4) [ right of=O3,node distance=1.75cm] 		 {$\cdots$};

  % Decision rule-State-Observation path
 \path[sthlmGreen] (S0) edge  node {} (O0)
 	 (S1) edge  node {} (O1)
      (A0) edge[out=90, in=-180, dashed] node {} (O1)
     	(S2) edge  node {} (O2)
     (A1) edge[out=90, in=-180] node {} (O2)
  	(S3) edge  node {} (O3)
     (A2) edge[out=90, in=-180] node {} (O3);
     }
   
     \node[inner sep=0pt] (prof) at (-3.35,-1.5) {};

  \draw[->,>=stealth',semithick,rounded corners, color=sthlmBlue,anchor=mid] (prof) -- (-3.35,0) --node[draw=none, fill=white, scale=.75] {infers} (S0);	
  {\draw[->,>=stealth',semithick,rounded corners, color=sthlmGreen,anchor=mid] (prof) --node[draw=none, fill=white, scale=.75] {receives} (O0); }	
  { \draw[->,>=stealth',semithick,rounded corners, color=sthlmRed,anchor=mid] (prof) -- (-3.35,-2.85) --node[draw=none, fill=white, scale=.75] {selects} (A0);}	

 \node[inner sep=0pt] (russell) at (-3.25,-1.35) {\includegraphics[width=.1\textwidth]{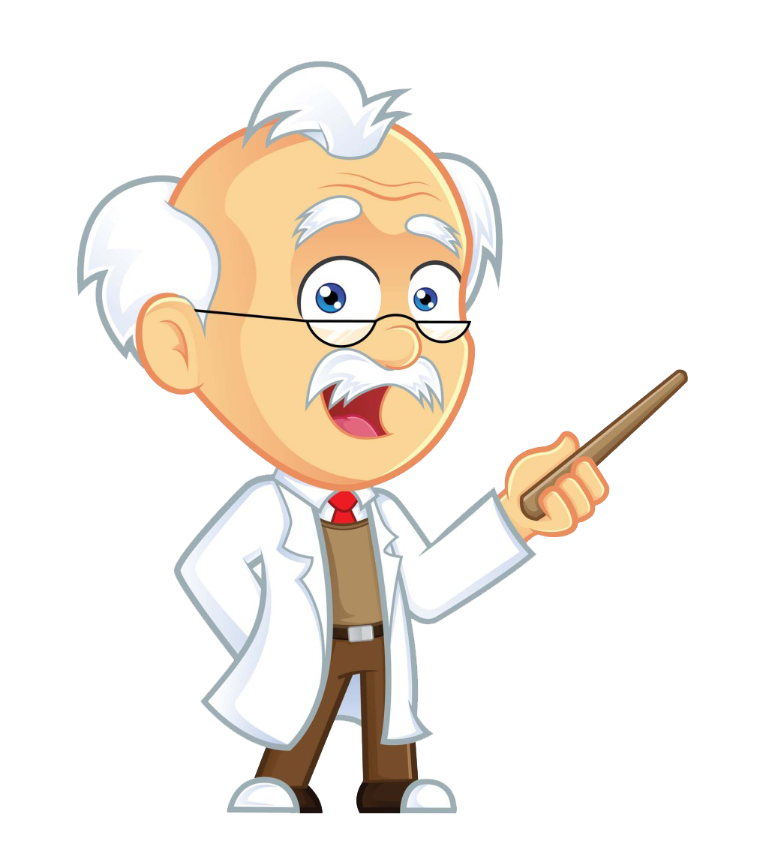}};
 \end{tikzpicture}
\caption{The influence diagram for the sequential occupancy game from the sequential central planner.}
\label{fig:influence}
\end{figure*}

\noindent Figure~\ref{fig:influence} illustrates this topological change. The simultaneous decision node is \emph{unzipped} into the sequence
\[x_{\textcolor{sthlmRed}{1},t} \xrightarrow{d_{\textcolor{sthlmRed}{1}}} x_{\textcolor{sthlmRed}{2},t} \xrightarrow{d_{\textcolor{sthlmRed}{2}}} x_{\textcolor{sthlmRed}{1},t+1}.\]
The intermediate node \(x_{\textcolor{sthlmRed}{2},t}\) is not a new physical state of the world, but a computational snapshot that explicitly includes player~\textcolor{sthlmRed}{1}'s mixed action. We write \(\mathcal{X}_{\mathrm{seq}}\) for the set of all sequential occupancies.

\subsection{Sequential Dynamics and Rewards}
\label{subsec:sequential_primitives}

At the occupancy level, only two kinds of updates can occur, matching the two sub-stages.
At \((\textcolor{sthlmRed}{1},t)\), selecting \(d_{\textcolor{sthlmRed}{1},t}\) only injects player~\textcolor{sthlmRed}{1}'s mixed action into the occupancy; the environment is not yet resolved, so the immediate payoff is zero.
At \((\textcolor{sthlmRed}{2},t)\), selecting \(d_{\textcolor{sthlmRed}{2},t}\) completes the original simultaneous stage: the joint action is resolved, the true reward is realised, and the next stage-boundary occupancy is produced.

We denote these two updates by occupancy-level primitives \(\tau_{\mathrm{seq}}\) (transition) and \(\rho_{\mathrm{seq}}\) (reward).
Concretely, for the first sub-stage we set \(\rho_{\mathrm{seq}}(x_{\textcolor{sthlmRed}{1},t},d_{\textcolor{sthlmRed}{1},t})\equiv 0\) and define \(\tau_{\mathrm{seq}}(x_{\textcolor{sthlmRed}{1},t},d_{\textcolor{sthlmRed}{1},t}) \doteq x_{\textcolor{sthlmRed}{2},t}\) via the deterministic pushforward.
For the second sub-stage, we define \(\tau_{\mathrm{seq}}(x_{\textcolor{sthlmRed}{2},t},d_{\textcolor{sthlmRed}{2},t}) \doteq x_{\textcolor{sthlmRed}{1},t+1}\) via the standard environment kernel, and the reward \(\rho_{\mathrm{seq}}(x_{\textcolor{sthlmRed}{2},t},d_{\textcolor{sthlmRed}{2},t}) \doteq \mathbb{E}[r_t\mid x_{\textcolor{sthlmRed}{2},t},d_{\textcolor{sthlmRed}{2},t}]\).

\begin{definition}[Sequential Occupancy Game]
\label{def:seq_omg_occ}
The sequential reformulation induces a \emph{sequential occupancy Markov game}
\(\mathcal{M}_{\mathrm{seq}} \defeq (\mathcal{X}_{\mathrm{seq}},\mathcal{F}_{\mathrm{seq}}, \mathcal{D}, \tau_{\mathrm{seq}}, \rho_{\mathrm{seq}}, \gamma, \ell)\),
where \(\mathcal{D} \doteq \mathcal{D}_{\textcolor{sthlmRed}{1}} \cup \mathcal{D}_{\textcolor{sthlmRed}{2}}\) is the combined space of decision rules acting on the sequential state space.
\end{definition}

\subsection{Recovering Simultaneity via Two-Step Recursion}
\label{subsec:sequential_bellman}

The central property of this reformulation is that composing the two sequential updates perfectly reconstructs the original simultaneous backup.
We define the sequential optimal value function \(v^{*}_{\mathrm{seq}}\) via standard Bellman equations on the unzipped graph, using a stage-dependent operator \(\mathrm{opt}_{\textcolor{sthlmRed}{i}}\) (defined as \(\max\) for player~\textcolor{sthlmRed}{1} and \(\min\) for player~\textcolor{sthlmRed}{2}).

\begin{theorem}[Sequential Optimality Equations]
\label{thm:bellman_seq}
For any transient sub-stage state \(x_{\textcolor{sthlmRed}{i}}\in\mathcal{X}_{\mathrm{seq}}\),
\[
v^{*}_{\mathrm{seq}}(x_{\textcolor{sthlmRed}{i}})
=
\mathrm{opt}_{\textcolor{sthlmRed}{i},\,d_{\textcolor{sthlmRed}{i}}\in\mathcal{D}_{\textcolor{sthlmRed}{i}}}
\left[
\rho_{\mathrm{seq}}(x_{\textcolor{sthlmRed}{i}}, d_{\textcolor{sthlmRed}{i}})
+\gamma_{\textcolor{sthlmRed}{i}}\,
v^{*}_{\mathrm{seq}}(\tau_{\mathrm{seq}}(x_{\textcolor{sthlmRed}{i}}, d_{\textcolor{sthlmRed}{i}}))
\right],
\]
where \(\gamma_{\textcolor{sthlmRed}{1}}=1\) and \(\gamma_{\textcolor{sthlmRed}{2}}=\gamma\).
\end{theorem}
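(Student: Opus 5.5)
Since \(v^{*}_{\mathrm{seq}}\) is introduced only informally, I first fix its meaning as the value of the finite-horizon sequential occupancy game \(\mathcal{M}_{\mathrm{seq}}\), played on the unzipped graph \(x_{\textcolor{sthlmRed}{1},0}\to x_{\textcolor{sthlmRed}{2},0}\to x_{\textcolor{sthlmRed}{1},1}\to\cdots\). This is a deterministic perfect-information game: each move is a decision rule, each transition \(\tau_{\mathrm{seq}}\) is deterministic, and at every node exactly one player moves. For such games the identity in Theorem~\ref{thm:bellman_seq} is the backward-induction (Zermelo/Kuhn) principle. The proof is therefore a backward induction on the \(2\ell\) sub-stages, with terminal condition \(v^{*}_{\mathrm{seq}}(x_{\textcolor{sthlmRed}{1},\ell})\equiv 0\).

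\textbf{Base and setup.} I define \(v_k\) recursively by the right-hand side of the theorem, starting from \(v\equiv 0\) at sub-stage \((\textcolor{sthlmRed}{1},\ell)\). I then show that \(v_k\) equals \(\sup_{\pi_{\textcolor{sthlmRed}{1}}}\inf_{\pi_{\textcolor{sthlmRed}{2}}}\) of the sequential return-to-go, where the return accumulates \(\rho_{\mathrm{seq}}\) with discount \(\gamma_{\textcolor{sthlmRed}{1}}=1\) on the internal transition and \(\gamma_{\textcolor{sthlmRed}{2}}=\gamma\) on the environment transition. With this weighting, the reward collected at sub-stage \((\textcolor{sthlmRed}{2},t)\) is discounted by exactly \(\gamma^{t}\). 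Before the induction I record a consistency check: composing \(\tau_{\mathrm{seq}}\) twice reproduces the standard occupancy update, and \(\rho_{\mathrm{seq}}(x_{\textcolor{sthlmRed}{2},t},d_{\textcolor{sthlmRed}{2},t})\) equals the simultaneous expected reward \(\sum x_{\textcolor{sthlmRed}{1},t}\,d_{\textcolor{sthlmRed}{1},t}\,d_{\textcolor{sthlmRed}{2},t}\,r\). Together these show that sequential returns coincide with simultaneous returns policy by policy.

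\textbf{Inductive step.} At a node \(x_{\textcolor{sthlmRed}{i}}\), player \(\textcolor{sthlmRed}{i}\) alone chooses \(d_{\textcolor{sthlmRed}{i}}\). The continuation state \(\tau_{\mathrm{seq}}(x_{\textcolor{sthlmRed}{i}},d_{\textcolor{sthlmRed}{i}})\) is a deterministic function of the current node and that choice. Consequently any continuation strategy can be chosen separately for each successor, because strategies in \(\mathcal{M}_{\mathrm{seq}}\) are functions of the planner's information state \(\theta\), which contains \(d_{\textcolor{sthlmRed}{i}}\). This lets the opt over whole strategies split into \(\mathrm{opt}_{d_{\textcolor{sthlmRed}{i}}}\) of the immediate term plus \(\gamma_{\textcolor{sthlmRed}{i}}\) times the optimal continuation. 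By the induction hypothesis that continuation is \(v^{*}_{\mathrm{seq}}\) at the successor. Linearity of the return in the continuation gives the interchange in both directions: \(\le\) by restricting to the successor-optimal continuation, and \(\ge\) because every strategy's continuation is dominated by the optimal one.

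\textbf{Main obstacle.} The difficulty is that \(\mathcal{D}_{\textcolor{sthlmRed}{i}}\) is an infinite compact set of stochastic decision rules, so sup and inf must actually be attained, and the max-min defining \(v^{*}_{\mathrm{seq}}\) must be well defined without a minimax theorem. For attainment I argue by induction that \(v^{*}_{\mathrm{seq}}\) is continuous on occupancy states. It is a finite-dimensional sup or inf of continuous functions over a compact set, and \(\tau_{\mathrm{seq}}\) and \(\rho_{\mathrm{seq}}\) are polynomial, hence continuous, in \((x,d)\); Berge's maximum theorem then gives continuity and attainment at each step. For well-definedness I observe that in a perfect-information sequential game, max-min and min-max over planner strategies coincide with the backward-induction value. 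Hence \(v^{*}_{\mathrm{seq}}\) is unambiguous, and the equations hold at every transient sub-stage state.
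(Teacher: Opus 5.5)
Your proposal is correct and follows essentially the same route as the paper. Both arguments run a backward induction over the sub-stages from the terminal condition $v^{*}_{\mathrm{seq}}(x_{\textcolor{sthlmRed}{1},\ell})=0$. At each sub-stage they split the return into the immediate term plus $\gamma_{\textcolor{sthlmRed}{i}}$ times the optimal continuation, using that rewards and transitions depend only on the current occupancy and the active player's decision rule. The $\max/\min$ is then attained by continuity of the bilinear primitives and compactness of $\mathcal{D}_{\textcolor{sthlmRed}{i},t}$. Your use of Berge's maximum theorem to carry continuity of $v^{*}_{\mathrm{seq}}$ back from one sub-stage to the previous one closes a step that the paper only assumes in its inductive hypothesis. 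Your perfect-information framing also makes explicit why no minimax theorem is needed for $\mathcal{M}_{\mathrm{seq}}$ itself.
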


This theorem makes the structural change explicit: the coupled saddle-point update is replaced by two sub-stage optimisations. By composing these updates across \((\textcolor{sthlmRed}{1},t)\) and \((\textcolor{sthlmRed}{2},t)\), we recover the simultaneous backup at stage boundaries: \(v^{*}_{\mathrm{seq}}(x_{\textcolor{sthlmRed}{1}})=v^{*}_{\mathrm{sim}}(x_{\textcolor{sthlmRed}{1}})\).

\begin{theorem}[Sufficiency]
\label{thm:sufficiency:sequential:occupancy:state}
Sequential occupancy states are sufficient to compute optimal values in \(\mathcal{M}_{\mathrm{seq}}\) and hence in the simultaneous zs-POSG \(\mathcal{M}\).
\end{theorem}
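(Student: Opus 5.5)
The plan is the standard sufficiency argument for occupancy states. The exhaustive planner information at each sub-stage is the history of decision rules \(\theta_{\textcolor{sthlmRed}{i},t}\). I will show that everything the sequential Bellman recursion needs from \(\theta_{\textcolor{sthlmRed}{i},t}\) passes through \(x_{\textcolor{sthlmRed}{i},t}\). Concretely, I will prove two facts by induction on the sub-stage index, running backward from the horizon \(\ell\):
\begin{enumerate}[(i)]
\item both the expected immediate payoff and the next-step posterior are functions of \((x_{\textcolor{sthlmRed}{i},t}, d_{\textcolor{sthlmRed}{i},t})\) alone;
\item the optimal value-to-go conditioned on \(\theta_{\textcolor{sthlmRed}{i},t}\), written \(v^*_{\theta}(\theta_{\textcolor{sthlmRed}{i},t})\), equals \(v^*_{\mathrm{seq}}(x_{\textcolor{sthlmRed}{i},t})\).
\end{enumerate}
Combined with the identity \(v^{*}_{\mathrm{seq}}(x_{\textcolor{sthlmRed}{1}})=v^{*}_{\mathrm{sim}}(x_{\textcolor{sthlmRed}{1}})\) obtained from \cref{thm:bellman_seq} by composing the two sub-stage updates, this gives sufficiency for \(\mathcal{M}\) as well.

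\textbf{Step 1: the occupancy updates are Markovian in \(x\).} For the first sub-stage this is immediate from \cref{def:sequential:occupancy:state}: \(x_{\textcolor{sthlmRed}{2},t}=x_{\textcolor{sthlmRed}{1},t}\cdot d_{\textcolor{sthlmRed}{1},t}\), and the payoff is zero. For the second sub-stage I apply Bayes' rule and the chain rule to the event \((s',h_{\textcolor{sthlmRed}{1},t+1},h_{\textcolor{sthlmRed}{2},t+1})\) given \(\theta_{\textcolor{sthlmRed}{1},t+1}=(\theta_{\textcolor{sthlmRed}{2},t},d_{\textcolor{sthlmRed}{2},t})\). Writing \(h_{\textcolor{sthlmRed}{i},t+1}=(h_{\textcolor{sthlmRed}{i},t},a_{\textcolor{sthlmRed}{i}},z_{\textcolor{sthlmRed}{i}})\), this gives
\[
x_{\textcolor{sthlmRed}{1},t+1}(s',h_{\textcolor{sthlmRed}{1},t+1},h_{\textcolor{sthlmRed}{2},t+1})=\textstyle\sum_{s} x_{\textcolor{sthlmRed}{2},t}(s,h_{\textcolor{sthlmRed}{1},t},h_{\textcolor{sthlmRed}{2},t},a_{\textcolor{sthlmRed}{1}})\, d_{\textcolor{sthlmRed}{2},t}(a_{\textcolor{sthlmRed}{2}}\mid h_{\textcolor{sthlmRed}{2},t})\, p(s',z_{\textcolor{sthlmRed}{1}},z_{\textcolor{sthlmRed}{2}}\mid s,a_{\textcolor{sthlmRed}{1}},a_{\textcolor{sthlmRed}{2}}).
\]
The step that makes this valid is the conditional independence of actions: given \(\theta\), player~\textcolor{sthlmRed}{2}'s action depends only on \(h_{\textcolor{sthlmRed}{2},t}\) through \(d_{\textcolor{sthlmRed}{2},t}\), and not on \(s\), \(h_{\textcolor{sthlmRed}{1},t}\) or \(a_{\textcolor{sthlmRed}{1}}\). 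Similarly, \(\rho_{\mathrm{seq}}=\sum x_{\textcolor{sthlmRed}{2},t}\, d_{\textcolor{sthlmRed}{2},t}\, r\) depends only on \((x_{\textcolor{sthlmRed}{2},t},d_{\textcolor{sthlmRed}{2},t})\). Hence \(\tau_{\mathrm{seq}}\) and \(\rho_{\mathrm{seq}}\) are well defined on \(\mathcal{X}_{\mathrm{seq}}\).

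\textbf{Step 2: backward induction.} For the base case, at the terminal stage the value is \(0\) for every \(\theta\), so it is trivially a function of \(x\). For the inductive step, suppose \(v^*_\theta(\theta')=v^*_{\mathrm{seq}}(x')\) at the next sub-stage. The exhaustive recursion reads \(v^*_\theta(\theta_{\textcolor{sthlmRed}{i}})=\mathrm{opt}_{\textcolor{sthlmRed}{i}}\,[\,\mathbb{E}[r\mid\theta_{\textcolor{sthlmRed}{i}},d_{\textcolor{sthlmRed}{i}}]+\gamma_{\textcolor{sthlmRed}{i}} v^*_\theta(\theta_{\textcolor{sthlmRed}{i}},d_{\textcolor{sthlmRed}{i}})]\), with \(d_{\textcolor{sthlmRed}{i}}\) ranging over \(\mathcal{D}_{\textcolor{sthlmRed}{i}}\). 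By Step 1 and the induction hypothesis, the bracketed term equals \(\rho_{\mathrm{seq}}(x_{\textcolor{sthlmRed}{i}},d_{\textcolor{sthlmRed}{i}})+\gamma_{\textcolor{sthlmRed}{i}}v^*_{\mathrm{seq}}(\tau_{\mathrm{seq}}(x_{\textcolor{sthlmRed}{i}},d_{\textcolor{sthlmRed}{i}}))\). The optimisation domain \(\mathcal{D}_{\textcolor{sthlmRed}{i}}\) is the same whatever \(\theta\) is. Therefore the right-hand side is exactly the expression in \cref{thm:bellman_seq} and depends on \(\theta\) only through \(x_{\textcolor{sthlmRed}{i}}\).

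\textbf{Step 3 and main obstacle: justifying the exhaustive recursion.} The delicate point is that the recursion on \(\theta\) in Step 2 really is a recursion for the game value. It is not enough for it to hold for a max-min over the planner's sequence of choices; it must coincide with the value of \(\mathcal{M}\). At sub-stage \((\textcolor{sthlmRed}{1},t)\), player~\textcolor{sthlmRed}{1} cannot see \(d_{\textcolor{sthlmRed}{2},t}\), but player~\textcolor{sthlmRed}{2} effectively sees \(d_{\textcolor{sthlmRed}{1},t}\) at \((\textcolor{sthlmRed}{2},t)\). The resulting max-min value equals the simultaneous value only because a minimax theorem applies. The payoff is bilinear, hence convex–concave, in the pair of decision rules, which live in compact convex sets. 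So Sion's theorem allows the stage max and min to be swapped. Combined with the lossless reduction of \citet{escudie2025varepsilonoptimally}, this shows the value is unchanged by the commitment order; this is the same content as the identity \(v^{*}_{\mathrm{seq}}(x_{\textcolor{sthlmRed}{1}})=v^{*}_{\mathrm{sim}}(x_{\textcolor{sthlmRed}{1}})\) noted after \cref{thm:bellman_seq}. I expect most of the care to go into making sure the induction hypothesis applies at every \(\theta'\) that is reachable. Once that is secured, sufficiency for \(\mathcal{M}\) follows directly from sufficiency for \(\mathcal{M}_{\mathrm{seq}}\).
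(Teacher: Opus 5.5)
Your Steps~1--2 are correct and establish sufficiency within $\mathcal{M}_{\mathrm{seq}}$, but by a somewhat different route from the paper. You run backward induction directly on optimal values, showing that the recursion over exhaustive histories $\theta$ collapses onto the Bellman recursion over $x$. The paper instead fixes an arbitrary continuation policy and proves, by induction on the remaining sub-stages, that its expected return depends on $\theta_{i,t}$ only through $x_{i,t}$; only afterwards does it take $\max/\min$. Both arguments rest on the same one-step Markov property, and your Step~1 is the paper's auxiliary lemma. Your version identifies the exhaustive value with the function defined by the recursion of Theorem~\ref{thm:bellman_seq} directly. It does, however, need each $\mathrm{opt}$ to be attained: either use continuity of $v^*_{\mathrm{seq}}$ together with compactness of $\mathcal{D}_{i,t}$, or work with $\sup/\inf$.

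The genuine problem is in Step~3. The stage objective $(d_{1,t},d_{2,t})\mapsto \rho + \gamma\,v^*_{\mathrm{seq}}\bigl(\tau(x_{1,t},d_{1,t},d_{2,t})\bigr)$ is not bilinear in general. The composed transition is bilinear, but the continuation value is a nonlinear function of the occupancy: by Theorem~\ref{thm:geometry} it is a pointwise maximum of concave piecewise-linear functions. Bilinearity therefore holds only at $t=\ell-1$, where the continuation vanishes. The total payoff is not bilinear in behavioural strategies either, since it is multilinear across stages. So ``bilinear, hence convex--concave, hence Sion'' does not license swapping the stage $\max$ and $\min$.

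What does work is a global sandwich argument. In the interleaved game $\max_{d_{1,0}}\min_{d_{2,0}}\max_{d_{1,1}}\cdots$, player~1 can guarantee $\max_{\pi_1}\min_{\pi_2}v_{\pi_1,\pi_2}(b)$ by committing open-loop to a maximin policy. Symmetrically, player~2 can hold the value to $\min_{\pi_2}\max_{\pi_1}v_{\pi_1,\pi_2}(b)$. These two coincide by von Neumann's theorem applied to realization plans (perfect recall plus Kuhn's theorem), so the sequential value equals $v^*_{\mathrm{sim}}(b)$. This also handles the fact that in $\mathcal{M}_{\mathrm{seq}}$ each player sees the opponent's past decision rules, which goes beyond the within-stage commitment order.

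Alternatively, you can recover stage-wise convex--concavity from the structure in \citet{Wiggers16}. With $d_{2,t}$ fixed, $d_{1,t}$ moves only player~1's marginal, linearly, and leaves the corresponding conditional unchanged; the value is concave in that marginal. Symmetrically, it is convex in $d_{2,t}$.

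For comparison, the paper does not argue this point at all: it simply asserts that composing the two sub-stages recovers the simultaneous backup. Relying on that identity puts you on par with the paper, but the justification you give for it is incorrect.
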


\paragraph{Roadmap.} The remainder of the paper builds on this reformulation. In Section \ref{sec:geometry_poly_backup}, we exploit the geometry of \(v^{*}_{\mathrm{seq}}\) to derive a polynomial-time backup operator. In Section \ref{sec:sequential_ti_sg}, we ensure losslessness by extending the transition-independent reduction to this sequential setting.
\section{Sequential Backup Geometry}
\label{sec:geometry_poly_backup}

The sequential reformulation removes the minimax coupling by splitting each stage into two internal sub-stages.
This section explains why that split matters computationally, beyond being a mere modelling convenience.
In the simultaneous view, the Bellman backup hides a nested pessimistic computation inside the saddle point, and that hidden computation must be reconstructed from scratch every time the backup is applied.
The sequential view makes this pessimistic layer visible at the intermediate occupancy and turns it into a first-class object that can be computed once, stored, and reused across backups.
This caching viewpoint is the central message of the section.

Concretely, we establish two outcomes.
First, we make the shape of the optimal value explicit as a two-layer envelope: a local pessimistic envelope revealed at the intermediate occupancy, and an outer selection over non-dominated continuations.
Second, we leverage this explicit envelope structure to obtain a sequential backup that decomposes into two linear programs, one per sub-stage.
The point is not that the game becomes simpler, but that the expensive part of the saddle-point update becomes an object we can pre-compile.

\subsection{Saddle Geometry Unveiled}
\label{sec:geometry_two_stage}

The intermediate occupancy $x_{\textcolor{sthlmRed}{2}}$ is the key to making the hidden pessimistic layer explicit.
By transitioning from the stage boundary to $x_{\textcolor{sthlmRed}{2}}$, the planner effectively freezes player~\textcolor{sthlmRed}{1}'s mixed decision rule.
This creates the computational leverage: player~\textcolor{sthlmRed}{2}'s best response no longer needs to be chosen as a single global object.
Instead, it separates across private histories, allowing pessimism to be evaluated locally.
Crucially, this pessimistic evaluation is computed once per candidate envelope and \emph{reused} as a static constraint to evaluate any candidate decision rule \(d_{\textcolor{sthlmRed}{1}}\) (as formalised by \ref{eq:LP1} in \Cref{sec:disentangled_greedy}), effectively decoupling the decision variables within the backup.

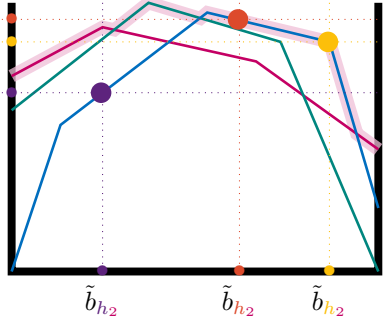
\begin{figure}[h]
\centering
\begin{tikzpicture}[
 scale=1.3,
 IS/.style={sthlmRed, thick},
 LM/.style={sthlmRed, thick},
 axis/.style={very thick, ->, >=stealth', line join=miter},
 important line/.style={thick},
 dashed line/.style={dashed, thin},
 every node/.style={color=black},
 dot/.style={circle,fill=red,minimum size=8pt,inner sep=0pt, outer sep=-1pt},
 ]

 %%% The axis
 \draw[axis, -, line width=.3em] (-6,2.75) -- (-6,0) -- (-2.25,0) -- (-2.25,2.75);

 %%% The exact upper envelope for the left chart
 { \draw[-, sthlmRed!25, line width=0.5em, opacity=0.75]
(-6,2) --
(-5.07,2.5) --
(-4.949,2.473) --
(-4.6, 2.75) --
(-4.07,2.59) --
(-4,2.65) --
(-2.75,2.35) --
(-2.47,1.41) --
(-2.25,1.25);
 }
 	
 %%% The piecewise linear concave functions
 \draw[-, sthlmRed, line width=.1em] (-6,2) -- (-5.07, 2.5) -- (-3.5, 2.15) -- (-2.25, 1.25);
 \draw[-, sthlmGreen, line width=.1em] (-6,1.65) -- (-4.6, 2.75) -- (-3.25, 2.35) -- (-2.25,0);
 \draw[-, sthlmBlue, line width=.1em] (-6,0) -- (-5.5,1.5) -- (-4,2.65) -- (-2.75,2.35) -- (-2.25,0.65);

 %%% The points on the marginal occupancy states
 % MODIFICATION HERE: Using \tilde{b} instead of b
 \node[scale=1, sthlmYellow] at (-2.75,-0.3) {${\color{black}\tilde{b}}_{h_{\textcolor{sthlmRed}{2}}}$};
 \node[scale=1, sthlmPurple] at (-5.07,-0.3) {${\color{black}\tilde{b}}_{h_{\textcolor{sthlmRed}{2}}}$};
 \node[scale=1, sthlmOrange] at (-3.67,-0.3) {${\color{black}\tilde{b}}_{h_{\textcolor{sthlmRed}{2}}}$};

 %%% The projections on the value function (left chart)
 \draw[-, dotted, sthlmYellow] (-2.75,2.75) -- (-2.75,0);
 \draw[-, dotted, sthlmYellow] (-6.1,2.35) -- (-2.25,2.35);
 \node[rotate=90, scale=2, sthlmYellow] at (-2.75,2.35) {$\bullet$};
 \node[scale=1, sthlmYellow] at (-2.75,0) {$\bullet$};
 \node[scale=1, sthlmYellow] at (-6,2.35) {$\bullet$};

 \draw[-, dotted, sthlmOrange] (-3.67,2.75) -- (-3.67,0);
 \draw[-, dotted, sthlmOrange] (-6.1,2.58) -- (-2.25,2.58);
 \node[rotate=90, scale=2, sthlmOrange] at (-3.67,2.58) {$\bullet$};
 \node[scale=1, sthlmOrange] at (-3.67,0) {$\bullet$};
 \node[scale=1, sthlmOrange] at (-6,2.58) {$\bullet$};

 \draw[-, dotted, sthlmPurple] (-5.07,2.75) -- (-5.07,0);
 \draw[-, dotted, sthlmPurple] (-6.1,1.83) -- (-2.25,1.83);
 \node[rotate=90, scale=2, sthlmPurple] at (-5.07,1.83) {$\bullet$};
 \node[scale=1, sthlmPurple] at (-5.07,0) {$\bullet$};
 \node[scale=1, sthlmPurple] at (-6,1.83) {$\bullet$};
 \end{tikzpicture}
\caption{\textbf{The Max-of-Concave Geometry.} This diagram visualizes the value function structure projected onto the probability simplex. The horizontal axis represents the space of \emph{normalized local beliefs} \(\tilde{b}_{h_{\textcolor{sthlmRed}{2}}} \doteq b_{h_{\textcolor{sthlmRed}{2}}} / \|b_{h_{\textcolor{sthlmRed}{2}}}\|_1\).
The thick highlighted curve corresponds to the upper envelope (optimism) over a family of concave functions (pessimism).}
\label{fig:uniform:continuity}
\end{figure}

\paragraph{Decomposition into local belief vectors.}
The separation above is driven by a simple statistic: for each private history \(h_{\textcolor{sthlmRed}{2}}\), we extract from the global occupancy the local weighted belief \(b_{h_{\textcolor{sthlmRed}{2}}}\) that governs player~\textcolor{sthlmRed}{2}'s pessimistic evaluation.
Crucially, this belief must capture the full latent state, which includes the opponent's private history \(h_{\textcolor{sthlmRed}{1}}\).
Let \(\Omega_{\textcolor{sthlmRed}{1}} \doteq \mathcal{S} \times \mathcal{H}_{\textcolor{sthlmRed}{1}}\) and \(\Omega_{\textcolor{sthlmRed}{2}} \doteq \mathcal{S} \times \mathcal{H}_{\textcolor{sthlmRed}{1}} \times \mathcal{A}_{\textcolor{sthlmRed}{1}}\) denote the domains of uncertainty relevant to player~\textcolor{sthlmRed}{2} at sub-stages \(1\) and \(2\) respectively.
For a given occupancy \(x_{\textcolor{sthlmRed}{i}}\), the local belief \(b_{h_{\textcolor{sthlmRed}{2}}} \in \mathbb{R}_+^{\Omega_{\textcolor{sthlmRed}{i}}}\) is defined as the \emph{unnormalized slice} of the occupancy:
\[
b_{h_{\textcolor{sthlmRed}{2}}}(\omega) \doteq x_{\textcolor{sthlmRed}{i}}(\omega, h_{\textcolor{sthlmRed}{2}}), \quad \text{for } \omega \in \Omega_{\textcolor{sthlmRed}{i}}.
\]
This vector encodes both the conditional probability and the relative weight of the history.
Consequently, summing the inner products \(\langle b_{h_{\textcolor{sthlmRed}{2}}}, \alpha \rangle\) correctly absorbs the marginal probability \(\Pr_{x_{\textcolor{sthlmRed}{i}}}(H_{\textcolor{sthlmRed}{2}}=h_{\textcolor{sthlmRed}{2}})\).

\paragraph{The Two-Layer Envelope.}
This decomposition reveals a universal geometry common to both sub-stages.
Fix an arbitrary sub-stage \(\textcolor{sthlmRed}{i} \in \{1, 2\}\).
The value function behaves as a \emph{two-layer envelope} reflecting the adversarial hierarchy.
At the \emph{inner layer}, any fixed future strategy for player~\textcolor{sthlmRed}{1} induces a collection \(\Gamma_{\!\textcolor{sthlmRed}{2}} \subset \mathbb{R}^{\Omega_{\textcolor{sthlmRed}{i}}}\) that forms a ``tent'' of hyperplanes—a piecewise-linear concave function representing player~\textcolor{sthlmRed}{2}'s \emph{pessimistic} response against local beliefs.
At the \emph{outer layer}, player~\textcolor{sthlmRed}{1} optimises over a family \(\Gamma_{\!\textcolor{sthlmRed}{1}}\) of such collections, establishing the global value as the \emph{optimistic} pointwise maximum over these concave tents.
The following theorem formalises this \emph{Max-of-Concave} saddle geometry (Fig. \ref{fig:uniform:continuity}).

\begin{theorem}
\label{thm:geometry}
Fix a sub-stage \(\textcolor{sthlmRed}{i} \in \{1, 2\}\).
Let \(\Gamma_{\!\textcolor{sthlmRed}{1}}\) be a family of finite sets representing non-dominated future strategies, where each \(\Gamma_{\!\textcolor{sthlmRed}{2}} \in \Gamma_{\!\textcolor{sthlmRed}{1}}\) is a subset of \(\mathbb{R}^{\Omega_{\textcolor{sthlmRed}{i}}}\).
For any occupancy \(x_{\textcolor{sthlmRed}{i}}\) at this sub-stage, the optimal sequential value is:
\begin{align*}
v^*_{\mathrm{seq}}(x_{\textcolor{sthlmRed}{i}})
&=
\textstyle
\max_{\Gamma_{\!\textcolor{sthlmRed}{2}} \in \Gamma_{\!\textcolor{sthlmRed}{1}}}
\mathtt{Val}_{\Gamma_{\!\textcolor{sthlmRed}{2}}}(x_{\textcolor{sthlmRed}{i}})
\\
\mathtt{Val}_{\Gamma_{\!\textcolor{sthlmRed}{2}}}(x_{\textcolor{sthlmRed}{i}})
&\doteq
\textstyle
\sum_{h_{\textcolor{sthlmRed}{2}}}
\min_{\alpha \in \Gamma_{\!\textcolor{sthlmRed}{2}}}
\langle b_{h_{\textcolor{sthlmRed}{2}}}, \alpha \rangle.
\end{align*}
\end{theorem}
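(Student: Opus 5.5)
The proof goes by backward induction over the unzipped horizon. The induction hypothesis is that, at every sub-stage, $v^*_{\mathrm{seq}}$ is a pointwise maximum, over a finite family $\Gamma_{\!\textcolor{sthlmRed}{1}}$, of functions $\mathtt{Val}_{\Gamma_{\!\textcolor{sthlmRed}{2}}}$. Each such function is a sum over $h_{\textcolor{sthlmRed}{2}}$ of concave, piecewise-linear functions of the unnormalised slice $b_{h_{\textcolor{sthlmRed}{2}}}$. Semantically, each $\Gamma_{\!\textcolor{sthlmRed}{2}}$ collects the $\alpha$-vectors obtained by fixing a continuation strategy of player~\textcolor{sthlmRed}{1} and letting player~\textcolor{sthlmRed}{2} play a deterministic continuation per private history. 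Non-dominated elements are retained, but pruning does not affect the identity.

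The base case is the terminal boundary $x_{\textcolor{sthlmRed}{1},\ell}$. There $v^*_{\mathrm{seq}}\equiv 0$, and we take $\Gamma_{\!\textcolor{sthlmRed}{1}}=\{\{0\}\}$.

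The inductive step applies \Cref{thm:bellman_seq} at each sub-stage, using that $\tau_{\mathrm{seq}}$ is linear in the occupancy and in the decision rule of the acting player.

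\textbf{Sub-stage $(\textcolor{sthlmRed}{2},t)$.} Assume the claim at $x_{\textcolor{sthlmRed}{1},t+1}$. Since player~\textcolor{sthlmRed}{1}'s rule is frozen in $x_{\textcolor{sthlmRed}{2},t}$, the quantity $\rho_{\mathrm{seq}}+\gamma\,v^*_{\mathrm{seq}}\circ\tau_{\mathrm{seq}}$ is, for fixed $d_{\textcolor{sthlmRed}{2}}$, a maximum over next-stage families of sums of per-history minima. The obstacle is that \Cref{thm:bellman_seq} gives $\min_{d_{\textcolor{sthlmRed}{2}}}\max_{\Gamma'}$, while the target is $\max_{\Gamma}\sum_{h_{\textcolor{sthlmRed}{2}}}\min_\alpha$, so the min and max must be exchanged. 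I would obtain this exchange from the minimax theorem \citep{Sion1958}, applied after convexifying player~\textcolor{sthlmRed}{1}'s side: mixtures over next-stage families, i.e.\ behavioural mixtures of player~\textcolor{sthlmRed}{1}'s continuations, again form a family whose values are affine in the mixing weights. I would justify that these mixtures are realisable as elements of $\Gamma_{\!\textcolor{sthlmRed}{1}}$ by the concavity of $v^*$ in player~\textcolor{sthlmRed}{2}'s conditionals, and equivalently by the original equality $v^*_{\mathrm{seq}}=v^*_{\mathrm{sim}}$ stated after \Cref{thm:bellman_seq}.

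Once the order is exchanged with player~\textcolor{sthlmRed}{1}'s continuation $\Gamma$ fixed, the minimisation over $d_{\textcolor{sthlmRed}{2}}$ separates across $h_{\textcolor{sthlmRed}{2}}$. The reason is that $d_{\textcolor{sthlmRed}{2}}(\cdot\mid h_{\textcolor{sthlmRed}{2}})$ only affects the mass in the slice $b_{h_{\textcolor{sthlmRed}{2}}}$ and its successors $(h_{\textcolor{sthlmRed}{2}},a_{\textcolor{sthlmRed}{2}},z_{\textcolor{sthlmRed}{2}})$. The per-history minimum is attained at a vertex, namely a pure action $a_{\textcolor{sthlmRed}{2}}$ followed by per-observation choices of $\alpha'$. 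This gives backed-up vectors
\[
\alpha(\omega)=r(s,a_{\textcolor{sthlmRed}{1}},a_{\textcolor{sthlmRed}{2}})+\gamma\sum_{s',z_{\textcolor{sthlmRed}{1}},z_{\textcolor{sthlmRed}{2}}}p(s',z_{\textcolor{sthlmRed}{1}},z_{\textcolor{sthlmRed}{2}}\mid s,a_{\textcolor{sthlmRed}{1}},a_{\textcolor{sthlmRed}{2}})\,\alpha'_{z_{\textcolor{sthlmRed}{2}}}(s',(h_{\textcolor{sthlmRed}{1}},a_{\textcolor{sthlmRed}{1}},z_{\textcolor{sthlmRed}{1}})),
\]
and hence the form $\sum_{h_{\textcolor{sthlmRed}{2}}}\min_{\alpha\in\Gamma_{\!\textcolor{sthlmRed}{2}}}\langle b_{h_{\textcolor{sthlmRed}{2}}},\alpha\rangle$ on $\Omega_{\textcolor{sthlmRed}{2}}$.

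\textbf{Sub-stage $(\textcolor{sthlmRed}{1},t)$.} Here $\gamma_{\textcolor{sthlmRed}{1}}=1$ and $\rho_{\mathrm{seq}}=0$, so the backup is $\max_{d_{\textcolor{sthlmRed}{1}}}\max_{\Gamma}\mathtt{Val}_\Gamma(\tau_{\mathrm{seq}}(x_{\textcolor{sthlmRed}{1}},d_{\textcolor{sthlmRed}{1}}))$. The two maxima commute trivially. For each pair $(d_{\textcolor{sthlmRed}{1}},\Gamma)$, substitute the pushforward $b_{h_{\textcolor{sthlmRed}{2}}}(s,h_{\textcolor{sthlmRed}{1}},a_{\textcolor{sthlmRed}{1}})=x(s,h_{\textcolor{sthlmRed}{1}},h_{\textcolor{sthlmRed}{2}})\,d_{\textcolor{sthlmRed}{1}}(a_{\textcolor{sthlmRed}{1}}\mid h_{\textcolor{sthlmRed}{1}})$. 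Each $\alpha$ then pulls back to the vector $\beta(s,h_{\textcolor{sthlmRed}{1}})=\sum_{a_{\textcolor{sthlmRed}{1}}}d_{\textcolor{sthlmRed}{1}}(a_{\textcolor{sthlmRed}{1}}\mid h_{\textcolor{sthlmRed}{1}})\,\alpha(s,h_{\textcolor{sthlmRed}{1}},a_{\textcolor{sthlmRed}{1}})$ on $\Omega_{\textcolor{sthlmRed}{1}}$. The new family $\Gamma_{\!\textcolor{sthlmRed}{1}}$ consists of these pulled-back sets indexed by $(d_{\textcolor{sthlmRed}{1}},\Gamma)$, which yields the stated formula.

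\textbf{Finiteness.} The index set is infinite because $d_{\textcolor{sthlmRed}{1}}$ ranges over a continuum. The map $d_{\textcolor{sthlmRed}{1}}\mapsto\mathtt{Val}$ is concave (a sum of minima of functions linear in $d_{\textcolor{sthlmRed}{1}}$), but a concave function is not maximised at vertices, so the maximum need not be attained at deterministic decision rules, and mixing may be strictly necessary. Finiteness of $\Gamma_{\!\textcolor{sthlmRed}{1}}$ must therefore come from elsewhere. First, I would invoke the compactness of $\mathcal{D}_{\textcolor{sthlmRed}{1}}$ together with a finite $\varepsilon$-net, using the uniform continuity implied by bounded $\alpha$-vectors. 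This gives an $\varepsilon$-exact statement and the exact identity in the limit. If the theorem requires exact finiteness, I would fall back to the linear-programming characterisation: the optimal $d_{\textcolor{sthlmRed}{1}}$ is a basic solution of an LP whose constraints are indexed by $\Gamma$ and $h_{\textcolor{sthlmRed}{2}}$, so only finitely many optimal $d_{\textcolor{sthlmRed}{1}}$ need to be included.

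I expect the minimax exchange at sub-stage $(\textcolor{sthlmRed}{2},t)$ to be the main obstacle.
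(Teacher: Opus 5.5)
Your backward induction takes a genuinely different route from the paper's. The paper argues once, at the level of continuation policies:
\begin{itemize}
\item it fixes a player-1 continuation $\pi_{1,t:}$;
\item the finitely many deterministic player-2 continuation trees give a finite set $\Gamma_2$ of vectors;
\item player~2's continuation re-roots one tree at each $h_2$, which yields $\sum_{h_2}\min_{\alpha\in\Gamma_2}\langle b_{h_2},\alpha\rangle$;
\item it then maximises over $\pi_{1,t:}$, so its $\Gamma_1$ is indexed by the compact, infinite set of player-1 continuations.
\end{itemize}
That proof is short and gives every envelope a strategic meaning. However, it tacitly equates $v^*_{\mathrm{seq}}(x_i)$ with a max--min over non-adaptive continuations. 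It does not argue that the sub-stage recursion, where later player-1 rules may react to earlier player-2 rules, has this value. That is exactly the exchange you isolate at $(2,t)$. Your version makes the exchange explicit: Sion after convexifying player~1's side, which is the dual form of \Cref{lem:greedy_x2}. It also builds the envelopes recursively by mixtures and pull-backs, in the spirit of \Cref{prop:gamma_update_x1} and \Cref{prop:Gamma2_update}. The core is sound, but two points need repair.

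\textbf{Drop the finiteness of $\Gamma_1$.} The statement only asks each $\Gamma_2$ to be finite, and a finite $\Gamma_1$ is impossible in general, so your last fallback cannot work.
\begin{itemize}
\item Counterexample: take $\ell=1$, two states, an occupancy $x=(p,1-p)$ at $(1,0)$, $\mathcal{A}_1=\mathcal{A}_2=\{1,2\}$, $r(s_1,1,1)=r(s_2,2,2)=1$ and all other rewards $0$. Then $v^*_{\mathrm{seq}}(x)=\max_{q\in[0,1]}\min\{qp,(1-q)(1-p)\}=p(1-p)$. This is not piecewise linear, whereas a maximum of finitely many $\mathtt{Val}_{\Gamma_2}$ always is.
\item The LP argument fails because the optimal basic solution of LP$_1$ moves with the occupancy (here $d_1^*=(1-p,p)$). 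So no finite set of decision rules serves every $x$.
\item The $\varepsilon$-net only approximates, and its ``limit'' is just the infinite family.
\end{itemize}
Phrase the hypothesis instead for families indexed by a compact set (e.g.\ pairs $(d_1,\Gamma)$ at $(1,t)$), each member finite, with the maximum attained by continuity and Weierstrass. This is what the paper does in the proof of \Cref{lem:greedy_x1}. With $\Gamma_1'$ infinite, you must then also show that the supremum over mixtures $\lambda$ at $(2,t)$ is attained. One way is a Helly/Carath\'eodory reduction to mixtures with at most $\dim\mathcal{D}_2+1$ atoms, plus compactness.

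\textbf{Fix the justification of the convexification at $(2,t)$.} You do not need concavity of $v^*$ in player~2's conditionals. Nor do you need the boundary identity $v^*_{\mathrm{seq}}=v^*_{\mathrm{sim}}$; it is itself an exchange statement of the same kind, so invoking it here is close to circular. What you need is:
\begin{itemize}
\item[(a)] the map $d_2\mapsto\rho_{\mathrm{seq}}(x_2,d_2)+\gamma\,\mathtt{Val}_{\Gamma'}(\tau_{\mathrm{seq}}(x_2,d_2))$ is \emph{linear} in $d_2$. A concave function of an affine map would have the wrong curvature for the minimiser in Sion. Linearity holds because $d_2(a_2\mid h_2)$ scales the whole successor slice $b_{h_2a_2z_2}$, and $b\mapsto\min_{\alpha}\langle b,\alpha\rangle$ is positively homogeneous.
\item[(b)] the weighted Minkowski identity $\sum_k\lambda_k\min_{\alpha\in\Gamma'_k}\langle b,\alpha\rangle=\min_{\alpha\in\sum_k\lambda_k\Gamma'_k}\langle b,\alpha\rangle$. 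It shows each mixture is again a finite envelope, with independent selections per $(z_2,k)$; these are exactly the vectors $\beta^{h_2,a_2}$ of \Cref{prop:Gamma2_update}. It also shows a mixture cannot raise the pointwise maximum, since $\sum_k\lambda_k\mathtt{Val}_{\Gamma'_k}\le\max_k\mathtt{Val}_{\Gamma'_k}$.
\end{itemize}
With these two facts Sion applies ($\mathcal{D}_2$ compact convex, payoff bilinear in $(d_2,\lambda)$), and your per-$h_2$ separation goes through as you describe.
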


\subsection{Disentangled Greedy Selection}
\label{sec:disentangled_greedy}

The envelope characterisations yield two decoupled LPs—one per sub-stage.
The sequential split turns the adversary’s inner pessimisation into reusable linear constraints; a variable/constraint count then exposes an exponential gap relative to the simultaneous matrix-game backup.

\paragraph{Optimizing \(d_{\textcolor{sthlmRed}{1}}\) against cached envelopes.}
\ref{eq:LP1} represents the greedy backup of the maximiser against a \emph{compiled} envelope.
Unlike simultaneous backups that must re-solve player~\textcolor{sthlmRed}{2}'s response inside every evaluation, the sequential formulation evaluates \(d_{\textcolor{sthlmRed}{1}}\) directly against the cached envelopes produced at the intermediate stage.
Since the immediate reward is zero, player~\textcolor{sthlmRed}{1} maximizes the continuation value of the induced intermediate occupancy.

\begin{lemma}
\label{lem:greedy_x1}
Fix a finite family \(\Gamma_{\!\textcolor{sthlmRed}{1}} \) of collections \(\Gamma_{\!\textcolor{sthlmRed}{2}} \) cached from the intermediate stage \((\textcolor{sthlmRed}{2},t)\).
The greedy decision rule of player~\textcolor{sthlmRed}{1} at occupancy \(x_{\textcolor{sthlmRed}{1}}\) is:
\begin{align*}
\textstyle
\argmax_{d_{\textcolor{sthlmRed}{1}}}
\max_{\Gamma_{\!\textcolor{sthlmRed}{2}} \in \Gamma_{\!\textcolor{sthlmRed}{1}}}
\mathtt{Val}_{\Gamma_{\!\textcolor{sthlmRed}{2}}}(\tau_{\mathrm{seq}}(x_{\textcolor{sthlmRed}{1}}, d_{\textcolor{sthlmRed}{1}})).
\end{align*}
\end{lemma}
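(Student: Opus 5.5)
The argument has three steps: a Bellman reduction, an envelope substitution, and an LP reformulation. First, instantiate Theorem~\ref{thm:bellman_seq} at the first sub-stage \((\textcolor{sthlmRed}{1},t)\). There \(\rho_{\mathrm{seq}}(x_{\textcolor{sthlmRed}{1}},d_{\textcolor{sthlmRed}{1}})\equiv 0\) and \(\gamma_{\textcolor{sthlmRed}{1}}=1\), so the optimality equation reduces to
\[
v^*_{\mathrm{seq}}(x_{\textcolor{sthlmRed}{1}})=\max_{d_{\textcolor{sthlmRed}{1}}\in\mathcal{D}_{\textcolor{sthlmRed}{1}}} v^*_{\mathrm{seq}}(\tau_{\mathrm{seq}}(x_{\textcolor{sthlmRed}{1}},d_{\textcolor{sthlmRed}{1}})).
\]
A greedy decision rule is, by definition, any maximiser of this right-hand side. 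This is the step that records the paper's observation that player~\textcolor{sthlmRed}{1} only maximises the continuation value of the induced intermediate occupancy.

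Second, replace \(v^*_{\mathrm{seq}}\) at the intermediate occupancy \(x_{\textcolor{sthlmRed}{2}}=\tau_{\mathrm{seq}}(x_{\textcolor{sthlmRed}{1}},d_{\textcolor{sthlmRed}{1}})\) by its envelope representation. Apply Theorem~\ref{thm:geometry} with \(i=\textcolor{sthlmRed}{2}\), taking for \(\Gamma_{\!\textcolor{sthlmRed}{1}}\) the family cached at \((\textcolor{sthlmRed}{2},t)\). This gives \(v^*_{\mathrm{seq}}(x_{\textcolor{sthlmRed}{2}})=\max_{\Gamma_{\!\textcolor{sthlmRed}{2}}\in\Gamma_{\!\textcolor{sthlmRed}{1}}}\mathtt{Val}_{\Gamma_{\!\textcolor{sthlmRed}{2}}}(x_{\textcolor{sthlmRed}{2}})\). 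Substituting into the reduced Bellman equation yields exactly the stated argmax.

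If the cache is only a finite approximation of the exact family, the same substitution still gives the greedy rule with respect to the cached value function, meaning the point-based greedy backup. I would state explicitly that ``greedy'' is understood relative to the value represented by the cache. The argmax is attained because \(\mathcal{D}_{\textcolor{sthlmRed}{1}}\) is a compact product of simplices. The objective is continuous in \(d_{\textcolor{sthlmRed}{1}}\), since it is a finite max of finite sums of finite mins of functions that are linear in \(d_{\textcolor{sthlmRed}{1}}\).

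Third, make the linearity precise, since this is what underlies \ref{eq:LP1}. By Definition~\ref{def:sequential:occupancy:state}, the local belief at the intermediate stage is
\[
b_{h_{\textcolor{sthlmRed}{2}}}(s,h_{\textcolor{sthlmRed}{1}},a_{\textcolor{sthlmRed}{1}})=x_{\textcolor{sthlmRed}{1}}(s,h_{\textcolor{sthlmRed}{1}},h_{\textcolor{sthlmRed}{2}})\,d_{\textcolor{sthlmRed}{1}}(a_{\textcolor{sthlmRed}{1}}\mid h_{\textcolor{sthlmRed}{1}}),
\]
which is linear in \(d_{\textcolor{sthlmRed}{1}}\). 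Hence each inner product \(\langle b_{h_{\textcolor{sthlmRed}{2}}},\alpha\rangle\) is linear in \(d_{\textcolor{sthlmRed}{1}}\), and each \(\min_\alpha\) can be epigraph-encoded by a variable \(z_{h_{\textcolor{sthlmRed}{2}}}\le\langle b_{h_{\textcolor{sthlmRed}{2}}},\alpha\rangle\) for every \(\alpha\in\Gamma_{\!\textcolor{sthlmRed}{2}}\). The outer max over the finite family \(\Gamma_{\!\textcolor{sthlmRed}{1}}\) commutes with \(\max_{d_{\textcolor{sthlmRed}{1}}}\), so the problem becomes one LP per collection \(\Gamma_{\!\textcolor{sthlmRed}{2}}\), and we keep the best.

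The main obstacle is the second step: checking that Theorem~\ref{thm:geometry} applies to every intermediate occupancy reachable from \(x_{\textcolor{sthlmRed}{1}}\), with one common family \(\Gamma_{\!\textcolor{sthlmRed}{1}}\) that does not depend on \(d_{\textcolor{sthlmRed}{1}}\). I would argue this directly from the theorem's statement, which is uniform in \(x_{\textcolor{sthlmRed}{i}}\) at a fixed sub-stage. I would also stress that the vectors \(\alpha\) live on \(\Omega_{\textcolor{sthlmRed}{2}}\), so they are indexed independently of \(d_{\textcolor{sthlmRed}{1}}\). That independence is what makes the cached constraints reusable across every candidate decision rule.
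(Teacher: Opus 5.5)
Your proof is correct, and its skeleton is the paper's. Zero sub-stage reward and $\gamma_{1}=1$ reduce the greedy step to maximising $d_1\mapsto v^*_{\mathrm{seq}}(\tau_{\mathrm{seq}}(x_1,d_1))$. The max-of-concave representation at $(2,t)$ is then substituted, and compactness of $\mathcal{D}_{1,t}$ gives a maximiser.

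The real difference is \emph{which} family gets substituted. The paper uses the exact, policy-indexed family $\Gamma_{\!1}^{*}=\{\Gamma_{\!2,\pi_{1,t:}} : \pi_{1,t:}\in\Pi_{1,t:}\}$, which is indexed by a continuum of mixed continuation policies. Most of its proof shows that the outer maximum over this infinite family is attained (Weierstrass on the compact policy space) and that a maximiser in $d_1$ still exists.

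You substitute the finite cache itself, which makes attainment and continuity in $d_1$ immediate. The price is that Theorem~\ref{thm:geometry} does not give $v^*_{\mathrm{seq}}=\max_{\Gamma_{\!2}\in\Gamma_{\!1}}\mathtt{Val}_{\Gamma_{\!2}}$ for an arbitrary finite cache. Envelopes induced by actual continuations only yield a lower bound. So the operative content of your argument is the reading you flag yourself: greediness with respect to the value represented by the cache. That reading matches the lemma's hypothesis and is exactly what LP$_{1}$ implements. The paper's version is the one that ties the greedy rule to the true $v^*_{\mathrm{seq}}$.

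If you want that exact version, the delicate point is continuity in $d_1$ of the maximum over the infinite family. It follows from joint continuity of $(\pi_{1,t:},d_1)\mapsto\mathtt{Val}_{\Gamma_{\!2,\pi_{1,t:}}}(\tau_{\mathrm{seq}}(x_1,d_1))$ on a compact product. On its own, a pointwise supremum of continuous functions is only lower semicontinuous, so that step is needed.

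Your third step (the epigraph/LP encoding) is correct but not needed for the lemma; the paper places that material in the main text around LP$_{1}$.
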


Operationally, because \(\Gamma_{\!\textcolor{sthlmRed}{1}}\) is a finite collection, we solve one instance of \ref{eq:LP1} for each candidate envelope \(\Gamma_{\!\textcolor{sthlmRed}{2}} \in \Gamma_{\!\textcolor{sthlmRed}{1}}\), and retain the pair \((d_{\textcolor{sthlmRed}{1}}^*, \Gamma_{\!\textcolor{sthlmRed}{2}}^*)\) that yields the global maximum.
Recall that \(b_{h_{\textcolor{sthlmRed}{2}}}\) is an unnormalised weighted belief, so the objective correctly aggregates values according to the history probability.
\begin{equation}
\tag{LP$_1$}
\label{eq:LP1}
\begin{aligned}
&\text{maximise}_{d_{\textcolor{sthlmRed}{1}}, w} \quad \textstyle \sum_{h_{\textcolor{sthlmRed}{2}}} w_{h_{\textcolor{sthlmRed}{2}}} \\
&\text{subject to} \quad w_{h_{\textcolor{sthlmRed}{2}}} \le
\langle b_{h_{\textcolor{sthlmRed}{2}}}, \alpha \rangle,
\quad \forall h_{\textcolor{sthlmRed}{2}}, \forall \alpha \in \Gamma_{\!\textcolor{sthlmRed}{2}}.
\end{aligned}
\end{equation}
Crucially, the decision variables \(d_{\textcolor{sthlmRed}{1}}\) are embedded within the belief vector \(b_{h_{\textcolor{sthlmRed}{2}}}\).
The term \(\langle b_{h_{\textcolor{sthlmRed}{2}}}, \alpha \rangle\) expands to a linear expression in \(d_{\textcolor{sthlmRed}{1}}\):
\[
\langle b_{h_{\textcolor{sthlmRed}{2}}}, \alpha \rangle
=
\textstyle\sum_{s, h_{\textcolor{sthlmRed}{1}}, a_{\textcolor{sthlmRed}{1}}}
x_{\textcolor{sthlmRed}{1}}(s, h_{\textcolor{sthlmRed}{1}}, h_{\textcolor{sthlmRed}{2}}) \,
d_{\textcolor{sthlmRed}{1}}(a_{\textcolor{sthlmRed}{1}} \mid h_{\textcolor{sthlmRed}{1}}) \,
\alpha(s, a_{\textcolor{sthlmRed}{1}}).
\]
Consequently, \ref{eq:LP1} treats the adversary as a fixed set of hyperplanes evaluated at beliefs that vary linearly with \(d_{\textcolor{sthlmRed}{1}}\).

\paragraph{Optimizing \(d_{\textcolor{sthlmRed}{2}}\) against cached envelopes.}
Player~\textcolor{sthlmRed}{2} acts to minimise the immediate cost plus the expected future value. Because the future value is encoded by a family of envelopes \(\Gamma_{\!\textcolor{sthlmRed}{1}}'\), the optimal response involves selecting the most pessimistic envelope mixture.
We express this greedy backup in the dual form to avoid enumerating the decision-rule vertices.

\begin{lemma}
\label{lem:greedy_x2}
Fix a family \(\Gamma_{\!\textcolor{sthlmRed}{1}}'\) cached from the next stage \((\textcolor{sthlmRed}{1},t+1)\).
The greedy decision rule for player~\textcolor{sthlmRed}{2} corresponds to the solution of the dual maximisation:
\begin{align*}
\textstyle
\max_{\lambda \in \Delta(\Gamma_{\!\textcolor{sthlmRed}{1}}')}
\min_{d_{\textcolor{sthlmRed}{2}}}
\sum_{\Gamma_{\!\textcolor{sthlmRed}{2}}' \in \Gamma_{\!\textcolor{sthlmRed}{1}}'}
\lambda(\Gamma_{\!\textcolor{sthlmRed}{2}}') \,
\mathtt{Val}_{\Gamma_{\!\textcolor{sthlmRed}{2}}'}(\tau_{\mathrm{seq}}(x_{\textcolor{sthlmRed}{2}}, d_{\textcolor{sthlmRed}{2}})).
\end{align*}
\end{lemma}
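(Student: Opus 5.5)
The plan is to start from the sub-stage $(\textcolor{sthlmRed}{2},t)$ instance of Theorem~\ref{thm:bellman_seq} and apply Theorem~\ref{thm:geometry} at the next boundary $(\textcolor{sthlmRed}{1},t+1)$. Together they give the greedy problem
\[
\textstyle \min_{d_{\textcolor{sthlmRed}{2}}} \Big[\rho_{\mathrm{seq}}(x_{\textcolor{sthlmRed}{2}},d_{\textcolor{sthlmRed}{2}}) + \gamma \max_{\Gamma_{\!\textcolor{sthlmRed}{2}}'\in\Gamma_{\!\textcolor{sthlmRed}{1}}'} \mathtt{Val}_{\Gamma_{\!\textcolor{sthlmRed}{2}}'}(\tau_{\mathrm{seq}}(x_{\textcolor{sthlmRed}{2}},d_{\textcolor{sthlmRed}{2}}))\Big],
\]
where the immediate reward is absorbed into the $\alpha$-vectors in the usual way, as is implicit in the lemma's statement. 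First, replace the inner maximum over the finite family $\Gamma_{\!\textcolor{sthlmRed}{1}}'$ by a maximum over mixtures $\lambda\in\Delta(\Gamma_{\!\textcolor{sthlmRed}{1}}')$. A linear function of $\lambda$ attains its maximum at a vertex of the simplex, so this substitution changes nothing. The problem is now $\min_{d_{\textcolor{sthlmRed}{2}}}\max_{\lambda}F(d_{\textcolor{sthlmRed}{2}},\lambda)$, where $F$ is the $\lambda$-weighted sum in the statement.

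Second, I would verify the hypotheses of Sion's minimax theorem \citep{Sion1958}. Both $\Delta(\Gamma_{\!\textcolor{sthlmRed}{1}}')$ and $\mathcal{D}_{\textcolor{sthlmRed}{2}}$, a product of simplices over private histories, are compact and convex. $F$ is linear, hence concave, in $\lambda$. The substantive point is convexity of $F$ in $d_{\textcolor{sthlmRed}{2}}$, which I would establish as follows.
\begin{itemize}
\item The pushforward $\tau_{\mathrm{seq}}(x_{\textcolor{sthlmRed}{2}},d_{\textcolor{sthlmRed}{2}})$ is linear in $d_{\textcolor{sthlmRed}{2}}$, since each entry is $x_{\textcolor{sthlmRed}{2}}\cdot d_{\textcolor{sthlmRed}{2}}(a_{\textcolor{sthlmRed}{2}}\mid h_{\textcolor{sthlmRed}{2}})\cdot p(\cdot)$.
\item Hence each next-stage slice $b_{h_{\textcolor{sthlmRed}{2}}'}$, with $h_{\textcolor{sthlmRed}{2}}'=(h_{\textcolor{sthlmRed}{2}},a_{\textcolor{sthlmRed}{2}},z_{\textcolor{sthlmRed}{2}})$, is the scalar $d_{\textcolor{sthlmRed}{2}}(a_{\textcolor{sthlmRed}{2}}\mid h_{\textcolor{sthlmRed}{2}})$ times a vector independent of $d_{\textcolor{sthlmRed}{2}}$.
\item Consequently $\min_{\alpha}\langle b_{h_{\textcolor{sthlmRed}{2}}'},\alpha\rangle = d_{\textcolor{sthlmRed}{2}}(a_{\textcolor{sthlmRed}{2}}\mid h_{\textcolor{sthlmRed}{2}})\,\min_\alpha\langle \hat b,\alpha\rangle$ by positive homogeneity, so each $\mathtt{Val}_{\Gamma_{\!\textcolor{sthlmRed}{2}}'}$ is actually linear in $d_{\textcolor{sthlmRed}{2}}$.
\end{itemize}
This decoupling across player-\textcolor{sthlmRed}{2} histories is what the sequential view buys. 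Even without it, concavity of $\min_\alpha$ of linear maps would only give the wrong direction, so the homogeneity argument is essential.

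Third, once $F$ is linear in both arguments, Sion gives $\min_{d_{\textcolor{sthlmRed}{2}}}\max_\lambda F=\max_\lambda\min_{d_{\textcolor{sthlmRed}{2}}}F$. This is exactly the dual expression in the lemma. The greedy $d_{\textcolor{sthlmRed}{2}}$ is then the minimiser component of a saddle point $(d_{\textcolor{sthlmRed}{2}}^*,\lambda^*)$, which exists by compactness and continuity. Any saddle point's $d_{\textcolor{sthlmRed}{2}}$-component attains the primal minimum, so extracting it from the dual solution is legitimate, for example via LP complementary slackness.

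The main obstacle is the linearity in $d_{\textcolor{sthlmRed}{2}}$. It requires checking carefully that histories of player~\textcolor{sthlmRed}{2} at $t+1$ are extensions of a unique $(h_{\textcolor{sthlmRed}{2}},a_{\textcolor{sthlmRed}{2}})$. It also requires that the $\alpha$-vectors in $\Gamma_{\!\textcolor{sthlmRed}{2}}'$ are indexed on $\Omega_{\textcolor{sthlmRed}{1}}$, which contains no player-\textcolor{sthlmRed}{2} component, so the minimum over $\alpha$ can be chosen separately per $h_{\textcolor{sthlmRed}{2}}'$. If the reward term were not absorbable in this way, I would instead add it as an explicit linear term in $d_{\textcolor{sthlmRed}{2}}$, which leaves the Sion argument unchanged.
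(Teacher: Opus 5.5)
Your proof is correct, but it reaches the max--min form by a different route than the paper. The paper encodes the greedy step at $(2,t)$ directly as the epigraph program LP$_2$. There the reward is folded into the back-projected vectors $q^{\alpha}_{a_2,z_2}$, and one constraint per $a_2$ stands in for $\min_{d_2}$. The paper then appeals to strong LP duality. Its only justification that the per-action constraints correctly encode the minimisation over $d_2$ is the remark that $d_2$ enters linearly through $\tau_{\mathrm{seq}}$.

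You instead work with the functional $F(d_2,\lambda)$, prove it is bilinear, and apply the minimax theorem (von Neumann suffices once bilinearity is in hand). The step you isolate is that every next-stage slice factors as $b_{(h_2,a_2,z_2)} = d_2(a_2\mid h_2)\,\hat b_{h_2,a_2,z_2}$. Positive homogeneity then makes $d_2\mapsto\mathtt{Val}_{\Gamma_2'}(\tau_{\mathrm{seq}}(x_2,d_2))$ linear rather than merely concave. The paper's one-line remark relies on exactly this without saying so, since linearity of $\tau_{\mathrm{seq}}$ alone would only give concavity, the wrong direction for the exchange.

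The paper's route buys a concrete computational object. Yours buys a transparent reason why exchanging $\min$ and $\max$ is valid, independent of how the LP is written. It also exposes the correct shape of the inner problem, $\sum_{h_2}\min_{a_2}\sum_{z_2}(\cdots)$. This shows that any epigraph encoding must carry $a_2$ in the index of the per-observation variables. In LP$_2$ as printed, $w$ is indexed only by $(h_2,z_2,\Gamma_2')$, which would let the minimiser choose $a_2$ after seeing $z_2$.

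One precision on extraction: a best response to an optimal $\lambda^*$ need not be optimal for player~2 (matching pennies is the standard example). The greedy $d_2$ must therefore be taken as the $d_2$-component of a saddle point, for instance read off from the multipliers of the per-$a_2$ constraints. That is how your appeal to complementary slackness should be understood.
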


The dual variable \(\lambda\) acts as a selector for active future envelopes.
To handle the branching over observations without exploding the complexity, we define an \emph{observation-shifted vector}.
For any future vector \(\alpha \in \Gamma_{\!\textcolor{sthlmRed}{2}}'\), action \(a_{\textcolor{sthlmRed}{2}}\), and observation \(z_{\textcolor{sthlmRed}{2}}\), let \(q_{a_{\textcolor{sthlmRed}{2}}, z_{\textcolor{sthlmRed}{2}}}^{\alpha}\) be the vector back-projected to the intermediate state:
\(
q_{a_{\textcolor{sthlmRed}{2}}, z_{\textcolor{sthlmRed}{2}}}^{\alpha}(s, h_{\textcolor{sthlmRed}{1}}, a_{\textcolor{sthlmRed}{1}})
\doteq
\frac{r(s, a_{\textcolor{sthlmRed}{1}}, a_{\textcolor{sthlmRed}{2}})}{|\mathcal{Z}_{\textcolor{sthlmRed}{2}}|} + 
  \gamma \sum_{s', z_{\textcolor{sthlmRed}{1}}} p(s', z_{\textcolor{sthlmRed}{1}}, z_{\textcolor{sthlmRed}{2}} \mid s, a_{\textcolor{sthlmRed}{1}}, a_{\textcolor{sthlmRed}{2}}) \, \alpha(s', h_{\textcolor{sthlmRed}{1}} \!\cdot\! a_{\textcolor{sthlmRed}{1}} \!\cdot\! z_{\textcolor{sthlmRed}{1}})
\).
Note the first term distributes the immediate reward uniformly across observation branches to ensure it is counted exactly once in the total summation.
\ref{eq:LP2} computes the optimal response by summing contributions over all observation outcomes.
The variables \(w_{h_{\textcolor{sthlmRed}{2}}, z_{\textcolor{sthlmRed}{2}}}^{\Gamma_{\!\textcolor{sthlmRed}{2}}'}\) represent the value contribution of a specific envelope \(\Gamma_{\!\textcolor{sthlmRed}{2}}'\) given observation \(z_{\textcolor{sthlmRed}{2}}\).

\begin{equation}
\tag{LP$_2$}
\label{eq:LP2}
\begin{aligned}
&\text{maximise}_{\lambda, w, v} \quad \textstyle \sum_{h_{\textcolor{sthlmRed}{2}}} v_{h_{\textcolor{sthlmRed}{2}}} \\
&\text{subject to} \\
% Constraint 1: Optimality of a2
&\quad \textstyle v_{h_{\textcolor{sthlmRed}{2}}} \le
\sum_{\Gamma_{\!\textcolor{sthlmRed}{2}}' \in \Gamma_{\!\textcolor{sthlmRed}{1}}'}
\sum_{z_{\textcolor{sthlmRed}{2}}}
w_{h_{\textcolor{sthlmRed}{2}}, z_{\textcolor{sthlmRed}{2}}}^{\Gamma_{\!\textcolor{sthlmRed}{2}}'},
\quad \textstyle \forall h_{\textcolor{sthlmRed}{2}}, \forall a_{\textcolor{sthlmRed}{2}}, \\
% Constraint 2: Concavity
&\quad \textstyle w_{h_{\textcolor{sthlmRed}{2}}, z_{\textcolor{sthlmRed}{2}}}^{\Gamma_{\!\textcolor{sthlmRed}{2}}'} \le
\lambda(\Gamma_{\!\textcolor{sthlmRed}{2}}') \,
\langle b_{h_{\textcolor{sthlmRed}{2}}}, q_{a_{\textcolor{sthlmRed}{2}}, z_{\textcolor{sthlmRed}{2}}}^{\alpha} \rangle, \\
&\quad  \forall h_{\textcolor{sthlmRed}{2}}, \forall a_{\textcolor{sthlmRed}{2}}, \forall z_{\textcolor{sthlmRed}{2}},
\forall \Gamma_{\!\textcolor{sthlmRed}{2}}' \in \Gamma_{\!\textcolor{sthlmRed}{1}}',
\forall \alpha \in \Gamma_{\!\textcolor{sthlmRed}{2}}', \\
% Constraint 3: Probability Simplex
&\quad \textstyle \sum_{\Gamma_{\!\textcolor{sthlmRed}{2}}' \in \Gamma_{\!\textcolor{sthlmRed}{1}}'} \lambda(\Gamma_{\!\textcolor{sthlmRed}{2}}') = 1,
\quad \lambda \ge 0.
\end{aligned}
\end{equation}
At optimum, the maximisation saturates the constraints, recovering the mixture value.
Crucially, this formulation avoids enumerating decision rules; the complexity is now polynomial in the game's representation parameters.

\begin{theorem}
\label{thm:complexity}
The sequential backup operator is polynomial in the explicit representation size. Computing the optimal update entails:
{(i) at sub-stage \(\textcolor{sthlmRed}{1}\)}, solving \(|\Gamma_{\!\textcolor{sthlmRed}{1}}|\) instances of \ref{eq:LP1} with \(\pmb{O}(|\mathcal{H}_{\textcolor{sthlmRed}{1}}||\mathcal{A}_{\textcolor{sthlmRed}{1}}|)\) variables and \(\pmb{O}(|\mathcal{H}_{\textcolor{sthlmRed}{2}}||\Gamma_{\!\textcolor{sthlmRed}{2}}|)\) constraints;
{(ii) at sub-stage \(\textcolor{sthlmRed}{2}\)}, solving one instance of \ref{eq:LP2} with \(\pmb{O}(|\mathcal{H}_{\textcolor{sthlmRed}{2}}| |\mathcal{Z}_{\textcolor{sthlmRed}{2}}| |\Gamma_{\!\textcolor{sthlmRed}{1}}'|)\) variables and \(\pmb{O}(|\mathcal{H}_{\textcolor{sthlmRed}{2}}| |\mathcal{A}_{\textcolor{sthlmRed}{2}}| |\mathcal{Z}_{\textcolor{sthlmRed}{2}}| |\Gamma_{\!\textcolor{sthlmRed}{1}}'| |\bar{\Gamma}_{\!\textcolor{sthlmRed}{2}}'|)\) constraints, where \(|\bar{\Gamma}_{\!\textcolor{sthlmRed}{2}}'| \doteq \max_{\Gamma'\in \Gamma_{\!\textcolor{sthlmRed}{1}}'} |\Gamma'|\).
\end{theorem}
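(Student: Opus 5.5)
The plan is to prove \Cref{thm:complexity} in two parts. First, I will check that each sub-stage greedy problem, as characterised by \Cref{lem:greedy_x1} and \Cref{lem:greedy_x2}, is solved exactly by \ref{eq:LP1} and \ref{eq:LP2}. Second, I will count the variables and constraints of these programs directly. Polynomiality then follows because linear programs can be solved in time polynomial in their encoding size (ellipsoid or interior-point methods), and every coefficient is computable in polynomial time from the explicit representation: the occupancy $x_{\textcolor{sthlmRed}{1}}$ or $x_{\textcolor{sthlmRed}{2}}$, the kernel $p$, the reward $r$, and the cached vectors $\alpha$.

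\emph{Sub-stage 1.} For a fixed $\Gamma_{\!\textcolor{sthlmRed}{2}}\in\Gamma_{\!\textcolor{sthlmRed}{1}}$, I will use the expansion of $\langle b_{h_{\textcolor{sthlmRed}{2}}},\alpha\rangle$ given after \ref{eq:LP1}. It shows that each constraint is linear in the variables $d_{\textcolor{sthlmRed}{1}}(a_{\textcolor{sthlmRed}{1}}\mid h_{\textcolor{sthlmRed}{1}})$ and $w_{h_{\textcolor{sthlmRed}{2}}}$.

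The epigraph argument is standard. At the optimum, each $w_{h_{\textcolor{sthlmRed}{2}}}$ equals $\min_{\alpha\in\Gamma_{\!\textcolor{sthlmRed}{2}}}\langle b_{h_{\textcolor{sthlmRed}{2}}},\alpha\rangle$. Hence the optimal LP value is $\max_{d_{\textcolor{sthlmRed}{1}}}\mathtt{Val}_{\Gamma_{\!\textcolor{sthlmRed}{2}}}(\tau_{\mathrm{seq}}(x_{\textcolor{sthlmRed}{1}},d_{\textcolor{sthlmRed}{1}}))$. Taking the best of the $|\Gamma_{\!\textcolor{sthlmRed}{1}}|$ instances then realises the argmax of \Cref{lem:greedy_x1}, since the max over $d_{\textcolor{sthlmRed}{1}}$ and the max over $\Gamma_{\!\textcolor{sthlmRed}{2}}$ commute.

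The count is as follows:
\begin{itemize}
\item there are $|\mathcal{H}_{\textcolor{sthlmRed}{1}}||\mathcal{A}_{\textcolor{sthlmRed}{1}}|$ variables for $d_{\textcolor{sthlmRed}{1}}$ plus $|\mathcal{H}_{\textcolor{sthlmRed}{2}}|$ variables $w$;
\item there are $|\mathcal{H}_{\textcolor{sthlmRed}{2}}||\Gamma_{\!\textcolor{sthlmRed}{2}}|$ envelope constraints plus $|\mathcal{H}_{\textcolor{sthlmRed}{1}}|$ simplex constraints and nonnegativity.
\end{itemize}
This gives the stated orders, with the $w$ and simplex terms absorbed under the implicit convention that histories of the two players are of comparable number.

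\emph{Sub-stage 2.} This is the main obstacle, because \ref{eq:LP2} is stated in a dual form and I must justify that it computes the value of \Cref{lem:greedy_x2}. My steps are:
\begin{enumerate}
\item Since $\tau_{\mathrm{seq}}(x_{\textcolor{sthlmRed}{2}},d_{\textcolor{sthlmRed}{2}})$ is linear in $d_{\textcolor{sthlmRed}{2}}$, each $\mathtt{Val}_{\Gamma'_{\!\textcolor{sthlmRed}{2}}}$ is a sum of minima of linear functions. For fixed $\lambda$, the objective therefore separates over $h_{\textcolor{sthlmRed}{2}}$ and, after the observation split, over $z_{\textcolor{sthlmRed}{2}}$.
\item The observation-shifted vectors $q^{\alpha}_{a_{\textcolor{sthlmRed}{2}},z_{\textcolor{sthlmRed}{2}}}$ give, for each pure action $a_{\textcolor{sthlmRed}{2}}$, the reward plus continuation contribution $\langle b_{h_{\textcolor{sthlmRed}{2}}},q^\alpha_{a_{\textcolor{sthlmRed}{2}},z_{\textcolor{sthlmRed}{2}}}\rangle$. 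The reward is split as $r/|\mathcal{Z}_{\textcolor{sthlmRed}{2}}|$ so that summing over $z_{\textcolor{sthlmRed}{2}}$ counts it once.
\item For fixed $\lambda$, the inner minimisation over the mixed $d_{\textcolor{sthlmRed}{2}}(\cdot\mid h_{\textcolor{sthlmRed}{2}})$ is attained at a pure action, so $\min_{d_{\textcolor{sthlmRed}{2}}}$ reduces to $\min_{a_{\textcolor{sthlmRed}{2}}}$ per $h_{\textcolor{sthlmRed}{2}}$. This is encoded by $v_{h_{\textcolor{sthlmRed}{2}}}\le\cdot$ for all $a_{\textcolor{sthlmRed}{2}}$, with epigraph variables $w$ for the inner $\min_\alpha$ scaled by $\lambda$.
\end{enumerate}

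The delicate point is that the $w$-variables are indexed by $(h_{\textcolor{sthlmRed}{2}},z_{\textcolor{sthlmRed}{2}},\Gamma'_{\!\textcolor{sthlmRed}{2}})$ but must serve all $a_{\textcolor{sthlmRed}{2}}$. I will first try to show that the LP value coincides with the max–min in \Cref{lem:greedy_x2}. If the shared indexing breaks exactness, I would add the index $a_{\textcolor{sthlmRed}{2}}$ to $w$. That only multiplies the variable count by $|\mathcal{A}_{\textcolor{sthlmRed}{2}}|$, which does not affect polynomiality. In either case, the primal greedy $d_{\textcolor{sthlmRed}{2}}$ is recovered from the dual multipliers of the first constraint family by LP duality.

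Counting for \ref{eq:LP2}:
\begin{itemize}
\item there are $|\mathcal{H}_{\textcolor{sthlmRed}{2}}||\mathcal{Z}_{\textcolor{sthlmRed}{2}}||\Gamma'_{\!\textcolor{sthlmRed}{1}}|$ variables $w$, plus $|\mathcal{H}_{\textcolor{sthlmRed}{2}}|$ variables $v$ and $|\Gamma'_{\!\textcolor{sthlmRed}{1}}|$ variables $\lambda$;
\item the second constraint family has at most $|\mathcal{H}_{\textcolor{sthlmRed}{2}}||\mathcal{A}_{\textcolor{sthlmRed}{2}}||\mathcal{Z}_{\textcolor{sthlmRed}{2}}||\Gamma'_{\!\textcolor{sthlmRed}{1}}||\bar\Gamma'_{\!\textcolor{sthlmRed}{2}}|$ constraints, which dominates the $|\mathcal{H}_{\textcolor{sthlmRed}{2}}||\mathcal{A}_{\textcolor{sthlmRed}{2}}|$ first-family constraints and the single simplex constraint.
\end{itemize}
Each coefficient $\langle b_{h_{\textcolor{sthlmRed}{2}}},q^\alpha\rangle$ costs $O(|\Omega_{\textcolor{sthlmRed}{2}}||\mathcal{S}||\mathcal{Z}_{\textcolor{sthlmRed}{1}}|)$ to compute, which is polynomial. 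Combining both sub-stages proves the theorem.
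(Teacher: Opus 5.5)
Your route is the paper's own. Its appendix proof of Theorem~\ref{thm:complexity} only counts the scalar variables and linear constraints of LP$_1$ and LP$_2$, then appeals to polynomial-time solvability of linear programs, and it takes exactness of both programs from Lemmas~\ref{lem:greedy_x1} and~\ref{lem:greedy_x2}. The appendix also records $O(|\mathcal{H}_1||\mathcal{A}_1|+|\mathcal{H}_2|)$ variables for LP$_1$, so you can just report the sum rather than invoke a comparability convention. Your exactness check for LP$_1$ is correct. The one real issue is the point you leave open for LP$_2$. It cannot be covered by ``in either case'': it resolves against your first option, and that changes what your count describes.

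Because $w^{\Gamma'_2}_{h_2,z_2}$ carries no $a_2$ index, the second constraint family forces $w^{\Gamma'_2}_{h_2,z_2}\le\lambda(\Gamma'_2)\min_{a_2}\min_{\alpha\in\Gamma'_2}\langle b_{h_2},q^{\alpha}_{a_2,z_2}\rangle$, and the $|\mathcal{A}_2|$ copies of the first family are identical. The printed LP$_2$ therefore returns $\max_{\lambda}\sum_{h_2}\sum_{\Gamma'_2}\lambda(\Gamma'_2)\sum_{z_2}\min_{a_2,\alpha}\langle b_{h_2},q^{\alpha}_{a_2,z_2}\rangle$. In effect, player~2 chooses $a_2$ after seeing $z_2$, and separately for each envelope.

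The true greedy value is different. The successor history $h_2\cdot a_2\cdot z_2$ contains $a_2$, so the weight $d_2(a_2\mid h_2)\ge 0$ factors out of every $\min_\alpha$. For fixed $\lambda$, the sub-stage-2 objective is then $\sum_{h_2,a_2}d_2(a_2\mid h_2)\sum_{\Gamma'_2}\lambda(\Gamma'_2)K_{\Gamma'_2}(h_2,a_2)$, where $K_{\Gamma'_2}(h_2,a_2)\doteq\sum_{z_2}\min_{\alpha\in\Gamma'_2}\langle b_{h_2},q^{\alpha}_{a_2,z_2}\rangle$. Minimising over $d_2(\cdot\mid h_2)$ puts $\min_{a_2}$ \emph{outside} the sums over $z_2$ and $\Gamma'_2$. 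Your step-1 remark that the objective separates over $z_2$ holds only for a fixed action. Since $\sum\min\le\min\sum$, the printed program gives only a lower bound, strict in general, and the multipliers on its duplicated $a_2$-constraints do not determine $d_2$.

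Your fallback restores exactness, but then $w$ has $|\mathcal{H}_2||\mathcal{A}_2||\mathcal{Z}_2||\Gamma'_1|$ entries. That is neither what you tallied nor the bound in (ii).

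A repair that stays within the stated bounds uses the fact that at sub-stage~2 the slices $b_{h_2}$ are data. Each $\langle b_{h_2},q^{\alpha}_{a_2,z_2}\rangle$ is then a constant, and since $\lambda\ge 0$ the $\alpha$-constraints collapse into the precomputed coefficients $K$. The exact step is: maximise $\sum_{h_2}v_{h_2}$ subject to $v_{h_2}\le\sum_{\Gamma'_2}\lambda(\Gamma'_2)K_{\Gamma'_2}(h_2,a_2)$ for all $(h_2,a_2)$, with $\lambda\in\Delta(\Gamma'_1)$. This has $|\mathcal{H}_2|+|\Gamma'_1|$ variables and $O(|\mathcal{H}_2||\mathcal{A}_2|)$ constraints, after $O(|\mathcal{H}_2||\mathcal{A}_2||\mathcal{Z}_2|\sum_{\Gamma'\in\Gamma'_1}|\Gamma'|)$ inner products of preprocessing. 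The dual multipliers of the $(h_2,a_2)$-constraints give the greedy $d_2$.

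Polynomiality survives either way. What neither your proposal nor the paper's count establishes (the paper never checks exactness) is that the printed LP$_2$, with the printed variable count, is the exact sub-stage-2 update.
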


In contrast, the simultaneous backup scales with the size of the decision-rule spaces \(|\mathcal{D}_{\textcolor{sthlmRed}{1}}|\) and \(|\mathcal{D}_{\textcolor{sthlmRed}{2}}|\), which grow exponentially with the size of the private-history spaces \(|\mathcal{H}_{\textcolor{sthlmRed}{i}}|\).
The sequential reformulation removes this strategy-enumeration bottleneck by never indexing the optimisation by opponent decision rules:
pessimism is enforced only against finitely many cached supporting vectors, so the runtime is governed by the explicit envelope representation that is actually stored and manipulated.
Appendix~\ref{app:sec:complexity} provides the detailed input-size analysis by counting scalar variables and linear constraints in LP$_{\textcolor{sthlmRed}{1}}$ and LP$_{\textcolor{sthlmRed}{2}}$ and appealing to polynomial-time solvability of linear programs.

Relative to the cached-envelope greedy update of \citet{escudie2025varepsilonoptimally}, the gain is structural rather than foundational: both approaches are polynomial in the same cached representation objects, but they pay for them differently.
\citet{escudie2025varepsilonoptimally} realise the greedy step as one monolithic LP in which the cache indices and the opponent’s action--observation branching are coupled throughout the constraint family.
Our sequential backup factorises this coupling into two stages: a cache-local optimisation (LP$_{\textcolor{sthlmRed}{1}}$) followed by a single global aggregation (LP$_{\textcolor{sthlmRed}{2}}$).
Operationally, this means we pay the action--observation branching once—at sub-stage~\(\textcolor{sthlmRed}{2}\)—rather than inside every per-envelope optimisation, yielding a lighter greedy update while preserving the same lossless guarantees.

\paragraph{Witness-based cache updates.}
Our value representation is a finite cache of envelopes, combined by pointwise maximisation.
Crucially, each greedy step produces a certificate (a witness vector at sub-stage~\textcolor{sthlmRed}{1}, or a witness envelope at sub-stage~\textcolor{sthlmRed}{2} via the LP dual) that can be appended to the cache.
This yields a monotone augmentation of the represented value function, with strict improvement whenever the new candidate improves the value at the generating occupancy.
See Appendix~\ref{sec:updating_value_function_representations} for the formal constructions and proof.

\subsection{Algorithmic Realisation: Sequential PBVI}

The geometric insights and polynomial backups developed above are not standalone; they serve as the computational engine for a Point-Based Value Iteration (PBVI) algorithm.
Since the sequential reduction maps the original problem to a sequential occupancy Markov game, we can directly apply the standard \texttt{PBVI} schema used in simultaneous settings.
In the sequential model, however, the dynamic program is indexed by \emph{sub-stages}:
we write $\nu=(i,t)$ with $i\in\{\textcolor{sthlmRed}{1},\textcolor{sthlmRed}{2}\}$ and $t\in\{0,\ldots,\ell\}$, and we let $\mathtt{next}(\nu)$ denote the successor sub-stage.
The algorithm therefore maintains (i) a finite sample set $\mathcal X_\nu$ of sequential occupancies for each sub-stage $\nu$, and (ii) a family of envelopes $\Gamma_{\!\textcolor{sthlmRed}{1},\nu}$ representing the value function at that sub-stage.
%For notational convenience, we also denote by $\Gamma_{\!\textcolor{sthlmRed}{1}}'$ the cache queried at the successor sub-stage, i.e., $\Gamma_{\!\textcolor{sthlmRed}{1}}' \equiv \Gamma_{\!\textcolor{sthlmRed}{1},\mathtt{next}(\nu)}$ in Algorithm~\ref{pbvi:zsposg}.

The integration is seamless: at each iteration, rather than solving a large matrix game, the planner executes the sequential backup \emph{at every sampled sub-stage}.
This is implemented by the backward sweep over sub-stage $\nu$ in reverse order, from $(\textcolor{sthlmRed}{1},\ell)$ down to $(\textcolor{sthlmRed}{1},0)$.
For each $\nu$ and each sampled occupancy $x_\nu\in\mathcal X_\nu$, the routine \texttt{Improve} solves the LP associated with the active player at \(\nu\) using the cached \emph{successor} envelope $\Gamma_{\!\textcolor{sthlmRed}{1},\mathtt{next}(\nu)}$ and then augments the current cache $\Gamma_{\!\textcolor{sthlmRed}{1},\nu}$ with the newly generated witness envelope (Theorem~\ref{thm:monotone_gamma_update}).
Concretely, when $\nu=(\textcolor{sthlmRed}{2},t)$, \texttt{Improve} solves the dual program~\eqref{eq:LP2} against the successor cache \(\Gamma_{\!\textcolor{sthlmRed}{1},\mathtt{next}(\nu)}\), which computes the optimal player~$\textcolor{sthlmRed}{2}$ response against $\Gamma_{\!\textcolor{sthlmRed}{1},\mathtt{next}(\nu)}$ and thereby generates an intermediate envelope $\Gamma_{\!\textcolor{sthlmRed}{2},\nu}$.
When $\nu=(\textcolor{sthlmRed}{1},t)$, \texttt{Improve} solves~\eqref{eq:LP1} against the corresponding intermediate successor envelope $\Gamma_{\!\textcolor{sthlmRed}{2},\mathtt{next}(\nu)}$, producing a witness vector (and its envelope) used to augment $\Gamma_{\!\textcolor{sthlmRed}{1},\nu}$ (Theorem~\ref{thm:monotone_gamma_update}).

After this backward improvement sweep, the algorithm performs a forward expansion sweep over sub-stages (second loop in Algorithm~\ref{pbvi:seq}).
Starting from the current sampled sets $\{\mathcal X_\nu\}_\nu$, the routine \texttt{expand} propagates reachability to populate the successor sample sets $\mathcal X_{\mathtt{next}(\nu)}$.
To ensure scalability, we employ standard bounded pruning techniques to discard envelopes (or vectors within envelopes) that are not selected on the sampled occupancies, preventing representation growth from exploding.
Overall, this sequential \texttt{PBVI} inherits the convergence guarantees of the simultaneous \texttt{PBVI} schema, while typically executing each backup substantially faster thanks to the polynomial LP factorisation. We refer the reader to Appendix~\ref{sec:algorithms-seq} for the complete pseudocode and detailed pruning specifications.

\section{Sequential Transition Independence}
\label{sec:sequential_ti_sg}

The previous sections show that the sequential split yields a polynomial-time Bellman update at the level of sequential occupancies.
What remains is to ensure that this computational device sits inside a game-theoretically sound model: values and equilibrium structure must be preserved exactly, as in the transition-independent reduction of \citet{escudie2025varepsilonoptimally}.
This section therefore introduces a \emph{sequential} variant of the TI-zs-SG and states the corresponding losslessness result.
All the structural and algorithmic consequences established in \citet{escudie2025varepsilonoptimally} then carry over verbatim; in particular, policy extraction is performed at the level of player occupancy sets (as in that work), not by the LPs used to implement the sequential backup.

\subsection{The Sequential TI-zs-SG}

Recall that, in the simultaneous TI-zs-SG of \citet{escudie2025varepsilonoptimally}, the game is organised around two layers:
(i) a central occupancy state used to evaluate values, and
(ii) player-specific occupancy sets used to represent each player’s feasible private-information evolution and to support equilibrium-consistent policy extraction.
The sequential split refines only the \emph{timing} of the central backup: each stage is decomposed into two sub-stages, so that player~\textcolor{sthlmRed}{1} updates first and player~\textcolor{sthlmRed}{2} completes the stage, exactly as in Sections~\ref{sec:sequential_approach} and \ref{sec:geometry_poly_backup}.

Formally, we define the sequential transition-independent game \(\mathcal{M}'_{\mathrm{seq}}\) by taking the TI-zs-SG construction of \citet{escudie2025varepsilonoptimally} and applying it to the sequential occupancy recursion:
at sub-stage \((1,t)\), only player~\textcolor{sthlmRed}{1} is active and only player~\textcolor{sthlmRed}{1}’s occupancy set is updated (the opponent set is held fixed);
at sub-stage \((2,t)\), only player~\textcolor{sthlmRed}{2} is active and only player~\textcolor{sthlmRed}{2}’s occupancy set is updated, while the stage reward is realised and the next stage-boundary state is produced.
The central sequential occupancies \(x_{\textcolor{sthlmRed}{1},t}\) and \(x_{\textcolor{sthlmRed}{2},t}\) remain the value-sufficient statistics manipulated by the uninformed recursion, and they are induced by the compatibility of the two current occupancy sets (cf.\ Figure~\ref{fig:sequential_ti_diagram}).

This interleaving is the only change relative to the simultaneous TI-zs-SG: the information structure, the meaning of occupancy sets, and the equilibrium-consistent extraction operators remain exactly those of \citet{escudie2025varepsilonoptimally}.

\subsection{Losslessness}

\begin{theorem}[Sequential lossless reduction]
\label{thm:sequential_lossless}
The sequential transition-independent zs-SG \(\mathcal{M}'_{\mathrm{seq}}\) is a lossless reduction of the original zs-POSG \(\mathcal{M}\), and composing its two sub-stages at each \(t\) recovers the simultaneous TI-zs-SG optimality equations of \citet{escudie2025varepsilonoptimally}.
\end{theorem}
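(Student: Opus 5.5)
The plan is to prove the two claims in reverse order. First I show that composing the sub-stages recovers the simultaneous TI-zs-SG optimality equations. Then I transfer losslessness from the simultaneous TI-zs-SG of \citet{escudie2025varepsilonoptimally}, using that two models with identical stage-boundary optimality equations, identical information structure and identical extraction operators inherit each other's lossless properties.

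\textbf{Step 1 (composition).} Fix a stage-boundary state of $\mathcal{M}'_{\mathrm{seq}}$: a central occupancy $x_{1,t}$ together with the two player occupancy sets. I apply Theorem~\ref{thm:bellman_seq} at $(1,t)$. Since $\rho_{\mathrm{seq}}\equiv 0$ and $\gamma_1=1$, this gives $v^*_{\mathrm{seq}}(x_{1,t})=\max_{d_1} v^*_{\mathrm{seq}}(\tau_{\mathrm{seq}}(x_{1,t},d_1))$. I then apply it again at $(2,t)$. By the deterministic pushforward of Definition~\ref{def:sequential:occupancy:state}, $\tau_{\mathrm{seq}}(\tau_{\mathrm{seq}}(x_{1,t},d_1),d_2)$ coincides with the simultaneous occupancy transition under $(d_1,d_2)$. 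Likewise, $\rho_{\mathrm{seq}}$ at $(2,t)$ equals the simultaneous expected reward. This yields the max-min form
\[
v^*_{\mathrm{seq}}(x_{1,t})=\max_{d_1}\min_{d_2}\bigl[\rho(x_{1,t},d_1,d_2)+\gamma\, v^*_{\mathrm{seq}}(x_{1,t+1})\bigr].
\]
The occupancy-set components update in the same way: $(1,t)$ touches only player 1's set and $(2,t)$ only player 2's. So the composed set update equals the simultaneous TI update, by transition independence.

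\textbf{Step 2 (max-min versus saddle point).} This is the main obstacle. The simultaneous TI-zs-SG equations are stated as a saddle-point, or mixed-equilibrium, value over decision rules. The composed recursion instead gives player 2 the information of $d_1$, so a priori it yields only the upper value $\max\min$. I would close the gap using Theorem~\ref{thm:geometry}.

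First, in the TI-zs-SG state the continuation value is a function of the player occupancy sets, which are the players' own private-information evolutions. By \citet{escudie2025varepsilonoptimally}, for fixed sets the optimal value is linear in player 2's decision rule and concave-in-envelope in player 1's rule. Player 2's best response then separates across private histories $h_2$: the inner $\min_{\alpha}\langle b_{h_2},\alpha\rangle$ depends on $d_1$ only through $b_{h_2}$.

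Second, because $d_2$ is conditioned only on $h_2$, knowing $d_1$ gives no information that $h_2$ does not already encode. Combined with the convex–concave structure, Sion's theorem \citep{Sion1958} over the compact convex decision-rule polytopes then gives $\max_{d_1}\min_{d_2}=\min_{d_2}\max_{d_1}$. This is the step where the information-structure preservation of the TI construction is essential. The concern is that the planner-level commitment ordering could otherwise give player 2 spurious knowledge of $d_1$; it does not, since player 2's occupancy set is held fixed at $(1,t)$.

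\textbf{Step 3 (losslessness).} I would proceed by backward induction on $t$ from the terminal stage. The base case is $v=0$ at $t=\ell$. At each stage, Step 1 and Step 2 show $v^*_{\mathrm{seq}}=v^*_{\mathrm{TI}}$ at stage boundaries, and hence $v^*_{\mathrm{seq}}=v^*_{\mathrm{sim}}$ via the result of \citet{escudie2025varepsilonoptimally} and Theorem~\ref{thm:sufficiency:sequential:occupancy:state}. This gives value preservation.

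For equilibrium correspondence, any sequential policy pair restricted to stage boundaries is a simultaneous TI policy pair with the same return. Conversely, every simultaneous pair embeds by splitting each stage into its two sub-stages. Because the stage-boundary values coincide, this surjective map sends equilibria to equilibria.

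Information-structure equivalence holds because the occupancy sets and extraction operators are by construction those of \citet{escudie2025varepsilonoptimally}. The sub-stage split adds no observation to either player; $x_{2,t}$ is a planner-only snapshot.
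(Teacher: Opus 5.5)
Your Steps 1 and 3 follow the paper's argument. The paper composes the two sub-stage equations ($\rho_{\mathrm{seq}}=0$ and $\gamma_1=1$ at $(1,t)$, the primitives of $\mathcal{M}'$ at $(2,t)$) into $\max_{d_{1,t}}\min_{d_{2,t}}[\rho_{\mathcal{M}'}+\gamma\,v^*_{\mathrm{seq}}(\tau_{\mathcal{M}'})]$. It gets losslessness from the stage-unrolling maps $\Phi,\Psi$, which put admissible profiles of $\mathcal{M}'_{\mathrm{seq}}$ and $\mathcal{M}'$ in return-preserving bijection. Admissibility means $x_{2,t}$ is not a signal to player 2, which is the role your freezing of player 2's occupancy set at $(1,t)$ plays. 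One small correction: equilibria correspond because returns agree profile by profile over bijective policy sets, not ``because the stage-boundary values coincide.'' The paper has no counterpart to your Step 2. It simply takes the composed $\max\min$ recursion to be the TI-zs-SG optimality equation, so under its reading your Step 2 is unnecessary.

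If you do need Step 2, as you frame it, the argument does not establish it. Sion requires $Q(d_1,d_2)=\rho+\gamma\,v^*_{\mathrm{seq}}(\tau_{\mathcal{M}'}(x_{1,t},d_1,d_2))$ to be quasi-concave in $d_1$. Theorem~\ref{thm:geometry} at $(1,t+1)$ gives $v^*_{\mathrm{seq}}=\max_{\Gamma_2}\mathtt{Val}_{\Gamma_2}$.
\begin{itemize}
\item Each $\mathtt{Val}_{\Gamma_2}\circ\tau$ is linear in $d_2$, since the slice $b_{h_2'}$ with $h_2'=h_2a_2z_2$ scales with $d_2(a_2\mid h_2)$. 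So convexity in $d_2$ holds.
\item Each $\mathtt{Val}_{\Gamma_2}\circ\tau$ is only concave in $d_1$, and the outer max over envelopes destroys that concavity. ``Concave-in-envelope'' therefore does not give you Sion's hypothesis.
\end{itemize}
Your remark that $d_2$ depends only on $h_2$ is a fact about admissible policies. It matters for equilibrium correspondence, but it says nothing about the shape of $Q$. Also, $\max\min$ is the lower value, not the upper one.

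A clean repair avoids convexity altogether. The continuation game from any occupancy is finite with perfect recall, so it has a value $v$. Split $\pi_{i,t:}=(d_{i,t},\pi_{i,t+1:})$ and apply weak duality once in each order; this gives $\min_{d_2}\max_{d_1}Q\le v(x_{1,t})\le\max_{d_1}\min_{d_2}Q$, which forces equality. Alternatively, get concavity in $d_1$ from the dual representation of $v^*$: a min over player-2 continuations of $\sum_{h_1'}\max_{\beta}\langle b_{h_1'},\beta\rangle$. Here $b_{h_1'}$ scales with $d_1(a_1\mid h_1)$, so each term is linear in $d_1$ and the min is concave.
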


By Theorem~\ref{thm:sequential_lossless}, every result established for TI-zs-SGs in \citet{escudie2025varepsilonoptimally} applies immediately to \(\mathcal{M}'_{\mathrm{seq}}\), including equilibrium correspondence, \emph{exploitability bounds}, and the associated policy extraction procedure at the level of player occupancy sets.
In the remainder of the paper, our sequential backups are used to implement the value recursion efficiently, while the extraction of executable strategies follows the TI-zs-SG machinery unchanged.

\section{Empirical evaluation}

\newcommand{\subfigsize}{120pt}
\newcommand{\subfigfontsize}{\normalsize}

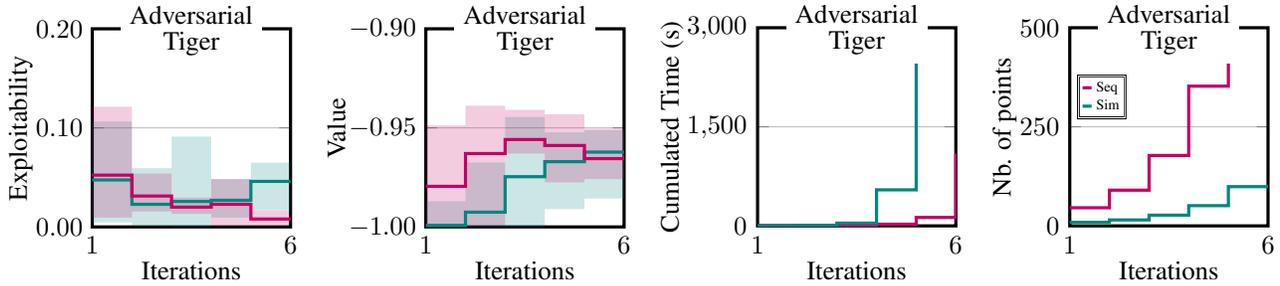
\begin{figure*}[htbp]
  \centering
  \begin{subfigure}
    \centering
    \begin{tikzpicture}
      \begin{axis}[
        title={\shortstack{Adversarial \\ Tiger}},
        title style={yshift=-20pt, fill=white},
        xlabel=Iterations,
        ylabel=Exploitability,
        width=\subfigsize,
        height=\subfigsize,
        xmin=1, xmax=6,
        xtick={1, 6},
        ymin=0, ymax=0.2,
        ytick={0, 0.1, 0.2},
        ylabel style={yshift=-6pt,font=\subfigfontsize},
        xlabel style={yshift=4pt,font=\subfigfontsize},
        grid=both,
        scaled y ticks=false,
        yticklabel style={
          /pgf/number format/fixed,
          /pgf/number format/fixed zerofill,
          /pgf/number format/precision=2,
         font=\subfigfontsize
        },
        style={very thick,font=\subfigfontsize},
        axis line style={very thick},
        legend style={draw=none}
      ]
        \addplot+[const plot, mark=none, very thick, sthlmGreen, opacity=1., solid] table [col sep=comma, x=iter, y=exp_value] {./results/adversarial_tiger/random/sim_mean.log};
        \addplot+[const plot, mark=none, very thick, sthlmRed, opacity=1., solid] table [col sep=comma, x=iter, y=exp_value] {./results/adversarial_tiger/random/seq_mean.log};
        
        \addplot+[name path=sim_max, const plot, mark=none, thick, sthlmGreen, opacity=0., solid] table [col sep=comma, x=iter, y=exp_value] {./results/adversarial_tiger/random/sim_max.log};
        \addplot+[name path=sim_min, const plot, mark=none, thick, sthlmGreen, opacity=0., solid] table [col sep=comma, x=iter, y=exp_value] {./results/adversarial_tiger/random/sim_min.log};
        \addplot[
            fill=sthlmGreen,
            fill opacity=0.2
        ]
        fill between[
            of=sim_max and sim_min
        ];

        \addplot+[name path=seq_max, const plot, mark=none, thick, sthlmRed, opacity=0., solid] table [col sep=comma, x=iter, y=exp_value] {./results/adversarial_tiger/random/seq_max.log};
        \addplot+[name path=seq_min, const plot, mark=none, thick, sthlmRed, opacity=0., solid] table [col sep=comma, x=iter, y=exp_value] {./results/adversarial_tiger/random/seq_min.log};
        \addplot[
            fill=sthlmRed,
            fill opacity=0.2
        ]
        fill between[
            of=seq_max and seq_min
        ];
      \end{axis}
    \end{tikzpicture}
  \end{subfigure}
  % \hfill
  \begin{subfigure}
    \centering
    \begin{tikzpicture}
      \begin{axis}[
        title={\shortstack{Adversarial \\ Tiger}},
        title style={yshift=-20pt, fill=white},
        xlabel=Iterations,
        ylabel=Value,
        width=\subfigsize,
        height=\subfigsize,
        xmin=1, xmax=6,
        xtick={1, 6},
        ymin=-1., ymax=-0.9,
        ytick={-1, -0.95, -0.9},
        ylabel style={yshift=-6pt,font=\subfigfontsize},
        xlabel style={yshift=4pt,font=\subfigfontsize},
        grid=both,
        scaled y ticks=false,
        yticklabel style={
          /pgf/number format/fixed,
          /pgf/number format/fixed zerofill,
          /pgf/number format/precision=2,
         font=\subfigfontsize
        },
        style={very thick,font=\subfigfontsize},
        axis line style={very thick},
        legend style={draw=none}
      ]
        \addplot+[const plot, mark=none, very thick, sthlmGreen, opacity=1., solid] table [col sep=comma, x=iter, y=value] {./results/adversarial_tiger/random/sim_mean.log};
        \addplot+[const plot, mark=none, very thick, sthlmRed, opacity=1., solid] table [col sep=comma, x=iter, y=value] {./results/adversarial_tiger/random/seq_mean.log};
        
        \addplot+[name path=sim_max, const plot, mark=none, thick, sthlmGreen, opacity=0., solid] table [col sep=comma, x=iter, y=value] {./results/adversarial_tiger/random/sim_max.log};
        \addplot+[name path=sim_min, const plot, mark=none, thick, sthlmGreen, opacity=0., solid] table [col sep=comma, x=iter, y=value] {./results/adversarial_tiger/random/sim_min.log};
        \addplot[
            fill=sthlmGreen,
            fill opacity=0.2
        ]
        fill between[
            of=sim_max and sim_min
        ];

        \addplot+[name path=seq_max, const plot, mark=none, thick, sthlmRed, opacity=0., solid] table [col sep=comma, x=iter, y=value] {./results/adversarial_tiger/random/seq_max.log};
        \addplot+[name path=seq_min, const plot, mark=none, thick, sthlmRed, opacity=0., solid] table [col sep=comma, x=iter, y=value] {./results/adversarial_tiger/random/seq_min.log};
        \addplot[
            fill=sthlmRed,
            fill opacity=0.2
        ]
        fill between[
            of=seq_max and seq_min
        ];
      \end{axis}
    \end{tikzpicture}
  \end{subfigure}
  % \hfill
  \begin{subfigure}
    \centering
    \begin{tikzpicture}
      \begin{axis}[
        title={\shortstack{Adversarial \\ Tiger}},
        title style={yshift=-20pt, fill=white},
        xlabel=Iterations,
        ylabel=Cumulated Time (s),
        width=\subfigsize,
        height=\subfigsize,
        xmin=1, xmax=6,
        xtick={1, 6},
        ymin=0, ymax=3000,
        ytick={0, 1500, 3000},
        ylabel style={yshift=-6pt,font=\subfigfontsize},
        xlabel style={yshift=4pt,font=\subfigfontsize},
        grid=both,
        scaled y ticks=false,
        yticklabel style={
          /pgf/number format/fixed,
          /pgf/number format/fixed zerofill,
          /pgf/number format/precision=0,
         font=\subfigfontsize
        },
        style={very thick,font=\subfigfontsize},
        legend style={draw=none},
        style={very thick,font=\subfigfontsize}
      ]
        \addplot+[const plot, mark=none, very thick, sthlmRed, opacity=1.] table [col sep=comma, x=iter, y=cumul_time] {./results/adversarial_tiger/random/seq_mean.log};
        \addplot+[const plot, mark=none, very thick, solid, sthlmGreen, opacity=1.] table [col sep=comma, x=iter, y=cumul_time] {./results/adversarial_tiger/random/sim_mean.log};
      \end{axis}
    \end{tikzpicture}
  \end{subfigure}
  \begin{subfigure}
    \centering
    \begin{tikzpicture}
      \begin{axis}[
        title={\shortstack{Adversarial \\ Tiger}},
        title style={yshift=-20pt, fill=white},
        xlabel=Iterations,
        ylabel=Nb. of points,
        width=\subfigsize,
        height=\subfigsize,
        xmin=1, xmax=6,
        xtick={1, 6},
        ymin=0, ymax=500,
        ytick={0, 250, 500},
        ylabel style={yshift=-6pt,font=\subfigfontsize},
        xlabel style={yshift=4pt,font=\subfigfontsize},
        grid=both,
        scaled y ticks=false,
        yticklabel style={
          /pgf/number format/fixed,
          /pgf/number format/fixed zerofill,
          /pgf/number format/precision=0,
         font=\subfigfontsize
        },
        style={very thick,font=\subfigfontsize},
        legend style={draw=none},
        style={very thick,font=\subfigfontsize},
        legend style={
            at={(0.05,0.75)},
            thin,
            legend cell align=left,
            anchor=north west,
            font=\small,
            inner sep=0.3pt,
            draw=none,
            legend columns=1,
            nodes={scale=0.6},
            name=leg
        },
        legend image post style={
            xscale=0.2
        }
      ]
        \addplot+[const plot, mark=none, very thick, sthlmRed, opacity=1.] table [col sep=comma, x=iter, y=S_size] {./results/adversarial_tiger/random/seq_mean.log};
        \addlegendentry{Seq}
        \addplot+[const plot, mark=none, very thick, solid, sthlmGreen, opacity=1.] table [col sep=comma, x=iter, y=S_size] {./results/adversarial_tiger/random/sim_mean.log};
        \addlegendentry{Sim}
      \end{axis}
      
        \begin{scope}[on background layer]
            \draw[black, thin] (leg.north west) rectangle (leg.south east);
            \draw[black, thin] ([xshift=-0.8pt,yshift=0.8pt]leg.north west) rectangle ([xshift=0.8pt,yshift=-0.8pt]leg.south east);
        \end{scope}
    \end{tikzpicture}
  \end{subfigure}

\caption{PBVI on Adversarial Tiger ($\ell{=}5$): exploitability, value, cumulative runtime, and sample count over iterations, averaged over random seeds. \textcolor{sthlmGreen}{Green}: simultaneous; \textcolor{sthlmRed}{red}: sequential.}
  %\caption{Exploitability, value, cumulative computation time, and number of sampled points of PBVI over iterations on the Adversarial Tiger benchmark ($\ell$=5), evaluated across different random seeds. The \textcolor{sthlmGreen}{green curves} correspond to the \textcolor{sthlmGreen}{simultaneous} variant, while the \textcolor{sthlmRed}{red curves} represent the \textcolor{sthlmRed}{sequential} variant.}
  \label{fig:main_plots}
\end{figure*}

We evaluate our approach on Adversarial Tiger, Recycling, and MABC (\url{http://masplan.org/}), which are standard benchmarks in the recent zs-POSG literature and are used as reference domains in \citet{Wiggers16}, \citet{DelBufDibSaf-DGAA-23}, and \citet{escudie2025varepsilonoptimally}. Following the same convention, we instantiate the competitive setting by using a zero-sum payoff: player~\(\textcolor{sthlmRed}{2}\)’s reward is the negative of player~\(\textcolor{sthlmRed}{1}\)’s reward. This construction preserves the original transition and observation structure and isolates the algorithmic effect of the backup operator. % Additional implementation details are provided in Appendix~A.

For each game, we compare two variants of the \texttt{PBVI} algorithm: \texttt{Sim} \citep{escudie2025varepsilonoptimally}, which implements the simultaneous approach and serves as the baseline, and \texttt{Seq}, which implements the sequential approach. For completeness, we also include \texttt{HSVI} \citep{DelBufDibSaf-DGAA-23} and a naive \texttt{CFR+} implementation \citep{Tammelin14}. Table~\ref{table:main_results:value_and_exploitability} summarizes the results for the most computationally demanding horizons, reporting both the final value reached by each algorithm and the exploitability of the resulting policy.

\paragraph{Simultaneous vs.\ Sequential \texttt{PBVI}.}
Figure~\ref{fig:main_plots} and Table~\ref{table:main_results:value_and_exploitability} compare \texttt{Sim} and \texttt{Seq} under the same \texttt{PBVI} outer loop (sampling, improvement, and pruning), differing only in the backup: \texttt{Sim} applies the simultaneous minimax backup, whereas \texttt{Seq} applies the sequential backup (LP$_{\textcolor{sthlmRed}{2}}$ followed by LP$_{\textcolor{sthlmRed}{1}}$).
We report two outcomes: (i) wall-clock time to reach the reported iterate, and (ii) exploitability $\varepsilon$ of the resulting focal policy.

Across all benchmarks and the largest horizons reported, \texttt{Seq} attains \emph{lower or equal exploitability} than \texttt{Sim}, while being \emph{significantly faster} in the most demanding instances (Speedup column in Table~\ref{table:main_results:value_and_exploitability}, and runtime curves in Figure~\ref{fig:main_plots}).
The reason is structural: the sequential backup replaces a single, highly coupled simultaneous optimisation by two smaller programs whose sizes scale with the cached-envelope representation (Appendix~\ref{app:sec:complexity}), so each improvement step is cheaper.
This lower per-backup cost allows \texttt{Seq} to complete more improvement--expansion cycles within the same time budget, which results in denser sampling and, empirically, better exploitability.

\begin{table}[!ht]
\centering
%\caption{Snapshot of empirical results. Games are ordered by ascending complexity, and by increasing planning horizon \(\ell\). For each setting, we report the runtime of each algorithm (in seconds), as well as the exploitability \(\varepsilon\). The Speedup column indicates the acceleration factor achieved by the sequential approach relative to the simultaneous approach. \textsc{oot} indicates a timeout, \textsc{oom} denotes out-of-memory runs, and ‘--’ means the exploitability budget was exceeded. Best results are highlighted in \highest{magenta}.}
\caption{Snapshot of results (time in seconds, exploitability $\varepsilon$). Speedup is \texttt{Sim} time divided by \texttt{Seq} time. \textsc{oot}: timeout; \textsc{oom}: out of memory; ‘--’: exploitability budget exceeded. Best values per row are highlighted in \highest{magenta}.}
\label{table:main_results:value_and_exploitability}
\scalebox{0.56}{%
\begin{tabular}{@{}l r rr rr rr rr@{}}
\toprule
\textbf{Game ($\ell$)} & \textbf{Speedup} & \multicolumn{2}{c}{\textbf{\texttt{Seq}}} & \multicolumn{2}{c}{\textbf{\texttt{Sim}}} & \multicolumn{2}{c}{\textbf{\texttt{HSVI}}} & \multicolumn{2}{c}{\textbf{\texttt{CFR+}}} \\
& & time & \(\varepsilon\) & time & \(\varepsilon\) & time & \(\varepsilon\) & time & \(\varepsilon\) \\
\cmidrule(lr){3-4}\cmidrule(lr){5-6}\cmidrule(lr){7-8}\cmidrule(lr){9-10}

kuhn poker & & 0.2 & \highest{0.00} & 0.1 & \highest{0.00} & \textsc{na} & \textsc{na} & \highest{0.01} & \highest{0.00}\\

\myrowcolour adversarial-tiger(3) & 7.33 & \highest{0.15} & \highest{0.00} & 1.1 & \highest{0.00} & 500 & \highest{0.00} & 1 & \highest{0.00} \\
adversarial-tiger(4) & 1.4 & \highest{10} & \highest{0.00} & 14 & \highest{0.00} & \textsc{oot} &  & 17 & 0.01 \\
\myrowcolour adversarial-tiger(5) & 1.7 & \highest{30} & \highest{0.00} & 51 & \highest{0.00} & \textsc{oot} &  & 181 & 0.01 \\
adversarial-tiger(7) & 1.83 & \highest{264} & \highest{0.00} & 485 & 0.02 & \textsc{oot} &  & \textsc{oom} & \\
\myrowcolour adversarial-tiger(10) & 26.6 & \highest{287} & \highest{0.02} & 7648 & 0.05 & \textsc{oot} &  & \textsc{oom} &  \\
adversarial-tiger(12) & & \highest{782} & \highest{0.02} & \textsc{oot} &  & \textsc{oot} &  & \textsc{oom} &  \\
\myrowcolour adversarial-tiger(14) & & \highest{1840} & \highest{0.18} & \textsc{oot} &  & \textsc{oot} &  & \textsc{oom} &  \\

mabc(3) & 20 & 0.85 & \highest{0.00} & 17 & \highest{0.00} & 70 & \highest{0.00} & \highest{0.5} & \highest{0.00} \\
\myrowcolour mabc(4) & 20.8 & 15 & \highest{0.00} & 312 & 0.01 & \textsc{oot} &  & \highest{4} & \highest{0.00} \\
mabc(5) & 3.1 & \highest{44} & \highest{0.00} & 136 & 0.02 & \textsc{oot} &  & 51 & 0.01 \\
\myrowcolour mabc(7) & 18.5 & \highest{178} & \highest{0.00} & 3298 & 0.04 & \textsc{oot} &  & \textsc{oom} &  \\
mabc(10) & & \highest{1377} & 0.05 & \textsc{oot} &  & \textsc{oot} &  & \textsc{oom} &  \\

\myrowcolour recycling(3) & 0.18 & 27 & \highest{0.00} & \highest{5} & \highest{0.00} & 430 & \highest{0.00} & 6 & \highest{0.00} \\
recycling(4) & 0.4 & 36 & \highest{0.01} & \highest{15} & \highest{0.01} & \textsc{oot} &  & 80 & 0.03 \\
\myrowcolour recycling(5) & 3.1 & \highest{72} & \highest{0.01} & 225 & 0.03 & \textsc{oot} &  & \textsc{oom} &  \\
recycling(7) & 23.4 & \highest{84}  & \highest{0.01} & 1969 & 0.02 & \textsc{oot} &  & \textsc{oom} &   \\
\myrowcolour recycling(10) & 14.7 & \highest{1616} & \highest{0.00} & 23798 & 0.05 & \textsc{oot} &  & \textsc{oom} &  \\

competitive-tiger(3) & 0.93 & 48 & 0.02 & 45 & 0.02 & 291 & \highest{0.00} & \highest{17} & \highest{0.00} \\
\myrowcolour competitive-tiger(4) & 1.98 & \highest{60} & 0.01 & 119 & \highest{0.00} & \textsc{oot} &  & \textsc{oom} &  \\
competitive-tiger(5) & 2.11 & \highest{152} & 0.03 & 322 & \highest{0.02} & \textsc{oot} &  & \textsc{oom} &  \\
\myrowcolour competitive-tiger(7) & 3.7 & \highest{450} & 0.08 & 1685 & \highest{0.06} & \textsc{oot} &  & \textsc{oom} &  \\
competitive-tiger(10) &  & \highest{847} & \highest{0.03} & \textsc{oot} &  & \textsc{oot} &  & \textsc{oom} &  \\

% \myrowcolour competitive-tiger(4)  & \highest{-0.03} & 0.03 & -0.07 & \highest{0.00} & -0.05 & 0.03 & \textsc{oot} &  & \textsc{oom} &  \\
% \myrowcolour competitive-tiger(5)  & \highest{-0.06} & 0.01 & -0.08 & \highest{0.00} & -0.10 & 0.02 & \textsc{oot} &  & \textsc{oom} &  \\
% competitive-tiger(7)  & \highest{-0.15} & 0.03 & -0.17 & 0.04 & \highest{-0.15} & \highest{0.02} & \textsc{oot} &  & \textsc{oom} &  \\
% \myrowcolour competitive-tiger(10) & \textsc{oot} &  & -0.29 & -- & \highest{-0.20} & -- & \textsc{oot} &  & \textsc{oom} &  \\

\bottomrule
\end{tabular}
}
\end{table}

Regarding values, \texttt{Seq} can appear slightly more conservative early on (lower value estimates) because it maintains a lower-bound representation built from cached envelopes.
However, as iterations proceed, the additional improvement cycles made possible by the cheaper backup translate into tighter envelopes at the sampled occupancies, and the value curves typically catch up to the simultaneous variant (e.g., Adv.\ Tiger and MABC in Figure~\ref{fig:value_and_exp_plots}).
Finally, the exploitability traces produced by \texttt{Sim} exhibit larger transient fluctuations in several settings (Figure~\ref{fig:main_plots}); in our experiments this correlates with slower sampling growth, so each update is informed by fewer newly reached occupancies before the next greedy step is computed.

%\paragraph{Other baselines.}
\texttt{HSVI} \citep{DelBufDibSaf-DGAA-23} and \texttt{CFR+} \citep{Tammelin14} are included for completeness. A detailed discussion of their behaviour on these benchmarks, including the practical limitations that lead to timeouts or memory exhaustion at larger horizons, is provided in \citet{escudie2025varepsilonoptimally}; we therefore focus here on isolating the effect of replacing the simultaneous backup (\texttt{Sim}) by the sequential one (\texttt{Seq}).

\section{Conclusion}
Lossless reductions give zs-POSGs a clean dynamic-programming interpretation, yet in practice they still leave the \emph{same} stumbling block: a Bellman backup that is a simultaneous minimax over history-dependent decision rules, and therefore grows explosively with private-history length. Our key idea is to remove this obstacle at its source by \emph{sequentialising the backup}: we replace the exponential stage game by two structured LP sub-stages whose size is polynomial in the explicit cache representation, so the cost is governed by the number of supporting hyperplanes actually stored rather than by the dimension of the strategy space. This structural shift compounds in practice—cheaper backups enable denser point sets, denser point sets yield tighter envelopes, and tighter envelopes translate into lower exploitability—so across benchmarks and horizons \texttt{Seq} consistently improves robustness over \texttt{Sim} while often delivering order-of-magnitude speedups. The message is simple: in zs-POSGs, scalability is won in the backup, and sequentialisation turns reduction-based planning from a principled formulation into a usable algorithm.

\bibliography{example_paper}
\bibliographystyle{icml2026}

%%%%%%%%%%%%%%%%%%%%%%%%%%%%%%%%%%%%%%%%%%%%%%%%%%%%%%%%%%%%%%%%%%%%%%%%%%%%%%%
%%%%%%%%%%%%%%%%%%%%%%%%%%%%%%%%%%%%%%%%%%%%%%%%%%%%%%%%%%%%%%%%%%%%%%%%%%%%%%%
% APPENDIX
%%%%%%%%%%%%%%%%%%%%%%%%%%%%%%%%%%%%%%%%%%%%%%%%%%%%%%%%%%%%%%%%%%%%%%%%%%%%%%%
%%%%%%%%%%%%%%%%%%%%%%%%%%%%%%%%%%%%%%%%%%%%%%%%%%%%%%%%%%%%%%%%%%%%%%%%%%%%%%%
\onecolumn
\appendix
\begin{center}\LARGE \textbf{Appendix — Detailed Proofs}\end{center}
\tableofcontents
\bigskip

\section{Sequential Optimality Equations --- Proof of Theorem \ref{thm:bellman_seq}}
\label{app:proof:bellman_seq}

This section provides a concise, self-contained proof by backward induction. 
The only non-trivial points are: 
(i) sequential occupancy states are sufficient (Markov) for one-step rewards and transitions, 
and (ii) the one-step optimisations attain \(\max/\min\). 

\paragraph{Termination conditions.} 
At stage boundary \(t=\ell\), the process terminates at \(x_{\textcolor{sthlmRed}{1},\ell}\).
We set \(v^*_{\mathrm{seq}}(x_{\textcolor{sthlmRed}{1},\ell})\doteq 0\).

\paragraph{Discount split.} Since the true stage reward is realised only at sub-stage \(\textcolor{sthlmRed}{2}\), discounting must apply only once per completed original stage.
We therefore set \(\gamma_{\textcolor{sthlmRed}{1}}\doteq 1\) and \(\gamma_{\textcolor{sthlmRed}{2}}\doteq \gamma \).

% ------------------------------------------------------------
\subsection{Sufficiency of Sequential Occupancy States --- Proof Theorem \ref{thm:sufficiency:sequential:occupancy:state}}
\label{app:proof:sufficiency:sequential:occupancy:state}
% ------------------------------------------------------------

\begin{lemma}
\label{lem:seq_sufficiency_zero_dot}
Fix \(t<\ell\).
Let \(\theta_{\textcolor{sthlmRed}{1},t}\) and \(\theta_{\textcolor{sthlmRed}{2},t}\) be the sequential planner's exhaustive information states, and let \(x_{\textcolor{sthlmRed}{1},t}\) and \(x_{\textcolor{sthlmRed}{2},t}\) be the induced sequential occupancies.
Private histories are updated by appending the realised pair \((a_{\textcolor{sthlmRed}{i}},z_{\textcolor{sthlmRed}{i}})\) to \(h_{\textcolor{sthlmRed}{i},t}\).
\begin{enumerate}[(i)]
\item For any \(d_{\textcolor{sthlmRed}{1},t}\in\mathcal{D}_{\textcolor{sthlmRed}{1},t}\), consider the resulting exhaustive history
\(\theta_{\textcolor{sthlmRed}{2},t}=(\theta_{\textcolor{sthlmRed}{1},t},d_{\textcolor{sthlmRed}{1},t})\).
Then, for all \((s,h_{\textcolor{sthlmRed}{1},t},h_{\textcolor{sthlmRed}{2},t},a_{\textcolor{sthlmRed}{1}})\),
\[
\Pr(s,h_{\textcolor{sthlmRed}{1},t},h_{\textcolor{sthlmRed}{2},t},a_{\textcolor{sthlmRed}{1}}\mid \theta_{\textcolor{sthlmRed}{2},t})
=
\tau_{\mathrm{seq}}(x_{\textcolor{sthlmRed}{1},t},d_{\textcolor{sthlmRed}{1},t})
(s,h_{\textcolor{sthlmRed}{1},t},h_{\textcolor{sthlmRed}{2},t},a_{\textcolor{sthlmRed}{1}})
\quad
\text{and} 
\quad
\rho_{\mathrm{seq}}(x_{\textcolor{sthlmRed}{1},t},d_{\textcolor{sthlmRed}{1},t})=0
.\]

\item For any \(d_{\textcolor{sthlmRed}{2},t}\in\mathcal{D}_{\textcolor{sthlmRed}{2},t}\), consider the resulting exhaustive history
\(\theta_{\textcolor{sthlmRed}{1},t+1}=(\theta_{\textcolor{sthlmRed}{2},t},d_{\textcolor{sthlmRed}{2},t})\).
Then, for all \((s',h_{\textcolor{sthlmRed}{1},t+1},h_{\textcolor{sthlmRed}{2},t+1})\),
\[
\Pr(s',h_{\textcolor{sthlmRed}{1},t+1},h_{\textcolor{sthlmRed}{2},t+1}\mid \theta_{\textcolor{sthlmRed}{1},t+1})
=
\tau_{\mathrm{seq}}(x_{\textcolor{sthlmRed}{2},t},d_{\textcolor{sthlmRed}{2},t})
(s',h_{\textcolor{sthlmRed}{1},t+1},h_{\textcolor{sthlmRed}{2},t+1})
\quad
\text{and}
\quad
\mathbb{E}\!\left[r(s,a_{\textcolor{sthlmRed}{1}},a_{\textcolor{sthlmRed}{2}})\mid \theta_{\textcolor{sthlmRed}{2},t},d_{\textcolor{sthlmRed}{2},t}\right]
=
\rho_{\mathrm{seq}}(x_{\textcolor{sthlmRed}{2},t},d_{\textcolor{sthlmRed}{2},t}).
\]
\end{enumerate}
In particular, the one-step law of the next occupancy and the one-step expected reward depend on the exhaustive past only through
\((x_{\textcolor{sthlmRed}{i},t},d_{\textcolor{sthlmRed}{i},t})\).
\end{lemma}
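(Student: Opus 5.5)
The plan is a direct computation from the definition of the exhaustive information states, using the chain rule of probability together with the key structural fact of open-loop planning: the decision rules in \(\theta\) are deterministic objects chosen by the planner, not random variables. Conditioning on \(\theta\) therefore just fixes the stochastic kernels generating actions. The generative model under \(\theta_{\textcolor{sthlmRed}{1},t+1}\) factorises as a product of the initial belief \(b\), the decision-rule probabilities \(d_{\textcolor{sthlmRed}{i},k}(a_{\textcolor{sthlmRed}{i}}\mid h_{\textcolor{sthlmRed}{i},k})\), and the kernel \(p\). The joint law of \((s,h_{\textcolor{sthlmRed}{1},t},h_{\textcolor{sthlmRed}{2},t})\) under \(\theta_{\textcolor{sthlmRed}{2},t}\) coincides with its law under \(\theta_{\textcolor{sthlmRed}{1},t}\), because \(d_{\textcolor{sthlmRed}{1},t}\) only affects variables generated after stage \(t\) begins.

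For (i), I would apply the chain rule:
\[
\Pr(s,h_{\textcolor{sthlmRed}{1},t},h_{\textcolor{sthlmRed}{2},t},a_{\textcolor{sthlmRed}{1}}\mid\theta_{\textcolor{sthlmRed}{2},t})=\Pr(s,h_{\textcolor{sthlmRed}{1},t},h_{\textcolor{sthlmRed}{2},t}\mid\theta_{\textcolor{sthlmRed}{2},t})\,\Pr(a_{\textcolor{sthlmRed}{1}}\mid s,h_{\textcolor{sthlmRed}{1},t},h_{\textcolor{sthlmRed}{2},t},\theta_{\textcolor{sthlmRed}{2},t}).
\]
The first factor equals \(x_{\textcolor{sthlmRed}{1},t}(s,h_{\textcolor{sthlmRed}{1},t},h_{\textcolor{sthlmRed}{2},t})\) by the remark above. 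The second factor equals \(d_{\textcolor{sthlmRed}{1},t}(a_{\textcolor{sthlmRed}{1}}\mid h_{\textcolor{sthlmRed}{1},t})\), because player~\textcolor{sthlmRed}{1} randomises using only its private history; this is the information-structure (conditional-independence) assumption. The product is exactly the pushforward in Definition~\ref{def:sequential:occupancy:state}, i.e.\ \(\tau_{\mathrm{seq}}(x_{\textcolor{sthlmRed}{1},t},d_{\textcolor{sthlmRed}{1},t})\). The reward claim \(\rho_{\mathrm{seq}}\equiv 0\) holds by definition.

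For (ii), I would write \(h_{\textcolor{sthlmRed}{i},t+1}=(h_{\textcolor{sthlmRed}{i},t},a_{\textcolor{sthlmRed}{i}},z_{\textcolor{sthlmRed}{i}})\) and sum out the previous state \(s\). The chain rule then gives
\[
\sum_s x_{\textcolor{sthlmRed}{2},t}(s,h_{\textcolor{sthlmRed}{1},t},h_{\textcolor{sthlmRed}{2},t},a_{\textcolor{sthlmRed}{1}})\,d_{\textcolor{sthlmRed}{2},t}(a_{\textcolor{sthlmRed}{2}}\mid h_{\textcolor{sthlmRed}{2},t})\,p(s',z_{\textcolor{sthlmRed}{1}},z_{\textcolor{sthlmRed}{2}}\mid s,a_{\textcolor{sthlmRed}{1}},a_{\textcolor{sthlmRed}{2}}).
\]
This uses three facts:
\begin{itemize}
\item[(a)] \(a_{\textcolor{sthlmRed}{2}}\) is drawn from \(d_{\textcolor{sthlmRed}{2},t}(\cdot\mid h_{\textcolor{sthlmRed}{2},t})\), independently of \((s,h_{\textcolor{sthlmRed}{1},t},a_{\textcolor{sthlmRed}{1}})\) given \(h_{\textcolor{sthlmRed}{2},t}\), since the moves are simultaneous;
\item[(b)] the triple \((s',z_{\textcolor{sthlmRed}{1}},z_{\textcolor{sthlmRed}{2}})\) is Markov given \((s,a_{\textcolor{sthlmRed}{1}},a_{\textcolor{sthlmRed}{2}})\);
\item[(c)] the first factor is \(x_{\textcolor{sthlmRed}{2},t}\), by (i) applied to \(\theta_{\textcolor{sthlmRed}{2},t}\).
\end{itemize}
This expression is by definition the environment-kernel update \(\tau_{\mathrm{seq}}(x_{\textcolor{sthlmRed}{2},t},d_{\textcolor{sthlmRed}{2},t})\). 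The reward identity follows in the same way, by taking the expectation of \(r(s,a_{\textcolor{sthlmRed}{1}},a_{\textcolor{sthlmRed}{2}})\) against the joint law \(x_{\textcolor{sthlmRed}{2},t}\cdot d_{\textcolor{sthlmRed}{2},t}\), which is exactly \(\rho_{\mathrm{seq}}(x_{\textcolor{sthlmRed}{2},t},d_{\textcolor{sthlmRed}{2},t})\).

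The main obstacle is bookkeeping rather than mathematics. The delicate point is justifying the two conditional-independence claims, (a) and the analogous one in (i). The decision-rule probabilities must depend only on the acting player's own history, not on \(s\), on the opponent's history, or on \(a_{\textcolor{sthlmRed}{1}}\), even though \(a_{\textcolor{sthlmRed}{1}}\) is already present in the intermediate occupancy \(x_{\textcolor{sthlmRed}{2},t}\). I would handle this by writing out the full product-form trajectory distribution induced by \(\theta\) once, and reading off each conditional as a ratio of marginals of that product. Histories of probability zero need care: there the conditional is arbitrary, but the joint identities still hold trivially as products with a zero factor.

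The final ``in particular'' statement is then immediate. The right-hand sides involve \(\theta\) only through \((x_{\textcolor{sthlmRed}{i},t},d_{\textcolor{sthlmRed}{i},t})\), so the one-step law of the next occupancy and the expected reward depend on the exhaustive past only through these quantities.
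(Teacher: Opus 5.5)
Your proposal is correct and follows essentially the same route as the paper. In (i) you factor the joint law as \(x_{1,t}(s,h_{1,t},h_{2,t})\,d_{1,t}(a_1\mid h_{1,t})\); in (ii) you integrate \(p\) against \(x_{2,t}\) and \(d_{2,t}\) with history appending and read \(\rho_{\mathrm{seq}}\) off the same joint law. You are somewhat more explicit than the paper about the marginal invariance under later decision rules, the conditional independences, and zero-probability histories, all of which the paper simply asserts.
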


\begin{proof}
\textbf{(i)} Under \(\theta_{\textcolor{sthlmRed}{1},t}\), the triple \((s,h_{\textcolor{sthlmRed}{1},t},h_{\textcolor{sthlmRed}{2},t})\) has distribution \(x_{\textcolor{sthlmRed}{1},t}\).
Moreover, for each \(h_{\textcolor{sthlmRed}{1},t}\) and each \(a_{\textcolor{sthlmRed}{1}}\), the probability of sampling action \(a_{\textcolor{sthlmRed}{1}}\) equals
\(d_{\textcolor{sthlmRed}{1},t}(a_{\textcolor{sthlmRed}{1}}\mid h_{\textcolor{sthlmRed}{1},t})\).
Therefore, for every \((s,h_{\textcolor{sthlmRed}{1},t},h_{\textcolor{sthlmRed}{2},t},a_{\textcolor{sthlmRed}{1}})\),
\[
\Pr(s,h_{\textcolor{sthlmRed}{1},t},h_{\textcolor{sthlmRed}{2},t},a_{\textcolor{sthlmRed}{1}}\mid \theta_{\textcolor{sthlmRed}{2},t})
=
x_{\textcolor{sthlmRed}{1},t}(s,h_{\textcolor{sthlmRed}{1},t},h_{\textcolor{sthlmRed}{2},t})\,
d_{\textcolor{sthlmRed}{1},t}(a_{\textcolor{sthlmRed}{1}}\mid h_{\textcolor{sthlmRed}{1},t}),
\]
which equals \(\tau_{\mathrm{seq}}(x_{\textcolor{sthlmRed}{1},t},d_{\textcolor{sthlmRed}{1},t})(s,h_{\textcolor{sthlmRed}{1},t},h_{\textcolor{sthlmRed}{2},t},a_{\textcolor{sthlmRed}{1}})\)
by the sub-stage \((1,t)\) pushforward construction of Section~\ref{subsec:sequential_primitives}.
Moreover, the sub-stage reward at \((1,t)\) is defined to be \(0\).

\textbf{(ii)} Under \(\theta_{\textcolor{sthlmRed}{2},t}\) and the choice \(d_{\textcolor{sthlmRed}{2},t}\) (equivalently, under \(\theta_{\textcolor{sthlmRed}{1},t+1}\)),
for each \(h_{\textcolor{sthlmRed}{2},t}\) and each \(a_{\textcolor{sthlmRed}{2}}\), the probability of sampling action \(a_{\textcolor{sthlmRed}{2}}\) equals
\(d_{\textcolor{sthlmRed}{2},t}(a_{\textcolor{sthlmRed}{2}}\mid h_{\textcolor{sthlmRed}{2},t})\).
Then, for each \((s,a_{\textcolor{sthlmRed}{1}},a_{\textcolor{sthlmRed}{2}})\) and each \((s',z_{\textcolor{sthlmRed}{1}},z_{\textcolor{sthlmRed}{2}})\),
the probability of transitioning and observing equals
\(p(s',z_{\textcolor{sthlmRed}{1}},z_{\textcolor{sthlmRed}{2}}\mid s,a_{\textcolor{sthlmRed}{1}},a_{\textcolor{sthlmRed}{2}})\),
and histories are updated by appending \((a_{\textcolor{sthlmRed}{i}},z_{\textcolor{sthlmRed}{i}})\).
By the sub-stage \((2,t)\) transition construction of Section~\ref{subsec:sequential_primitives} (integration of \(p\) against \(x_{\textcolor{sthlmRed}{2},t}\) and \(d_{\textcolor{sthlmRed}{2},t}\) with history appending),
this induces exactly \(\tau_{\mathrm{seq}}(x_{\textcolor{sthlmRed}{2},t},d_{\textcolor{sthlmRed}{2},t})\), yielding the claimed componentwise identity.
Finally, the conditional expected stage reward is \(\rho_{\mathrm{seq}}(x_{\textcolor{sthlmRed}{2},t},d_{\textcolor{sthlmRed}{2},t})\) by definition of \(\rho_{\mathrm{seq}}\).
\end{proof}

\begin{proof}[Proof of \Cref{thm:sufficiency:sequential:occupancy:state}]
Fix a sub-stage \((\textcolor{sthlmRed}{i},t)\) and consider two exhaustive information states
\(\theta_{\textcolor{sthlmRed}{i},t}\) and \(\theta'_{\textcolor{sthlmRed}{i},t}\) that induce the same sequential occupancy \(x_{\textcolor{sthlmRed}{i},t}\).
By Lemma~\ref{lem:seq_sufficiency_zero_dot}, for any admissible one-step decision rule \(d_{\textcolor{sthlmRed}{i},t}\),
the conditional expected one-step reward and the conditional law of the next occupancy coincide under
\(\theta_{\textcolor{sthlmRed}{i},t}\) and \(\theta'_{\textcolor{sthlmRed}{i},t}\), and depend on the exhaustive past only through
\((x_{\textcolor{sthlmRed}{i},t},d_{\textcolor{sthlmRed}{i},t})\).

Fix any continuation policy \(\pi_{t:}\) in \(\mathcal{M}_{\mathrm{seq}}\) from sub-stage \((\textcolor{sthlmRed}{i},t)\) onward.
We show by induction on the remaining number of sub-stages that the expected cumulative discounted return induced by \(\pi_{t:}\)
is the same when starting from \(\theta_{\textcolor{sthlmRed}{i},t}\) or from \(\theta'_{\textcolor{sthlmRed}{i},t}\).
The base case holds at termination.
For the induction step, let \(d_{\textcolor{sthlmRed}{i},t}\) be the decision rule prescribed by \(\pi_{t:}\) at \((\textcolor{sthlmRed}{i},t)\)
as a function of \(h_{\textcolor{sthlmRed}{i},t}\).
Lemma~\ref{lem:seq_sufficiency_zero_dot} gives the same one-step expected reward and the same conditional law of the next occupancy under
\(\theta_{\textcolor{sthlmRed}{i},t}\) and \(\theta'_{\textcolor{sthlmRed}{i},t}\); conditioning on that next occupancy and applying the induction hypothesis to the
remaining continuation under \(\pi_{t:}\) yields equality of the full expected return.
Hence, for any continuation policy \(\pi_{t:}\), the induced expected return depends on the exhaustive past only through the current occupancy \(x_{\textcolor{sthlmRed}{i},t}\).

Taking the appropriate \(\max\)/\(\min\) over continuation policies preserves this invariance, so the optimal value depends on
\(\theta_{\textcolor{sthlmRed}{i},t}\) only through \(x_{\textcolor{sthlmRed}{i},t}\).
Therefore sequential occupancy states are sufficient to compute \(v^*_{\mathrm{seq}}\) in \(\mathcal{M}_{\mathrm{seq}}\).
Finally, at stage boundaries the two sub-stages compose to the original simultaneous stage, so these optimal boundary values coincide with
the optimal values in the simultaneous zs-POSG \(\mathcal{M}\) (Section~\ref{subsec:sequential_bellman}).
\end{proof}

% ------------------------------------------------------------
\subsection{Compactness and attainment}
% ------------------------------------------------------------

\begin{lemma}[Compactness of decision-rule sets]
\label{lem:compact_camera_ready}
For each \(t\), the set \(\mathcal{D}_{\textcolor{sthlmRed}{i},t}\) is compact.
\end{lemma}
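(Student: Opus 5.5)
The plan is to identify \(\mathcal{D}_{\textcolor{sthlmRed}{i},t}\) with a finite product of probability simplices and then invoke Tychonoff in its finite form, which here is just Heine--Borel. Since \(\mathcal{S}\), \(\mathcal{A}_{\textcolor{sthlmRed}{i}}\) and \(\mathcal{Z}_{\textcolor{sthlmRed}{i}}\) are finite and the horizon is finite, the set \(\mathcal{H}_{\textcolor{sthlmRed}{i},t}\) of private histories of length \(t\) is finite. Its cardinality is at most \((|\mathcal{A}_{\textcolor{sthlmRed}{i}}||\mathcal{Z}_{\textcolor{sthlmRed}{i}}|)^{t}\).

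A decision rule \(d_{\textcolor{sthlmRed}{i},t}:\mathcal{H}_{\textcolor{sthlmRed}{i},t}\to\Delta(\mathcal{A}_{\textcolor{sthlmRed}{i}})\) is therefore exactly a point of the product
\[
\mathcal{D}_{\textcolor{sthlmRed}{i},t}\cong\prod_{h_{\textcolor{sthlmRed}{i},t}\in\mathcal{H}_{\textcolor{sthlmRed}{i},t}}\Delta(\mathcal{A}_{\textcolor{sthlmRed}{i}})\subset\mathbb{R}^{\mathcal{H}_{\textcolor{sthlmRed}{i},t}\times\mathcal{A}_{\textcolor{sthlmRed}{i}}}.
\]
We equip it with the Euclidean (equivalently, product) topology. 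This is the topology under which \(\tau_{\mathrm{seq}}\) and \(\rho_{\mathrm{seq}}\) are continuous, being polynomial (indeed multilinear) in the entries \(d(a\mid h)\).

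The key steps, in order, are as follows.
\begin{enumerate}
\item Each simplex \(\Delta(\mathcal{A}_{\textcolor{sthlmRed}{i}})=\{\mu\in\mathbb{R}^{\mathcal{A}_{\textcolor{sthlmRed}{i}}}:\mu\ge 0,\ \sum_a\mu(a)=1\}\) is closed, as a finite intersection of preimages of closed sets under continuous linear maps. It is also bounded, since it lies in \([0,1]^{\mathcal{A}_{\textcolor{sthlmRed}{i}}}\), so it is compact by Heine--Borel.
\item The finite product of compact sets is compact. Alternatively, directly: the product is closed in the finite-dimensional space and lies in \([0,1]^{\mathcal{H}_{\textcolor{sthlmRed}{i},t}\times\mathcal{A}_{\textcolor{sthlmRed}{i}}}\), hence is compact.
\end{enumerate}

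There is no genuine obstacle. The only point needing care is the finiteness of \(\mathcal{H}_{\textcolor{sthlmRed}{i},t}\), which relies on finite action and observation sets and on \(t\le\ell\). It also helps to note that the chosen topology is the one used later for attainment, via Weierstrass applied to the continuous objective \(d\mapsto\rho_{\mathrm{seq}}(x,d)+\gamma_{\textcolor{sthlmRed}{i}}v^*_{\mathrm{seq}}(\tau_{\mathrm{seq}}(x,d))\).
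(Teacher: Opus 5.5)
Your proof is correct and takes the same route as the paper, whose one-line proof also observes that, since \(\mathcal{H}_{i,t}\) and \(\mathcal{A}_i\) are finite, \(\mathcal{D}_{i,t}\) is a finite product of simplices \(\Delta(\mathcal{A}_i)\) and hence compact. You additionally spell out the Heine--Borel details (closedness and boundedness of each simplex, and compactness of the finite product) that the paper leaves implicit.
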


\begin{proof}
Since \(\mathcal{H}_{\textcolor{sthlmRed}{i},t}\) and \(\mathcal{A}_{\textcolor{sthlmRed}{i}}\) are finite, a decision rule is a finite product of simplices
\(\Delta(\mathcal{A}_{\textcolor{sthlmRed}{i}})\), hence compact.
\end{proof}

% ------------------------------------------------------------
\subsection{Sequential Bellman equations by backward induction}
% ------------------------------------------------------------

\begin{proof}[Proof of \Cref{thm:bellman_seq}]
We prove the sequential Bellman equalities by backward induction on \(t\), using the standard \emph{one-step decomposition} of the finite-horizon objective.

\smallskip\noindent
\textbf{Base.}
At \(t=\ell\), no reward can be collected in the remaining horizon, hence
\(v^*_{\mathrm{seq}}(x_{\textcolor{sthlmRed}{1},\ell})=0\).

\smallskip\noindent
\textbf{Inductive hypothesis.}
Fix \(t\in\{0,\dots,\ell-1\}\) and assume that \(v^*_{\mathrm{seq}}(x_{\textcolor{sthlmRed}{1},t+1})\) is defined for every valid boundary occupancy at time \(t+1\),
and that it is continuous on that set.

\smallskip\noindent
\textbf{Step 1: intermediate sub-stage \((\textcolor{sthlmRed}{2},t)\).}
Fix a valid \(x_{\textcolor{sthlmRed}{2},t}\).
For any \(d_{\textcolor{sthlmRed}{2},t}\in\mathcal{D}_{\textcolor{sthlmRed}{2},t}\),
Lemma~\ref{lem:seq_sufficiency_zero_dot} ensures that
\(\tau_{\mathrm{seq}}(x_{\textcolor{sthlmRed}{2},t},d_{\textcolor{sthlmRed}{2},t})\) is a valid boundary occupancy at time \(t+1\),
so \(v^*_{\mathrm{seq}}\bigl(\tau_{\mathrm{seq}}(x_{\textcolor{sthlmRed}{2},t},d_{\textcolor{sthlmRed}{2},t})\bigr)\) is well-defined.
Consider the finite-horizon return from \((\textcolor{sthlmRed}{2},t)\) under a choice \(d_{\textcolor{sthlmRed}{2},t}\) and an arbitrary continuation policy \(\pi_{\textcolor{sthlmRed}{2},t+1:}\):
by the Markov property (Lemma~\ref{lem:seq_sufficiency_zero_dot}) and the definition of the sequential primitives,
it decomposes as the one-step reward plus the discounted continuation value starting from the next boundary occupancy.
Optimising over continuations therefore yields the \emph{principle of optimality}:
\[
v^*_{\mathrm{seq}}(x_{\textcolor{sthlmRed}{2},t})
=
\min_{d_{\textcolor{sthlmRed}{2},t}\in\mathcal{D}_{\textcolor{sthlmRed}{2},t}}
\left[
\rho_{\mathrm{seq}}(x_{\textcolor{sthlmRed}{2},t}, d_{\textcolor{sthlmRed}{2},t})
+\gamma_{\textcolor{sthlmRed}{2}} \,v^{*}_{\mathrm{seq}}\!\bigl(\tau_{\mathrm{seq}}(x_{\textcolor{sthlmRed}{2},t}, d_{\textcolor{sthlmRed}{2},t})\bigr)
\right].
\]
Moreover, the map \(d_{\textcolor{sthlmRed}{2},t}\mapsto \rho_{\mathrm{seq}}(x_{\textcolor{sthlmRed}{2},t},d_{\textcolor{sthlmRed}{2},t})\) is affine,
\(d_{\textcolor{sthlmRed}{2},t}\mapsto \tau_{\mathrm{seq}}(x_{\textcolor{sthlmRed}{2},t},d_{\textcolor{sthlmRed}{2},t})\) is affine,
and \(v^*_{\mathrm{seq}}\) is continuous by the inductive hypothesis; hence the bracketed objective is continuous.
Since \(\mathcal{D}_{\textcolor{sthlmRed}{2},t}\) is compact (Lemma~\ref{lem:compact_camera_ready}), the minimum is attained.

\smallskip\noindent
\textbf{Step 2: boundary sub-stage \((\textcolor{sthlmRed}{1},t)\).}
By Step~1, \(v^*_{\mathrm{seq}}(x_{\textcolor{sthlmRed}{2},t})\) is well-defined for every valid intermediate occupancy at time \(t\). % (+1 line)
Fix a valid \(x_{\textcolor{sthlmRed}{1},t}\).
For any \(d_{\textcolor{sthlmRed}{1},t}\in\mathcal{D}_{\textcolor{sthlmRed}{1},t}\),
Lemma~\ref{lem:seq_sufficiency_zero_dot} ensures that
\(\tau_{\mathrm{seq}}(x_{\textcolor{sthlmRed}{1},t},d_{\textcolor{sthlmRed}{1},t})\) is a valid intermediate occupancy at time \(t\),
so \(v^*_{\mathrm{seq}}\bigl(\tau_{\mathrm{seq}}(x_{\textcolor{sthlmRed}{1},t},d_{\textcolor{sthlmRed}{1},t})\bigr)\) is well-defined.
By construction \(\rho_{\mathrm{seq}}(x_{\textcolor{sthlmRed}{1},t},d_{\textcolor{sthlmRed}{1},t})\equiv 0\),
and by the same one-step decomposition/optimal-continuation argument as in Step~1,
\[
v^*_{\mathrm{seq}}(x_{\textcolor{sthlmRed}{1},t})
=
\max_{d_{\textcolor{sthlmRed}{1},t}\in\mathcal{D}_{\textcolor{sthlmRed}{1},t}}
\left[
\rho_{\mathrm{seq}}(x_{\textcolor{sthlmRed}{1},t}, d_{\textcolor{sthlmRed}{1},t})
+
\gamma_{\textcolor{sthlmRed}{1}}
v^{*}_{\mathrm{seq}}\!\bigl(\tau_{\mathrm{seq}}(x_{\textcolor{sthlmRed}{1},t}, d_{\textcolor{sthlmRed}{1},t})\bigr)
\right],
\]
with \(\gamma_{\textcolor{sthlmRed}{1}}=1\).
As above, the objective is continuous in \(d_{\textcolor{sthlmRed}{1},t}\) (affine \(\tau_{\mathrm{seq}}\) composed with a continuous \(v^*_{\mathrm{seq}}\)),
and \(\mathcal{D}_{\textcolor{sthlmRed}{1},t}\) is compact, so the maximum is attained.

\smallskip\noindent
\textbf{Conclusion.}
The base case and Steps~1--2 establish the claimed Bellman equalities for every \(t<\ell\) and every transient sub-stage occupancy.
\end{proof}

\section{Geometry of Optimal Sequential Value Functions --- Proof Theorem \ref{thm:geometry} }
\label{app:proof:geometry}

\begin{proof}[Proof of \Cref{thm:geometry}]
Fix a sub-stage \(\textcolor{sthlmRed}{i}\in\{1,2\}\) at some stage \(t\).
An occupancy \(x_{\textcolor{sthlmRed}{i}}\) is a distribution over \((\omega,h_{\textcolor{sthlmRed}{2},t})\) with \(\omega\in\Omega_{\textcolor{sthlmRed}{i}}\).
For each \(h_{\textcolor{sthlmRed}{2},t}\in\mathcal H_{\textcolor{sthlmRed}{2},t}\), define the unnormalised slice
\(b_{h_{\textcolor{sthlmRed}{2},t}}\in\mathbb{R}_+^{\Omega_{\textcolor{sthlmRed}{i}}}\) by
\(b_{h_{\textcolor{sthlmRed}{2},t}}(\omega)\doteq x_{\textcolor{sthlmRed}{i}}(\omega,h_{\textcolor{sthlmRed}{2},t})\).

\smallskip\noindent
\textbf{Inner pessimistic layer.}
Fix any player~\(\textcolor{sthlmRed}{1}\) continuation policy \(\pi_{\textcolor{sthlmRed}{1},t:}\).
For any player~\(\textcolor{sthlmRed}{2}\) continuation tree \(\pi_{\mathrm{tree},\textcolor{sthlmRed}{2},t}\) (rooted at the empty continuation history),
the definition of value-vector pairs induces a value vector
\(\alpha^{\pi_{\textcolor{sthlmRed}{1},t:},\pi_{\mathrm{tree},\textcolor{sthlmRed}{2},t}}\in\mathbb{R}^{\Omega_{\textcolor{sthlmRed}{i}}}\).
Since the horizon and all action/observation sets are finite, the set of deterministic player~\(\textcolor{sthlmRed}{2}\) continuation trees from stage \(t\) is finite; moreover, because the continuation value is linear in the player~\(\textcolor{sthlmRed}{2}\) continuation, the minimum over (possibly mixed) continuations is attained at a deterministic tree.
Let \(\Gamma_{\!\textcolor{sthlmRed}{2}}\subset\mathbb{R}^{\Omega_{\textcolor{sthlmRed}{i}}}\) be the (finite) set of vectors
\(\alpha^{\pi_{\textcolor{sthlmRed}{1},t:},\pi_{\mathrm{tree},\textcolor{sthlmRed}{2},t}}\) induced by these deterministic trees.
Then, for each \(h_{\textcolor{sthlmRed}{2},t}\),
\[
\min_{\pi_{\mathrm{tree},\textcolor{sthlmRed}{2},t}\ \emph{admissible}}
\Bigl\langle b_{h_{\textcolor{sthlmRed}{2},t}}, \alpha^{\pi_{\textcolor{sthlmRed}{1},t:},\pi_{\mathrm{tree},\textcolor{sthlmRed}{2},t}}\Bigr\rangle
=
\min_{\alpha\in\Gamma_{\!\textcolor{sthlmRed}{2}}}
\langle b_{h_{\textcolor{sthlmRed}{2},t}},\alpha\rangle.
\]
Moreover, since \(\pi_{\textcolor{sthlmRed}{2},t:}\) is a family of decision rules indexed by player~\(\textcolor{sthlmRed}{2}\) histories, it selects, for each realised \(h_{\textcolor{sthlmRed}{2},t}\), a re-rooted continuation tree
\(\pi_{\mathrm{tree},\textcolor{sthlmRed}{2},t}=\pi_{\textcolor{sthlmRed}{2},t:}(h_{\textcolor{sthlmRed}{2},t})\).
Thus the player~\(\textcolor{sthlmRed}{2}\)-optimal continuation value under \(\pi_{\textcolor{sthlmRed}{1},t:}\) equals
\[
\textstyle
\sum_{h_{\textcolor{sthlmRed}{2},t}}
\min_{\alpha\in\Gamma_{\!\textcolor{sthlmRed}{2}}}
\langle b_{h_{\textcolor{sthlmRed}{2},t}},\alpha\rangle
\;=\;
\mathtt{Val}_{\Gamma_{\!\textcolor{sthlmRed}{2}}}(x_{\textcolor{sthlmRed}{i}}).
\]

\smallskip\noindent
\textbf{Outer optimistic layer.}
Let \(\Gamma_{\!\textcolor{sthlmRed}{1}}\) be a family of such finite sets \(\Gamma_{\!\textcolor{sthlmRed}{2}}\) representing the non-dominated player~\(\textcolor{sthlmRed}{1}\) continuations.
Optimising over player~\(\textcolor{sthlmRed}{1}\) continuations is therefore equivalent to maximising over \(\Gamma_{\!\textcolor{sthlmRed}{2}}\in\Gamma_{\!\textcolor{sthlmRed}{1}}\), giving
\[
v^*_{\mathrm{seq}}(x_{\textcolor{sthlmRed}{i}})
=
\max_{\Gamma_{\!\textcolor{sthlmRed}{2}}\in\Gamma_{\!\textcolor{sthlmRed}{1}}}
\mathtt{Val}_{\Gamma_{\!\textcolor{sthlmRed}{2}}}(x_{\textcolor{sthlmRed}{i}}),
\]
as claimed.
\end{proof}

\section{Greedy Selection Operators --- Proofs of Lemmas \ref{lem:greedy_x1}--\ref{lem:greedy_x2}}
\label{app:proof:greedy_selection}

\subsection{Proof of Lemma \ref{lem:greedy_x1}}

\begin{proof}
Fix a boundary occupancy \(x_{\textcolor{sthlmRed}{1}}\) at sub-stage \((\textcolor{sthlmRed}{1},t)\).
By construction, \(\rho_{\mathrm{seq}}(x_{\textcolor{sthlmRed}{1}},d_{\textcolor{sthlmRed}{1}})\equiv 0\); hence any greedy decision rule of player~\(\textcolor{sthlmRed}{1}\) maximises
\begin{equation}
\label{eq:greedy-x1-start-icmlclean}
d_{\textcolor{sthlmRed}{1}}\ \longmapsto\ v^*_{\mathrm{seq}}\!\bigl(\tau_{\mathrm{seq}}(x_{\textcolor{sthlmRed}{1}},d_{\textcolor{sthlmRed}{1}})\bigr).
\end{equation}
Let \(x_{\textcolor{sthlmRed}{2}}\doteq \tau_{\mathrm{seq}}(x_{\textcolor{sthlmRed}{1}},d_{\textcolor{sthlmRed}{1}})\).

At the intermediate sub-stage \((\textcolor{sthlmRed}{2},t)\), the optimal value admits the envelope representation
\begin{equation}
\label{eq:geometry-max-icmlclean}
v^*_{\mathrm{seq}}(x_{\textcolor{sthlmRed}{2}})
=
\max_{\Gamma_{\!\textcolor{sthlmRed}{2}}\in\Gamma_{\!\textcolor{sthlmRed}{1}}^{*}}
\mathtt{Val}_{\Gamma_{\!\textcolor{sthlmRed}{2}}}(x_{\textcolor{sthlmRed}{2}}),
\qquad
\mathtt{Val}_{\Gamma_{\!\textcolor{sthlmRed}{2}}}(x_{\textcolor{sthlmRed}{2}})
\doteq
\sum_{h_{\textcolor{sthlmRed}{2}}}
\min_{\alpha\in\Gamma_{\!\textcolor{sthlmRed}{2}}}
\langle b_{h_{\textcolor{sthlmRed}{2}}},\alpha\rangle.
\end{equation}

We justify that the outer optimisation in \eqref{eq:geometry-max-icmlclean} is a \(\max\) (not merely a \(\sup\)).
Let \(\Pi_{\textcolor{sthlmRed}{1},t:}\) denote the set of player~\(\textcolor{sthlmRed}{1}\) continuation policies from stage \(t\) onward; since the horizon is finite and each decision rule is a simplex over a finite action set, \(\Pi_{\textcolor{sthlmRed}{1},t:}\) is a finite product of simplices and therefore compact.
For each \(\pi_{\textcolor{sthlmRed}{1},t:}\in\Pi_{\textcolor{sthlmRed}{1},t:}\), let \(\Gamma_{\!\textcolor{sthlmRed}{2},\pi_{\textcolor{sthlmRed}{1},t:}} \) be the associated finite set of \(\alpha\)-vectors induced by the finite index set \(\mathcal{K}\) of deterministic follower policy trees, and define
\(\Gamma_{\!\textcolor{sthlmRed}{1}}^{*}\doteq \{\Gamma_{\!\textcolor{sthlmRed}{2},\pi_{\textcolor{sthlmRed}{1},t:}} : \pi_{\textcolor{sthlmRed}{1},t:}\in\Pi_{\textcolor{sthlmRed}{1},t:}\}\).
For each fixed reachable \(x_{\textcolor{sthlmRed}{2}}\), the real-valued map
\begin{equation}
\label{eq:continuation-to-value-icmlclean}
\pi_{\textcolor{sthlmRed}{1},t:}\ \longmapsto\ \mathtt{Val}_{\Gamma_{\!\textcolor{sthlmRed}{2},\pi_{\textcolor{sthlmRed}{1},t:}}}(x_{\textcolor{sthlmRed}{2}})
\end{equation}
is continuous, because \(\Gamma_{\!\textcolor{sthlmRed}{2},\pi_{\textcolor{sthlmRed}{1},t:}}=\{\alpha^{\kappa}_{\pi_{\textcolor{sthlmRed}{1},t:}}:\kappa\in\mathcal{K}\}\) and, for each \(\kappa\in\mathcal{K}\),
\(\pi_{\textcolor{sthlmRed}{1},t:}\mapsto \langle b_{h_{\textcolor{sthlmRed}{2}}},\alpha^{\kappa}_{\pi_{\textcolor{sthlmRed}{1},t:}}\rangle\) is continuous, while the minimum over finitely many continuous functions preserves continuity.
Therefore, by Weierstrass, there exists \(\pi_{\textcolor{sthlmRed}{1},t:}^*\in\Pi_{\textcolor{sthlmRed}{1},t:}\) attaining the maximum of \eqref{eq:continuation-to-value-icmlclean}, and the corresponding
\(\Gamma_{\!\textcolor{sthlmRed}{2},\pi_{\textcolor{sthlmRed}{1},t:}^*}\in\Gamma_{\!\textcolor{sthlmRed}{1}}^{*}\)
attains the outer maximum in \eqref{eq:geometry-max-icmlclean}.

Substituting \eqref{eq:geometry-max-icmlclean} into \eqref{eq:greedy-x1-start-icmlclean} yields
\[
\argmax_{d_{\textcolor{sthlmRed}{1}}\in\mathcal{D}_{\textcolor{sthlmRed}{1}}}
v^*_{\mathrm{seq}}\!\bigl(\tau_{\mathrm{seq}}(x_{\textcolor{sthlmRed}{1}},d_{\textcolor{sthlmRed}{1}})\bigr)
=
\argmax_{d_{\textcolor{sthlmRed}{1}}\in\mathcal{D}_{\textcolor{sthlmRed}{1}}}
\max_{\Gamma_{\!\textcolor{sthlmRed}{2}}\in\Gamma_{\!\textcolor{sthlmRed}{1}}^{*}}
\mathtt{Val}_{\Gamma_{\!\textcolor{sthlmRed}{2}}}\!\bigl(\tau_{\mathrm{seq}}(x_{\textcolor{sthlmRed}{1}},d_{\textcolor{sthlmRed}{1}})\bigr),
\]
which is the claim.

Finally, the \(\argmax\) over \(d_{\textcolor{sthlmRed}{1}}\) is attained since \(\mathcal{D}_{\textcolor{sthlmRed}{1},t}\) is compact (Lemma~\ref{lem:compact_camera_ready}) and
\(d_{\textcolor{sthlmRed}{1}}\mapsto \tau_{\mathrm{seq}}(x_{\textcolor{sthlmRed}{1}},d_{\textcolor{sthlmRed}{1}})\) is affine, hence
\(d_{\textcolor{sthlmRed}{1}}\mapsto \mathtt{Val}_{\Gamma_{\!\textcolor{sthlmRed}{2}}}(\tau_{\mathrm{seq}}(x_{\textcolor{sthlmRed}{1}},d_{\textcolor{sthlmRed}{1}}))\) is continuous for each fixed \(\Gamma_{\!\textcolor{sthlmRed}{2}}\), and the pointwise maximum over \(\Gamma_{\!\textcolor{sthlmRed}{1}}^{*}\) is upper semicontinuous; therefore a maximiser exists by Weierstrass.
\end{proof}

\subsection{Proof of Lemma \ref{lem:greedy_x2}}

\begin{proof}
Fix \(x_{\textcolor{sthlmRed}{2}}\) at sub-stage \((\textcolor{sthlmRed}{2},t)\).
The greedy step minimises
\(
\rho_{\mathrm{seq}}(x_{\textcolor{sthlmRed}{2}},d_{\textcolor{sthlmRed}{2}})
+\gamma\,v^*_{\mathrm{seq}}(\tau_{\mathrm{seq}}(x_{\textcolor{sthlmRed}{2}},d_{\textcolor{sthlmRed}{2}}))
\)
over \(d_{\textcolor{sthlmRed}{2}}\in\mathcal{D}_{\textcolor{sthlmRed}{2},t}\).
Using the cached representation at \((\textcolor{sthlmRed}{1},t+1)\),
\[
v^*_{\mathrm{seq}}(x_{\textcolor{sthlmRed}{1}})=
\max_{\Gamma'_{\!\textcolor{sthlmRed}{2}}\in\Gamma'_{\!\textcolor{sthlmRed}{1}}}
\sum_{h_{\textcolor{sthlmRed}{2}}}\min_{\alpha\in\Gamma'_{\!\textcolor{sthlmRed}{2}}}
\langle b_{h_{\textcolor{sthlmRed}{2}}}(x_{\textcolor{sthlmRed}{1}}),\alpha\rangle,
\]
and unfolding the one-step recursion at \((\textcolor{sthlmRed}{2},t)\) (reward plus discounted continuation) yields exactly the
back-projected vectors \(q^{\alpha}_{a_{\textcolor{sthlmRed}{2}},z_{\textcolor{sthlmRed}{2}}}\) used in \Cref{eq:LP2}.
Introduce epigraph variables \(v_{h_{\textcolor{sthlmRed}{2}}}\) for the value of each history slice and
\(w^{\Gamma'_{\!\textcolor{sthlmRed}{2}}}_{h_{\textcolor{sthlmRed}{2}},z_{\textcolor{sthlmRed}{2}}}\) for each envelope/observation contribution.
Because \(d_{\textcolor{sthlmRed}{2}}\) enters linearly through \(\tau_{\mathrm{seq}}\), minimising over the simplex-valued rule
\(d_{\textcolor{sthlmRed}{2}}(\,\cdot\,\mid h_{\textcolor{sthlmRed}{2}})\) is enforced by the pointwise constraints
\(
v_{h_{\textcolor{sthlmRed}{2}}}\le \sum_{\Gamma'_{\!\textcolor{sthlmRed}{2}}}\sum_{z_{\textcolor{sthlmRed}{2}}}
w^{\Gamma'_{\!\textcolor{sthlmRed}{2}}}_{h_{\textcolor{sthlmRed}{2}},z_{\textcolor{sthlmRed}{2}}}
\ \forall a_{\textcolor{sthlmRed}{2}}
\),
and the inner \(\min_{\alpha\in\Gamma'_{\!\textcolor{sthlmRed}{2}}}\) is enforced by
\(
w^{\Gamma'_{\!\textcolor{sthlmRed}{2}}}_{h_{\textcolor{sthlmRed}{2}},z_{\textcolor{sthlmRed}{2}}}
\le
\lambda(\Gamma'_{\!\textcolor{sthlmRed}{2}})\langle b_{h_{\textcolor{sthlmRed}{2}}},q^{\alpha}_{a_{\textcolor{sthlmRed}{2}},z_{\textcolor{sthlmRed}{2}}}\rangle
\ \forall \alpha\in\Gamma'_{\!\textcolor{sthlmRed}{2}}.
\)
The simplex constraint \(\sum_{\Gamma'_{\!\textcolor{sthlmRed}{2}}}\lambda(\Gamma'_{\!\textcolor{sthlmRed}{2}})=1,\ \lambda\ge 0\)
makes \(\lambda\) a mixture (dual selector) over cached envelopes.
Thus the greedy problem is equivalently \(\text{LP}_2\) (maximising \(\sum_{h_{\textcolor{sthlmRed}{2}}}v_{h_{\textcolor{sthlmRed}{2}}}\)),
and by strong LP duality (feasible and bounded since \(\Gamma'_{\!\textcolor{sthlmRed}{1}}\) is finite and all payoffs/vectors are bounded),
its dual is precisely the stated max--min form in the lemma.
\end{proof}

\section{Updating Value Function Representations}
\label{sec:updating_value_function_representations}

\begin{proposition}[Sub-stage \(\textcolor{sthlmRed}{1}\): witness-vector update]
\label{prop:gamma_update_x1}
Fix an occupancy \(x_{\textcolor{sthlmRed}{1}}\) at sub-stage \(\textcolor{sthlmRed}{1}\) and a cached envelope
\(\Gamma_{\!\textcolor{sthlmRed}{2}}\in\Gamma_{\!\textcolor{sthlmRed}{1}}\),
where \(\Gamma_{\!\textcolor{sthlmRed}{2}}\subset \mathbb{R}^{\mathcal{S}\times\mathcal{H}_{\textcolor{sthlmRed}{1}}\times\mathcal{A}_{\textcolor{sthlmRed}{1}}}\)
indexes vectors by \((s,h_{\textcolor{sthlmRed}{1}},a_{\textcolor{sthlmRed}{1}})\) at the intermediate sub-stage.
For any decision rule \(d_{\textcolor{sthlmRed}{1}}\), define the induced intermediate slices
\[
b^{d_{\textcolor{sthlmRed}{1}}}_{h_{\textcolor{sthlmRed}{2}}}(s,h_{\textcolor{sthlmRed}{1}},a_{\textcolor{sthlmRed}{1}})
\doteq
x_{\textcolor{sthlmRed}{1}}(s,h_{\textcolor{sthlmRed}{1}},h_{\textcolor{sthlmRed}{2}})
\,d_{\textcolor{sthlmRed}{1}}(a_{\textcolor{sthlmRed}{1}}\mid h_{\textcolor{sthlmRed}{1}}).
\]
For each \(h_{\textcolor{sthlmRed}{2}}\), choose an arbitrary minimiser (ties broken arbitrarily)
\[
\alpha^{h_{\textcolor{sthlmRed}{2}}}\in
\argmin_{\alpha\in\Gamma_{\!\textcolor{sthlmRed}{2}}}
\sum_{s,h_{\textcolor{sthlmRed}{1}},a_{\textcolor{sthlmRed}{1}}}
b^{d_{\textcolor{sthlmRed}{1}}}_{h_{\textcolor{sthlmRed}{2}}}(s,h_{\textcolor{sthlmRed}{1}},a_{\textcolor{sthlmRed}{1}})
\,\alpha(s,h_{\textcolor{sthlmRed}{1}},a_{\textcolor{sthlmRed}{1}}),
\]
and define the generated witness vector \(\beta_{d_{\textcolor{sthlmRed}{1}},\Gamma_{\!\textcolor{sthlmRed}{2}}}\in
\mathbb{R}^{\mathcal{S}\times\mathcal{H}_{\textcolor{sthlmRed}{1}}\times\mathcal{H}_{\textcolor{sthlmRed}{2}}}\) by
\[
\beta_{d_{\textcolor{sthlmRed}{1}},\Gamma_{\!\textcolor{sthlmRed}{2}}}(s,h_{\textcolor{sthlmRed}{1}},h_{\textcolor{sthlmRed}{2}})
\doteq
\sum_{a_{\textcolor{sthlmRed}{1}}}
d_{\textcolor{sthlmRed}{1}}(a_{\textcolor{sthlmRed}{1}}\mid h_{\textcolor{sthlmRed}{1}})
\,\alpha^{h_{\textcolor{sthlmRed}{2}}}(s,h_{\textcolor{sthlmRed}{1}},a_{\textcolor{sthlmRed}{1}}).
\]
Then, for every \(d_{\textcolor{sthlmRed}{1}}\),
\[
\mathtt{Val}_{\Gamma_{\!\textcolor{sthlmRed}{2}}}\!\bigl(\tau_{\mathrm{seq}}(x_{\textcolor{sthlmRed}{1}},d_{\textcolor{sthlmRed}{1}})\bigr)
=
\sum_{s,h_{\textcolor{sthlmRed}{1}},h_{\textcolor{sthlmRed}{2}}}
x_{\textcolor{sthlmRed}{1}}(s,h_{\textcolor{sthlmRed}{1}},h_{\textcolor{sthlmRed}{2}})
\,\beta_{d_{\textcolor{sthlmRed}{1}},\Gamma_{\!\textcolor{sthlmRed}{2}}}(s,h_{\textcolor{sthlmRed}{1}},h_{\textcolor{sthlmRed}{2}})
=
\langle x_{\textcolor{sthlmRed}{1}},\beta_{d_{\textcolor{sthlmRed}{1}},\Gamma_{\!\textcolor{sthlmRed}{2}}}\rangle.
\]
In particular, if \((d_{\textcolor{sthlmRed}{1}}^*,\Gamma_{\!\textcolor{sthlmRed}{2}}^*)\) is the maximiser selected by \Cref{lem:greedy_x1},
then \(\beta_{d_{\textcolor{sthlmRed}{1}}^*,\Gamma_{\!\textcolor{sthlmRed}{2}}^*}\) is a valid witness vector to update the boundary envelope at \(x_{\textcolor{sthlmRed}{1}}\).
\end{proposition}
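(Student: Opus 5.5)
The identity is a direct algebraic unfolding, so I would prove it by expanding both sides and matching them history by history. I would first fix \(d_{\textcolor{sthlmRed}{1}}\) and set \(x_{\textcolor{sthlmRed}{2}}=\tau_{\mathrm{seq}}(x_{\textcolor{sthlmRed}{1}},d_{\textcolor{sthlmRed}{1}})\). By the deterministic pushforward of Definition~\ref{def:sequential:occupancy:state}, the slice of \(x_{\textcolor{sthlmRed}{2}}\) at \(h_{\textcolor{sthlmRed}{2}}\) is exactly \(b^{d_{\textcolor{sthlmRed}{1}}}_{h_{\textcolor{sthlmRed}{2}}}\). Hence, by the definition of \(\mathtt{Val}\) used in Theorem~\ref{thm:geometry},
\[
\mathtt{Val}_{\Gamma_{\!\textcolor{sthlmRed}{2}}}(x_{\textcolor{sthlmRed}{2}})=\textstyle\sum_{h_{\textcolor{sthlmRed}{2}}}\min_{\alpha\in\Gamma_{\!\textcolor{sthlmRed}{2}}}\langle b^{d_{\textcolor{sthlmRed}{1}}}_{h_{\textcolor{sthlmRed}{2}}},\alpha\rangle .
\]

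Next I would replace each minimum by its value at the chosen minimiser \(\alpha^{h_{\textcolor{sthlmRed}{2}}}\). The minimum exists because \(\Gamma_{\!\textcolor{sthlmRed}{2}}\) is finite. Because the sum runs over \(h_{\textcolor{sthlmRed}{2}}\) and the minimiser may depend on \(h_{\textcolor{sthlmRed}{2}}\), this gives
\[
\textstyle\sum_{h_{\textcolor{sthlmRed}{2}}}\sum_{s,h_{\textcolor{sthlmRed}{1}},a_{\textcolor{sthlmRed}{1}}} x_{\textcolor{sthlmRed}{1}}(s,h_{\textcolor{sthlmRed}{1}},h_{\textcolor{sthlmRed}{2}})\,d_{\textcolor{sthlmRed}{1}}(a_{\textcolor{sthlmRed}{1}}\mid h_{\textcolor{sthlmRed}{1}})\,\alpha^{h_{\textcolor{sthlmRed}{2}}}(s,h_{\textcolor{sthlmRed}{1}},a_{\textcolor{sthlmRed}{1}}).
\]
I would then pull \(x_{\textcolor{sthlmRed}{1}}\) out of the inner sum over \(a_{\textcolor{sthlmRed}{1}}\), since it does not depend on \(a_{\textcolor{sthlmRed}{1}}\). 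The remaining inner sum is by definition \(\beta_{d_{\textcolor{sthlmRed}{1}},\Gamma_{\!\textcolor{sthlmRed}{2}}}(s,h_{\textcolor{sthlmRed}{1}},h_{\textcolor{sthlmRed}{2}})\), and this yields \(\langle x_{\textcolor{sthlmRed}{1}},\beta\rangle\). Tie-breaking does not matter, since every minimiser attains the same minimum value.

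For the ``in particular'' part, I would apply the identity at \((d_{\textcolor{sthlmRed}{1}}^*,\Gamma_{\!\textcolor{sthlmRed}{2}}^*)\). Lemma~\ref{lem:greedy_x1} together with Theorem~\ref{thm:bellman_seq} (where \(\rho_{\mathrm{seq}}=0\) and \(\gamma_{\textcolor{sthlmRed}{1}}=1\)) shows that \(\langle x_{\textcolor{sthlmRed}{1}},\beta^*\rangle\) equals the greedy backed-up value at \(x_{\textcolor{sthlmRed}{1}}\).

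The step needing most care is validity of \(\beta^*\) as a lower-bounding hyperplane away from the generating point \(x_{\textcolor{sthlmRed}{1}}\). I would handle it by noting that for any other occupancy \(x'_{\textcolor{sthlmRed}{1}}\), the vectors \(\alpha^{h_{\textcolor{sthlmRed}{2}}}\) are feasible but not necessarily minimising there. This gives \(\langle x'_{\textcolor{sthlmRed}{1}},\beta^*\rangle\ge \mathtt{Val}_{\Gamma^*_{\!\textcolor{sthlmRed}{2}}}(\tau_{\mathrm{seq}}(x'_{\textcolor{sthlmRed}{1}},d^*_{\textcolor{sthlmRed}{1}}))\), with equality at \(x_{\textcolor{sthlmRed}{1}}\).

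Consequently \(\beta^*\) is a supporting hyperplane of the concave tent for the fixed continuation \(d^*_{\textcolor{sthlmRed}{1}}\), and is tight at \(x_{\textcolor{sthlmRed}{1}}\). It can therefore be inserted into the boundary envelope as the \(\alpha\)-vector generated at \(x_{\textcolor{sthlmRed}{1}}\), consistently with the min-over-vectors, max-over-envelopes structure of Theorem~\ref{thm:geometry}.
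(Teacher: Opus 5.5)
Your proof of the identity is correct and is essentially the paper's own argument. You identify the $h_{2}$-slices of $\tau_{\mathrm{seq}}(x_{1},d_{1})$ with $b^{d_{1}}_{h_{2}}$, replace each finite minimum by the chosen minimiser $\alpha^{h_{2}}$, substitute, and regroup the finite sums into $\beta_{d_{1},\Gamma_{2}}$. The paper stops there and reads the ``in particular'' clause as the identity evaluated at $(d_{1}^*,\Gamma_{2}^*)$.

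Your extra off-point paragraph goes beyond the paper, and it contradicts its own framing. The inequality you derive, $\langle x'_{1},\beta^*\rangle\ge\mathtt{Val}_{\Gamma^*_{2}}(\tau_{\mathrm{seq}}(x'_{1},d^*_{1}))$, makes $\beta^*$ an upper tangent of the concave fixed-$d^*_{1}$ tent, not the ``lower-bounding hyperplane'' you announce. That justifies placing $\beta^*$ in a min layer, but it does not show that it keeps a represented value below $v^*_{\mathrm{seq}}$ away from $x_{1}$.
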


\begin{proof}
Fix \(d_{\textcolor{sthlmRed}{1}}\) and \(\Gamma_{\!\textcolor{sthlmRed}{2}}\).
By the definition of \(\mathtt{Val}_{\Gamma_{\!\textcolor{sthlmRed}{2}}}\) at the intermediate occupancy,
\begin{align*}
\mathtt{Val}_{\Gamma_{\!\textcolor{sthlmRed}{2}}}\!\bigl(\tau_{\mathrm{seq}}(x_{\textcolor{sthlmRed}{1}},d_{\textcolor{sthlmRed}{1}})\bigr)
&=
\sum_{h_{\textcolor{sthlmRed}{2}}}
\min_{\alpha\in\Gamma_{\!\textcolor{sthlmRed}{2}}}
\sum_{s,h_{\textcolor{sthlmRed}{1}},a_{\textcolor{sthlmRed}{1}}}
b^{d_{\textcolor{sthlmRed}{1}}}_{h_{\textcolor{sthlmRed}{2}}}(s,h_{\textcolor{sthlmRed}{1}},a_{\textcolor{sthlmRed}{1}})
\,\alpha(s,h_{\textcolor{sthlmRed}{1}},a_{\textcolor{sthlmRed}{1}}).
\end{align*}
Since \(\Gamma_{\!\textcolor{sthlmRed}{2}}\) is finite, each inner minimum is attained; choose
\(\alpha^{h_{\textcolor{sthlmRed}{2}}}\) as an arbitrary minimiser (ties broken arbitrarily).
Substituting the definition of \(b^{d_{\textcolor{sthlmRed}{1}}}_{h_{\textcolor{sthlmRed}{2}}}\) yields
\begin{align*}
\mathtt{Val}_{\Gamma_{\!\textcolor{sthlmRed}{2}}}\!\bigl(\tau_{\mathrm{seq}}(x_{\textcolor{sthlmRed}{1}},d_{\textcolor{sthlmRed}{1}})\bigr)
&=
\sum_{h_{\textcolor{sthlmRed}{2}}}
\sum_{s,h_{\textcolor{sthlmRed}{1}},a_{\textcolor{sthlmRed}{1}}}
x_{\textcolor{sthlmRed}{1}}(s,h_{\textcolor{sthlmRed}{1}},h_{\textcolor{sthlmRed}{2}})
\,d_{\textcolor{sthlmRed}{1}}(a_{\textcolor{sthlmRed}{1}}\mid h_{\textcolor{sthlmRed}{1}})
\,\alpha^{h_{\textcolor{sthlmRed}{2}}}(s,h_{\textcolor{sthlmRed}{1}},a_{\textcolor{sthlmRed}{1}}).
\end{align*}
Reordering the finite sums and using the definition of
\(\beta_{d_{\textcolor{sthlmRed}{1}},\Gamma_{\!\textcolor{sthlmRed}{2}}}\) gives
\[
\sum_{s,h_{\textcolor{sthlmRed}{1}},h_{\textcolor{sthlmRed}{2}}}
x_{\textcolor{sthlmRed}{1}}(s,h_{\textcolor{sthlmRed}{1}},h_{\textcolor{sthlmRed}{2}})
\,\beta_{d_{\textcolor{sthlmRed}{1}},\Gamma_{\!\textcolor{sthlmRed}{2}}}(s,h_{\textcolor{sthlmRed}{1}},h_{\textcolor{sthlmRed}{2}})
=
\langle x_{\textcolor{sthlmRed}{1}},\beta_{d_{\textcolor{sthlmRed}{1}},\Gamma_{\!\textcolor{sthlmRed}{2}}}\rangle,
\]
which proves the claim.
\end{proof}

\begin{proposition}[Sub-stage \((\textcolor{sthlmRed}{2},t)\): witness-envelope generation]
\label{prop:Gamma2_update}
Fix a reachable \(x_{\textcolor{sthlmRed}{2}}\) and a cached family \(\Gamma_{\!\textcolor{sthlmRed}{1}}'\) from \((\textcolor{sthlmRed}{1},t+1)\).
Let \((\lambda^*,w^*,v^*)\) be an optimal solution of \Cref{eq:LP2}.
For each \((h_{\textcolor{sthlmRed}{2}},a_{\textcolor{sthlmRed}{2}},z_{\textcolor{sthlmRed}{2}},\Gamma_{\!\textcolor{sthlmRed}{2}}')\),
choose any minimiser
\[
\alpha^{h_{\textcolor{sthlmRed}{2}},a_{\textcolor{sthlmRed}{2}},z_{\textcolor{sthlmRed}{2}},\Gamma_{\!\textcolor{sthlmRed}{2}}'}
\in
\argmin_{\alpha\in\Gamma_{\!\textcolor{sthlmRed}{2}}'}
\left\langle b_{h_{\textcolor{sthlmRed}{2}}},\, q^{\alpha}_{a_{\textcolor{sthlmRed}{2}},z_{\textcolor{sthlmRed}{2}}}\right\rangle,
\]
and define, for each \((h_{\textcolor{sthlmRed}{2}},a_{\textcolor{sthlmRed}{2}})\), the vector
\[
\beta^{h_{\textcolor{sthlmRed}{2}},a_{\textcolor{sthlmRed}{2}}}
\doteq
\sum_{z_{\textcolor{sthlmRed}{2}}}\;
\sum_{\Gamma_{\!\textcolor{sthlmRed}{2}}'\in\Gamma_{\!\textcolor{sthlmRed}{1}}'}
\lambda^*(\Gamma_{\!\textcolor{sthlmRed}{2}}')\,
q^{\alpha^{h_{\textcolor{sthlmRed}{2}},a_{\textcolor{sthlmRed}{2}},z_{\textcolor{sthlmRed}{2}},\Gamma_{\!\textcolor{sthlmRed}{2}}'}}_{a_{\textcolor{sthlmRed}{2}},z_{\textcolor{sthlmRed}{2}}}.
\]
Finally, define the (finite) witness envelope at \(x_{\textcolor{sthlmRed}{2}}\) as
\[
\Gamma_{\!\textcolor{sthlmRed}{2},x_{\textcolor{sthlmRed}{2}}}
\doteq
\Bigl\{\beta^{h_{\textcolor{sthlmRed}{2}},a_{\textcolor{sthlmRed}{2}}} \;:\;
h_{\textcolor{sthlmRed}{2}}\in\mathcal{H}_{\textcolor{sthlmRed}{2},t},\;
a_{\textcolor{sthlmRed}{2}}\in\mathcal{A}_{\textcolor{sthlmRed}{2}}\Bigr\}
\subset \mathbb{R}^{\Omega_{\textcolor{sthlmRed}{2}}}.
\]
Then, for all \(h_{\textcolor{sthlmRed}{2}}\) and \(a_{\textcolor{sthlmRed}{2}}\),
\begin{align*}
\left\langle b_{h_{\textcolor{sthlmRed}{2}}},\,\beta^{h_{\textcolor{sthlmRed}{2}},a_{\textcolor{sthlmRed}{2}}}\right\rangle
&=
\sum_{\Gamma_{\!\textcolor{sthlmRed}{2}}'\in\Gamma_{\!\textcolor{sthlmRed}{1}}'}
\lambda^*(\Gamma_{\!\textcolor{sthlmRed}{2}}')\,
\sum_{z_{\textcolor{sthlmRed}{2}}}
\min_{\alpha\in\Gamma_{\!\textcolor{sthlmRed}{2}}'}
\left\langle b_{h_{\textcolor{sthlmRed}{2}}},\, q^{\alpha}_{a_{\textcolor{sthlmRed}{2}},z_{\textcolor{sthlmRed}{2}}}\right\rangle,
\end{align*}
and therefore \(\Gamma_{\!\textcolor{sthlmRed}{2},x_{\textcolor{sthlmRed}{2}}}\) is a valid cached envelope for evaluating
\(\mathtt{Val}_{\Gamma_{\!\textcolor{sthlmRed}{2},x_{\textcolor{sthlmRed}{2}}}}(x_{\textcolor{sthlmRed}{2}})\), as used at predecessor boundary sub-stages (via \Cref{eq:LP1}).
\end{proposition}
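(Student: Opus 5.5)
The identity is a pure linearity-and-attained-minimum computation, carried out per pair \((h_{\textcolor{sthlmRed}{2}},a_{\textcolor{sthlmRed}{2}})\). Fix \(h_{\textcolor{sthlmRed}{2}}\) and \(a_{\textcolor{sthlmRed}{2}}\). Since \(\beta^{h_{\textcolor{sthlmRed}{2}},a_{\textcolor{sthlmRed}{2}}}\) is a finite linear combination of back-projected vectors, bilinearity of \(\langle\cdot,\cdot\rangle\) lets us move \(b_{h_{\textcolor{sthlmRed}{2}}}\) inside both finite sums:
\[
\bigl\langle b_{h_{\textcolor{sthlmRed}{2}}},\beta^{h_{\textcolor{sthlmRed}{2}},a_{\textcolor{sthlmRed}{2}}}\bigr\rangle
=\sum_{z_{\textcolor{sthlmRed}{2}}}\sum_{\Gamma_{\!\textcolor{sthlmRed}{2}}'}\lambda^*(\Gamma_{\!\textcolor{sthlmRed}{2}}')\,\bigl\langle b_{h_{\textcolor{sthlmRed}{2}}},q^{\alpha^{h_{\textcolor{sthlmRed}{2}},a_{\textcolor{sthlmRed}{2}},z_{\textcolor{sthlmRed}{2}},\Gamma_{\!\textcolor{sthlmRed}{2}}'}}_{a_{\textcolor{sthlmRed}{2}},z_{\textcolor{sthlmRed}{2}}}\bigr\rangle .
\]

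Next I replace each inner product by the corresponding minimum. Each \(\Gamma_{\!\textcolor{sthlmRed}{2}}'\) is finite, so the minimum over \(\alpha\in\Gamma_{\!\textcolor{sthlmRed}{2}}'\) of \(\langle b_{h_{\textcolor{sthlmRed}{2}}},q^\alpha_{a_{\textcolor{sthlmRed}{2}},z_{\textcolor{sthlmRed}{2}}}\rangle\) is attained. By construction the selected \(\alpha^{h_{\textcolor{sthlmRed}{2}},a_{\textcolor{sthlmRed}{2}},z_{\textcolor{sthlmRed}{2}},\Gamma_{\!\textcolor{sthlmRed}{2}}'}\) is a minimiser, so each inner product equals that minimum. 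The choice of minimiser when there are ties is irrelevant, because all minimisers give the same value.

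Then, since \(\lambda^*\ge 0\) and the double sum is finite, I exchange the order of the sums over \(z_{\textcolor{sthlmRed}{2}}\) and \(\Gamma_{\!\textcolor{sthlmRed}{2}}'\). This yields exactly the right-hand side of the proposition. The factor \(\lambda^*\) is a constant taken from the LP\(_2\) solution, so it does not depend on \(\alpha\) and passes through the minimum untouched. The only property of \(\lambda^*\) that enters here is nonnegativity; optimality of \(\lambda^*\) is not used for the identity itself.

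For the ``therefore'' clause, I check that \(\Gamma_{\!\textcolor{sthlmRed}{2},x_{\textcolor{sthlmRed}{2}}}\) has the right type for \(\mathtt{Val}\) and that \(\mathtt{Val}\) at \(x_{\textcolor{sthlmRed}{2}}\) returns the LP value.
\begin{enumerate}[(i)]
\item Each \(q^\alpha_{a_{\textcolor{sthlmRed}{2}},z_{\textcolor{sthlmRed}{2}}}\) lies in \(\mathbb{R}^{\Omega_{\textcolor{sthlmRed}{2}}}\), so \(\Gamma_{\!\textcolor{sthlmRed}{2},x_{\textcolor{sthlmRed}{2}}}\) is a finite subset of \(\mathbb{R}^{\Omega_{\textcolor{sthlmRed}{2}}}\). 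It is therefore admissible as an envelope in \(\mathtt{Val}\) and as an input to LP\(_1\).
\item I evaluate \(\mathtt{Val}_{\Gamma_{\!\textcolor{sthlmRed}{2},x_{\textcolor{sthlmRed}{2}}}}(x_{\textcolor{sthlmRed}{2}})=\sum_{h_{\textcolor{sthlmRed}{2}}}\min_{\beta}\langle b_{h_{\textcolor{sthlmRed}{2}}},\beta\rangle\). The minimum runs over all \(\beta^{h',a_{\textcolor{sthlmRed}{2}}}\), including those built for other histories \(h'\neq h_{\textcolor{sthlmRed}{2}}\).
\end{enumerate}
The main obstacle is the cross terms in (ii). For \(\beta^{h',a_{\textcolor{sthlmRed}{2}}}\), the minimisers were chosen for \(b_{h'}\), not \(b_{h_{\textcolor{sthlmRed}{2}}}\), so their value at \(b_{h_{\textcolor{sthlmRed}{2}}}\) is at least the true minimum. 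Hence for each \(h_{\textcolor{sthlmRed}{2}}\) the minimum over the envelope is attained on its own vectors \(\beta^{h_{\textcolor{sthlmRed}{2}},\cdot}\) and equals \(\min_{a_{\textcolor{sthlmRed}{2}}}\) of the identity just proved.

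Finally, I match this with LP\(_2\). At its optimum the epigraph constraints are tight, so \(w^*\) equals \(\lambda^*\) times the per-observation minimum and \(v^*_{h_{\textcolor{sthlmRed}{2}}}\) equals the minimum over \(a_{\textcolor{sthlmRed}{2}}\). This recovers the LP\(_2\) value at \(x_{\textcolor{sthlmRed}{2}}\), which is what validity of the cached envelope requires.
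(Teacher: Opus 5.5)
Your proof of the displayed identity is the paper's argument. Finiteness of each $\Gamma_2'$ makes the $\argmin$ non-empty, so the selected minimisers attain the minimum. Substituting them into $\beta^{h_2,a_2}$ and rearranging the finite sums over $z_2$ and $\Gamma_2'$ gives the right-hand side.

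One bookkeeping correction: the identity uses no property of $\lambda^*$ at all, not even $\lambda^*\ge 0$. In both sides $\lambda^*(\Gamma_2')$ sits outside the minimum, and the minimisers are chosen independently of $\lambda^*$. Nonnegativity is needed only in your cross-term step, to get $\langle b_{h_2},\beta^{h',a_2}\rangle\ge\langle b_{h_2},\beta^{h_2,a_2}\rangle$ for $h'\neq h_2$.

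Where you go beyond the paper is the ``therefore'' clause, which the paper treats as immediate once the identity is shown. You compute $\mathtt{Val}_{\Gamma_{2,x_2}}(x_2)=\sum_{h_2}\min_{a_2}\langle b_{h_2},\beta^{h_2,a_2}\rangle$ by disposing of the vectors built for other histories. You then identify this with the LP$_2$ optimum. This makes precise the sense in which the envelope is valid, and the argument is correct, with two caveats.

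\emph{First caveat: how LP$_2$ is read.} The identification presupposes that the $w$-variables of LP$_2$ carry an $a_2$ index, so that $v_{h_2}$ is bounded by a separate sum for each $a_2$. That is the reading forced by the max--min form of Lemma~\ref{lem:greedy_x2}. Since $d_2(a_2\mid h_2)\ge 0$ factors out of each inner minimum, $\mathtt{Val}_{\Gamma_2'}(\tau_{\mathrm{seq}}(x_2,d_2))$ is linear in $d_2$. Hence minimising over $d_2$ produces $\sum_{h_2}\min_{a_2}$ outside the sums over $\Gamma_2'$ and $z_2$. As printed, however, $w^{\Gamma_2'}_{h_2,z_2}$ has no $a_2$ index and the first constraint family does not depend on $a_2$. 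Taken literally, the LP value is then $\sum_{h_2}\sum_{\Gamma_2'}\lambda(\Gamma_2')\sum_{z_2}\min_{a_2}\min_{\alpha}\langle b_{h_2},q^{\alpha}_{a_2,z_2}\rangle$, and your final equality fails. State explicitly which reading you use.

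\emph{Second caveat: tightness of $w^*$.} Do not assert that every optimal $w^*$ is tight, since components attached to non-minimising actions can be slack at an optimum. What you need, and what holds, is a statement about the optimal value. For fixed $\lambda^*$, the inner maximisation over $(w,v)$ is solved by saturating the bounds. So the optimal value equals $\sum_{h_2}\min_{a_2}\sum_{\Gamma_2'}\lambda^*(\Gamma_2')\sum_{z_2}\min_{\alpha\in\Gamma_2'}\langle b_{h_2},q^{\alpha}_{a_2,z_2}\rangle$.
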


\begin{proof}
Because each \(\Gamma_{\!\textcolor{sthlmRed}{2}}'\) is finite, the \(\argmin\) is non-empty and the chosen
\(\alpha^{h_{\textcolor{sthlmRed}{2}},a_{\textcolor{sthlmRed}{2}},z_{\textcolor{sthlmRed}{2}},\Gamma_{\!\textcolor{sthlmRed}{2}}'}\)
attains the minimum.
Substituting these minimisers into the definition of \(\beta^{h_{\textcolor{sthlmRed}{2}},a_{\textcolor{sthlmRed}{2}}}\) gives the equality by linearity in \(\lambda^*\) and summation over \(z_{\textcolor{sthlmRed}{2}}\).
\end{proof}

\begin{theorem}[Monotone cache augmentation]
\label{thm:monotone_gamma_update}
Fix a sub-stage \(\textcolor{sthlmRed}{i}\in\{1,2\}\) and let \(\Gamma_{\!\textcolor{sthlmRed}{1}}\) be the current cached family.
Define the represented value function
\[
\widehat v_{\mathrm{seq}}(x_{\textcolor{sthlmRed}{i}})
\doteq
\max_{\Gamma_{\!\textcolor{sthlmRed}{2}}\in \Gamma_{\!\textcolor{sthlmRed}{1}}}
\mathtt{Val}_{\Gamma_{\!\textcolor{sthlmRed}{2}}}(x_{\textcolor{sthlmRed}{i}}).
\]
Let \(\Gamma_{\!\textcolor{sthlmRed}{2},x_{\textcolor{sthlmRed}{i}}}\) be a witness envelope generated at \(x_{\textcolor{sthlmRed}{i}}\)
(by \Cref{prop:gamma_update_x1} when \(\textcolor{sthlmRed}{i}=1\), and by \Cref{prop:Gamma2_update} when \(\textcolor{sthlmRed}{i}=2\)),
and define the augmented family \(\Gamma_{\!\textcolor{sthlmRed}{1}}^{+}\doteq \Gamma_{\!\textcolor{sthlmRed}{1}}\cup\{\Gamma_{\!\textcolor{sthlmRed}{2},x_{\textcolor{sthlmRed}{i}}}\}\),
with the associated value \(\widehat v^{+}_{\mathrm{seq}}\) defined as above with \(\Gamma_{\!\textcolor{sthlmRed}{1}}^{+}\).
Then, for every occupancy \(x_{\textcolor{sthlmRed}{i}}\),
\[
\widehat v^{+}_{\mathrm{seq}}(x_{\textcolor{sthlmRed}{i}})\ \ge\ \widehat v_{\mathrm{seq}}(x_{\textcolor{sthlmRed}{i}}),
\]
and in particular
\[
\widehat v^{+}_{\mathrm{seq}}(x_{\textcolor{sthlmRed}{i}})
=
\max\!\Bigl\{\widehat v_{\mathrm{seq}}(x_{\textcolor{sthlmRed}{i}}),\ 
\mathtt{Val}_{\Gamma_{\!\textcolor{sthlmRed}{2},x_{\textcolor{sthlmRed}{i}}}}(x_{\textcolor{sthlmRed}{i}})\Bigr\}.
\]
Moreover, if \(\mathtt{Val}_{\Gamma_{\!\textcolor{sthlmRed}{2},x_{\textcolor{sthlmRed}{i}}}}(x_{\textcolor{sthlmRed}{i}})>\widehat v_{\mathrm{seq}}(x_{\textcolor{sthlmRed}{i}})\),
then \(\widehat v^{+}_{\mathrm{seq}}(x_{\textcolor{sthlmRed}{i}})>\widehat v_{\mathrm{seq}}(x_{\textcolor{sthlmRed}{i}})\).
\end{theorem}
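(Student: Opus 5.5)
The proof is essentially a statement about pointwise maxima over a finite index set, so I will argue directly from the definition of \(\widehat v_{\mathrm{seq}}\) and never invoke the optimality equations. Fix an arbitrary occupancy \(x_{\textcolor{sthlmRed}{i}}\) at sub-stage \(\textcolor{sthlmRed}{i}\). Since \(\Gamma_{\!\textcolor{sthlmRed}{1}}^{+}=\Gamma_{\!\textcolor{sthlmRed}{1}}\cup\{\Gamma_{\!\textcolor{sthlmRed}{2},x_{\textcolor{sthlmRed}{i}}}\}\) is finite, both maxima are attained. The maximum of a real-valued function over a union of two finite sets equals the larger of the two partial maxima. Applied to \(\Gamma_{\!\textcolor{sthlmRed}{2}}\mapsto \mathtt{Val}_{\Gamma_{\!\textcolor{sthlmRed}{2}}}(x_{\textcolor{sthlmRed}{i}})\), this gives
\[
\widehat v^{+}_{\mathrm{seq}}(x_{\textcolor{sthlmRed}{i}})=\max\Bigl\{\max_{\Gamma_{\!\textcolor{sthlmRed}{2}}\in\Gamma_{\!\textcolor{sthlmRed}{1}}}\mathtt{Val}_{\Gamma_{\!\textcolor{sthlmRed}{2}}}(x_{\textcolor{sthlmRed}{i}}),\ \mathtt{Val}_{\Gamma_{\!\textcolor{sthlmRed}{2},x_{\textcolor{sthlmRed}{i}}}}(x_{\textcolor{sthlmRed}{i}})\Bigr\},
\]
which is the displayed identity. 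Monotonicity \(\widehat v^{+}_{\mathrm{seq}}\ge\widehat v_{\mathrm{seq}}\) follows at once, because a maximum dominates each of its arguments.

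The one point requiring care is well-definedness of \(\mathtt{Val}_{\Gamma_{\!\textcolor{sthlmRed}{2},x_{\textcolor{sthlmRed}{i}}}}\). We need the new envelope to be a finite subset of \(\mathbb{R}^{\Omega_{\textcolor{sthlmRed}{i}}}\), so that it can be paired with every slice \(b_{h_{\textcolor{sthlmRed}{2}}}\) and each inner minimum is attained.
\begin{itemize}
\item For \(\textcolor{sthlmRed}{i}=2\), this is given by \Cref{prop:Gamma2_update}: the envelope is indexed by the finite set \(\mathcal H_{\textcolor{sthlmRed}{2},t}\times\mathcal A_{\textcolor{sthlmRed}{2}}\) and lives in \(\mathbb{R}^{\Omega_{\textcolor{sthlmRed}{2}}}\).
\item For \(\textcolor{sthlmRed}{i}=1\), the witness \(\beta\) of \Cref{prop:gamma_update_x1} is indexed by \((s,h_{\textcolor{sthlmRed}{1}},h_{\textcolor{sthlmRed}{2}})\). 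I would take the envelope to be the finite set of its \(h_{\textcolor{sthlmRed}{2}}\)-slices \(\beta(\cdot,\cdot,h_{\textcolor{sthlmRed}{2}})\in\mathbb{R}^{\Omega_{\textcolor{sthlmRed}{1}}}\), or equivalently the singleton \(\{\beta\}\) read slice-wise.
\end{itemize}
In either case the object is finite and the definition of \(\mathtt{Val}\) applies unchanged. This interpretation step is the main (mild) obstacle. For the identity and monotonicity, however, we only need \(\mathtt{Val}_{\Gamma_{\!\textcolor{sthlmRed}{2},x_{\textcolor{sthlmRed}{i}}}}(x_{\textcolor{sthlmRed}{i}})\) to be some real number, so any finite representative suffices.

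For strict improvement, suppose \(\mathtt{Val}_{\Gamma_{\!\textcolor{sthlmRed}{2},x_{\textcolor{sthlmRed}{i}}}}(x_{\textcolor{sthlmRed}{i}})>\widehat v_{\mathrm{seq}}(x_{\textcolor{sthlmRed}{i}})\). The maximum in the identity is then achieved by the second term, so \(\widehat v^{+}_{\mathrm{seq}}(x_{\textcolor{sthlmRed}{i}})\) equals that value and strictly exceeds \(\widehat v_{\mathrm{seq}}(x_{\textcolor{sthlmRed}{i}})\).

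Finally, I would note how this links to the greedy lemmas, though the statement does not require it. When \(\textcolor{sthlmRed}{i}=1\), \Cref{prop:gamma_update_x1} gives \(\mathtt{Val}=\langle x_{\textcolor{sthlmRed}{1}},\beta\rangle\), which is the LP$_1$ optimum. When \(\textcolor{sthlmRed}{i}=2\), \Cref{prop:Gamma2_update} identifies the new value with the LP$_2$ mixture value. The strictness hypothesis is therefore exactly the statement that the greedy backup improves on the current cache at the generating occupancy.
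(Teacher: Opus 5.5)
Your proof is correct and is essentially the paper's argument. The paper gets monotonicity from the inclusion $\Gamma_1\subseteq\Gamma_1^{+}$, and gets the identity and strict improvement from the fact that exactly one candidate envelope is added; that is the same max-over-a-union observation you make, except that you derive monotonicity from the identity rather than the other way round.

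One small caveat on your aside for $i=1$. Taking the set of $h_2$-slices of $\beta$ is not equivalent to reading the singleton $\{\beta\}$ slice-wise: both give $\langle x_1,\beta\rangle$ at the generating occupancy, but elsewhere the set-of-slices value can be strictly smaller. As you note, nothing in the theorem depends on which finite representative is used.
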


\begin{proof}
By definition, \(\Gamma_{\!\textcolor{sthlmRed}{1}}\subseteq \Gamma_{\!\textcolor{sthlmRed}{1}}^{+}\), hence
\[
\widehat v^{+}_{\mathrm{seq}}(x_{\textcolor{sthlmRed}{i}})
=
\max_{\Gamma_{\!\textcolor{sthlmRed}{2}}\in \Gamma_{\!\textcolor{sthlmRed}{1}}^{+}}
\mathtt{Val}_{\Gamma_{\!\textcolor{sthlmRed}{2}}}(x_{\textcolor{sthlmRed}{i}})
\ \ge\
\max_{\Gamma_{\!\textcolor{sthlmRed}{2}}\in \Gamma_{\!\textcolor{sthlmRed}{1}}}
\mathtt{Val}_{\Gamma_{\!\textcolor{sthlmRed}{2}}}(x_{\textcolor{sthlmRed}{i}})
=
\widehat v_{\mathrm{seq}}(x_{\textcolor{sthlmRed}{i}}).
\]
The displayed identity follows because \(\Gamma_{\!\textcolor{sthlmRed}{1}}^{+}\) adds exactly one candidate envelope,
and strict improvement holds whenever the added candidate attains a strictly larger value at \(x_{\textcolor{sthlmRed}{i}}\).
\end{proof}

\section{From Exponential to Polynomial Complexity --- Proof of Theorem \ref{thm:complexity}}
\label{app:sec:complexity}

\begin{proof}[Proof of \Cref{thm:complexity}]
We count the number of decision variables and linear constraints in the two LPs and use that linear programs are solvable
in time polynomial in their input size (bit-length).

\paragraph{Sub-stage \((\textcolor{sthlmRed}{1},t)\): LP\(_1\).}
Fix a cached envelope \(\Gamma_{\!\textcolor{sthlmRed}{2}}\in\Gamma_{\!\textcolor{sthlmRed}{1}}\).
LP\(_1\) optimises over \((d_{\textcolor{sthlmRed}{1}},w)\).
A decision rule \(d_{\textcolor{sthlmRed}{1}}(\cdot\mid h_{\textcolor{sthlmRed}{1}})\) is a simplex vector for each
\(h_{\textcolor{sthlmRed}{1}}\in\mathcal{H}_{\textcolor{sthlmRed}{1}}\), hence has
\(\Theta(|\mathcal{H}_{\textcolor{sthlmRed}{1}}||\mathcal{A}_{\textcolor{sthlmRed}{1}}|)\) scalar coordinates.
The epigraph variables \(w_{h_{\textcolor{sthlmRed}{2}}}\) add \(|\mathcal{H}_{\textcolor{sthlmRed}{2}}|\) scalars, so the total variable count is
\[
\pmb{O}(|\mathcal{H}_{\textcolor{sthlmRed}{1}}||\mathcal{A}_{\textcolor{sthlmRed}{1}}|+|\mathcal{H}_{\textcolor{sthlmRed}{2}}|).
\]
The main inequalities are
\(w_{h_{\textcolor{sthlmRed}{2}}}\le \langle b_{h_{\textcolor{sthlmRed}{2}}},\alpha\rangle\)
for every \(h_{\textcolor{sthlmRed}{2}}\in\mathcal{H}_{\textcolor{sthlmRed}{2}}\) and every \(\alpha\in\Gamma_{\!\textcolor{sthlmRed}{2}}\),
hence \(\pmb{O}(|\mathcal{H}_{\textcolor{sthlmRed}{2}}||\Gamma_{\!\textcolor{sthlmRed}{2}}|)\) constraints (plus simplex constraints as well).
Since we solve one such LP for each \(\Gamma_{\!\textcolor{sthlmRed}{2}}\in\Gamma_{\!\textcolor{sthlmRed}{1}}\),
sub-stage \((\textcolor{sthlmRed}{1},t)\) requires \(|\Gamma_{\!\textcolor{sthlmRed}{1}}|\) LPs of this size.

\paragraph{Sub-stage \((\textcolor{sthlmRed}{2},t)\): LP\(_2\).}
LP\(_2\) uses the variables \(\lambda\in\Delta(\Gamma'_{\!\textcolor{sthlmRed}{1}})\),
\(v_{h_{\textcolor{sthlmRed}{2}}}\) for \(h_{\textcolor{sthlmRed}{2}}\in\mathcal{H}_{\textcolor{sthlmRed}{2}}\),
and \(w^{\Gamma'_{\!\textcolor{sthlmRed}{2}}}_{h_{\textcolor{sthlmRed}{2}},z_{\textcolor{sthlmRed}{2}}}\) for
\((h_{\textcolor{sthlmRed}{2}},z_{\textcolor{sthlmRed}{2}},\Gamma'_{\!\textcolor{sthlmRed}{2}})\in
\mathcal{H}_{\textcolor{sthlmRed}{2}}\times\mathcal{Z}_{\textcolor{sthlmRed}{2}}\times\Gamma'_{\!\textcolor{sthlmRed}{1}}\).
Therefore, the number of scalar variables is
\[
|\Gamma'_{\!\textcolor{sthlmRed}{1}}|
+|\mathcal{H}_{\textcolor{sthlmRed}{2}}|
+|\mathcal{H}_{\textcolor{sthlmRed}{2}}||\mathcal{Z}_{\textcolor{sthlmRed}{2}}||\Gamma'_{\!\textcolor{sthlmRed}{1}}|
=
\pmb{O}\!\bigl(|\mathcal{H}_{\textcolor{sthlmRed}{2}}||\mathcal{Z}_{\textcolor{sthlmRed}{2}}||\Gamma'_{\!\textcolor{sthlmRed}{1}}|\bigr).
\]
For constraints:
(i) the ``min over \(a_{\textcolor{sthlmRed}{2}}\)'' constraints
\(v_{h_{\textcolor{sthlmRed}{2}}}\le \sum_{\Gamma'}\sum_{z_{\textcolor{sthlmRed}{2}}} w^{\Gamma'}_{h_{\textcolor{sthlmRed}{2}},z_{\textcolor{sthlmRed}{2}}}\)
appear for each \((h_{\textcolor{sthlmRed}{2}},a_{\textcolor{sthlmRed}{2}})\), giving
\(\pmb{O}(|\mathcal{H}_{\textcolor{sthlmRed}{2}}||\mathcal{A}_{\textcolor{sthlmRed}{2}}|)\) constraints;
(ii) the envelope constraints
\(w^{\Gamma'}_{h_{\textcolor{sthlmRed}{2}},z_{\textcolor{sthlmRed}{2}}}
\le \lambda(\Gamma')\langle b_{h_{\textcolor{sthlmRed}{2}}},q^{\alpha}_{a_{\textcolor{sthlmRed}{2}},z_{\textcolor{sthlmRed}{2}}}\rangle\)
appear for each \((h_{\textcolor{sthlmRed}{2}},a_{\textcolor{sthlmRed}{2}},z_{\textcolor{sthlmRed}{2}},\Gamma',\alpha)\) with \(\alpha\in\Gamma'\).
Thus their count is
\[
\pmb{O}\!\Bigl(
|\mathcal{H}_{\textcolor{sthlmRed}{2}}||\mathcal{A}_{\textcolor{sthlmRed}{2}}||\mathcal{Z}_{\textcolor{sthlmRed}{2}}|
\sum_{\Gamma'\in\Gamma'_{\!\textcolor{sthlmRed}{1}}} |\Gamma'|
\Bigr)
=
\pmb{O}\!\bigl(
|\mathcal{H}_{\textcolor{sthlmRed}{2}}||\mathcal{A}_{\textcolor{sthlmRed}{2}}||\mathcal{Z}_{\textcolor{sthlmRed}{2}}|
|\Gamma'_{\!\textcolor{sthlmRed}{1}}|\,|\bar{\Gamma}'_{\!\textcolor{sthlmRed}{2}}|
\bigr),
\]
where \(|\bar{\Gamma}'_{\!\textcolor{sthlmRed}{2}}|=\max_{\Gamma'\in\Gamma'_{\!\textcolor{sthlmRed}{1}}}|\Gamma'|\).
The simplex constraints for \(\lambda\) add \(\pmb{O}(|\Gamma'_{\!\textcolor{sthlmRed}{1}}|)\) further constraints.

\paragraph{Polynomial-time conclusion.}
Both LP\(_1\) and LP\(_2\) have sizes polynomial in the explicit parameters
\(|\mathcal{H}_{\textcolor{sthlmRed}{1}}|,|\mathcal{H}_{\textcolor{sthlmRed}{2}}|,|\mathcal{A}_{\textcolor{sthlmRed}{1}}|,
|\mathcal{A}_{\textcolor{sthlmRed}{2}}|,|\mathcal{Z}_{\textcolor{sthlmRed}{2}}|,|\Gamma_{\!\textcolor{sthlmRed}{1}}|,
|\Gamma'_{\!\textcolor{sthlmRed}{1}}|,|\bar{\Gamma}'_{\!\textcolor{sthlmRed}{2}}|\).
Since sub-stage \((\textcolor{sthlmRed}{1},t)\) solves \(|\Gamma_{\!\textcolor{sthlmRed}{1}}|\) instances of LP\(_1\) and sub-stage \((\textcolor{sthlmRed}{2},t)\) solves one instance of LP\(_2\),
and since linear programming is solvable in time polynomial in its input size (bit-length),
computing the sequential backup is polynomial in the explicit representation size.
\end{proof}

\paragraph{LP complexity comparison to the state of the art.}
We compare the linear programs underlying (i) the OMG greedy-update view \citet{DelBufDibSaf-DGAA-23},
(ii) the cached-envelope greedy LP view \citet{escudie2025varepsilonoptimally},
and (iii) our sequential backup (LP$_{\textcolor{sthlmRed}{1}}$/LP$_{\textcolor{sthlmRed}{2}}$).
In the OMG view, pessimism is enforced against all deterministic opponent decision rules
$d_{\textcolor{sthlmRed}{2}}:\mathcal H_{\textcolor{sthlmRed}{2}}\to\mathcal A_{\textcolor{sthlmRed}{2}}$.
Since the number of such mappings is $|\mathcal A_{\textcolor{sthlmRed}{2}}|^{|\mathcal H_{\textcolor{sthlmRed}{2}}|}$, the resulting LP description carries an exponential constraint family.
For the polynomial LPs, we report input-size bounds in terms of the same explicit representation objects used throughout the paper:
a cached family $\Gamma_{\textcolor{sthlmRed}{1}}$ of envelopes $\Gamma_{\textcolor{sthlmRed}{2}}\in\Gamma_{\textcolor{sthlmRed}{1}}$, and the maximal envelope size
\[
|\Gamma_{\textcolor{sthlmRed}{2}}^*| \doteq \max_{\Gamma_{\textcolor{sthlmRed}{2}}\in\Gamma_{\textcolor{sthlmRed}{1}}}|\Gamma_{\textcolor{sthlmRed}{2}}|.
\]
Table~\ref{tab:lp-complexity-comparison} reports the asymptotic numbers of scalar variables and linear constraints, as well as the number of LP solves per backup step.

\begin{table}[t]
\centering
\small
\setlength{\tabcolsep}{6pt}
\renewcommand{\arraystretch}{1.18}
\begin{tabular}{p{2.8cm} c c c}
\toprule
\textbf{Update step} & \textbf{\# LPs} & \textbf{\# variables} & \textbf{\# constraints} \\
\midrule

\textbf{Ours: LP$_{\textcolor{sthlmRed}{1}}$ } &
$|\Gamma_{\textcolor{sthlmRed}{1}}|$ &
$O(|\mathcal H_{\textcolor{sthlmRed}{1}}||\mathcal A_{\textcolor{sthlmRed}{1}}|+|\mathcal H_{\textcolor{sthlmRed}{2}}|)$ &
$O(|\mathcal H_{\textcolor{sthlmRed}{2}}|\cdot|\Gamma_{\textcolor{sthlmRed}{2}}|)$ \ (+ simplex) \\

\textbf{Ours: LP$_{\textcolor{sthlmRed}{2}}$ } &
$1$ &
$O(|\mathcal H_{\textcolor{sthlmRed}{2}}||\mathcal Z_{\textcolor{sthlmRed}{2}}|\cdot|\Gamma'_{\textcolor{sthlmRed}{1}}|)$ &
$O\!\bigl(|\mathcal H_{\textcolor{sthlmRed}{2}}||\mathcal A_{\textcolor{sthlmRed}{2}}|
+|\mathcal H_{\textcolor{sthlmRed}{2}}||\mathcal A_{\textcolor{sthlmRed}{2}}||\mathcal Z_{\textcolor{sthlmRed}{2}}|\cdot|\Gamma_{\textcolor{sthlmRed}{1}}|\cdot|\Gamma^*_{\textcolor{sthlmRed}{2}}|\bigr)$ \ (+ simplex) \\

\midrule

\citep{DelBufDibSaf-DGAA-23} &
$1$ &
(rule variables $+$ epigraph) &
\textbf{Exponential:}\;\; $O\!\bigl(|\mathcal A_{\textcolor{sthlmRed}{2}}|^{|\mathcal H_{\textcolor{sthlmRed}{2}}|}\bigr)$ \\

\citep{escudie2025varepsilonoptimally} &
$1$ &
$O\!\bigl(|\Gamma_{\textcolor{sthlmRed}{1}}|\cdot|\mathcal H_{\textcolor{sthlmRed}{2}}|\cdot|\mathcal A_{\textcolor{sthlmRed}{2}}|\cdot|\mathcal Z_{\textcolor{sthlmRed}{2}}|\bigr)$ &
$O\!\bigl(|\Gamma_{\textcolor{sthlmRed}{1}}|\cdot|\Gamma^{*}_{\textcolor{sthlmRed}{2}}|\cdot|\mathcal H_{\textcolor{sthlmRed}{2}}|\cdot|\mathcal A_{\textcolor{sthlmRed}{2}}|\cdot|\mathcal Z_{\textcolor{sthlmRed}{2}}|\bigr)$ \\
\bottomrule
\end{tabular}
\caption{LP complexity comparison (input-size view).
The OMG greedy-update LP has constraints indexed by opponent decision rules, yielding an exponential dependence on $|\mathcal H_{\textcolor{sthlmRed}{2}}|$ \citep{DelBufDibSaf-DGAA-23}.
Both the cached-envelope greedy LP \citep{escudie2025varepsilonoptimally} and our sequential backup scale polynomially in the same explicit representation objects, but they expose different optimisation structure.}
\label{tab:lp-complexity-comparison}
\end{table}

\paragraph{Comparative analysis: \citet{DelBufDibSaf-DGAA-23} vs.\ our sequential backup.}
The OMG greedy-update formulation \citet{DelBufDibSaf-DGAA-23} enforces pessimism against \emph{all} deterministic opponent decision rules
$d_{\textcolor{sthlmRed}{2}}:\mathcal H_{\textcolor{sthlmRed}{2}}\to \mathcal A_{\textcolor{sthlmRed}{2}}$.
Because there are $|\mathcal A_{\textcolor{sthlmRed}{2}}|^{|\mathcal H_{\textcolor{sthlmRed}{2}}|}$ such mappings, the LP description contains
$O(|\mathcal A_{\textcolor{sthlmRed}{2}}|^{|\mathcal H_{\textcolor{sthlmRed}{2}}|})$ constraints (Table~\ref{tab:lp-complexity-comparison}), which is exponential in the size of the opponent’s private-history space.
Our sequential backup removes this bottleneck by never indexing constraints by opponent decision rules:
LP$_{\textcolor{sthlmRed}{1}}$ constrains only against finitely many vectors in a cached envelope $\Gamma_{\textcolor{sthlmRed}{2}}$, and LP$_{\textcolor{sthlmRed}{2}}$ aggregates only over finitely many cached envelopes.
Hence both LP$_{\textcolor{sthlmRed}{1}}$ and LP$_{\textcolor{sthlmRed}{2}}$ admit polynomial-size descriptions in the explicit parameters
$|\mathcal H_{\textcolor{sthlmRed}{1}}|,|\mathcal H_{\textcolor{sthlmRed}{2}}|,|\mathcal A_{\textcolor{sthlmRed}{1}}|,|\mathcal A_{\textcolor{sthlmRed}{2}}|,|\mathcal Z_{\textcolor{sthlmRed}{2}}|,|\Gamma_{\textcolor{sthlmRed}{1}}|,|\Gamma^*_{\textcolor{sthlmRed}{2}}|$.
In particular, for any $|\mathcal A_{\textcolor{sthlmRed}{2}}|\ge 2$, any polynomially bounded LP description is asymptotically strictly smaller than
$|\mathcal A_{\textcolor{sthlmRed}{2}}|^{|\mathcal H_{\textcolor{sthlmRed}{2}}|}$ as $|\mathcal H_{\textcolor{sthlmRed}{2}}|\to\infty$.
This is the precise sense in which our sequentialisation turns an exponentially described update into a polynomially described backup operator.

\paragraph{Comparative analysis: \citet{escudie2025varepsilonoptimally} vs.\ our sequential backup.}
Both approaches avoid the OMG exponential constraint family by optimising against the same explicit cached representation $(\Gamma_{\textcolor{sthlmRed}{1}},\Gamma_{\textcolor{sthlmRed}{2}})$.
The difference is how the optimisation couples indices.
In \citet{escudie2025varepsilonoptimally}, a single monolithic greedy LP couples, within one program,
the outer cache index $\Gamma_{\textcolor{sthlmRed}{1}}$, the inner value index $\alpha\in\Gamma_{\textcolor{sthlmRed}{2}}$, and the opponent action--observation indices $(a_{\textcolor{sthlmRed}{2}},z_{\textcolor{sthlmRed}{2}})$,
yielding a constraint count
$O(|\Gamma_{\textcolor{sthlmRed}{1}}|\cdot|\Gamma^{*}_{\textcolor{sthlmRed}{2}}|\cdot|\mathcal H_{\textcolor{sthlmRed}{2}}|\cdot|\mathcal A_{\textcolor{sthlmRed}{2}}|\cdot|\mathcal Z_{\textcolor{sthlmRed}{2}}|)$
(Table~\ref{tab:lp-complexity-comparison}).

Our sequential backup factorises this coupling into two sub-stages.
LP$_{\textcolor{sthlmRed}{1}}$ is local to a fixed envelope $\Gamma_{\textcolor{sthlmRed}{2}}$ and therefore eliminates the global $(a_{\textcolor{sthlmRed}{2}},z_{\textcolor{sthlmRed}{2}})$ coupling from the $\alpha\in\Gamma_{\textcolor{sthlmRed}{2}}$ enumeration:
its constraints scale only as $O(|\mathcal H_{\textcolor{sthlmRed}{2}}|\cdot|\Gamma_{\textcolor{sthlmRed}{2}}|)$, and it is solved once per candidate envelope.
The global $(a_{\textcolor{sthlmRed}{2}},z_{\textcolor{sthlmRed}{2}})$ coupling is paid only once, at the aggregation stage LP$_{\textcolor{sthlmRed}{2}}$, whose constraints scale as
$O(|\mathcal H_{\textcolor{sthlmRed}{2}}||\mathcal A_{\textcolor{sthlmRed}{2}}||\mathcal Z_{\textcolor{sthlmRed}{2}}|\cdot|\Gamma_{\textcolor{sthlmRed}{1}}|\cdot|\Gamma^*_{\textcolor{sthlmRed}{2}}|)$.
Consequently, relative to the monolithic formulation of \citet{escudie2025varepsilonoptimally}, our update strictly reduces the coupling burden by ensuring that the
$|\mathcal A_{\textcolor{sthlmRed}{2}}||\mathcal Z_{\textcolor{sthlmRed}{2}}|$ factor appears only in the single aggregation LP and not inside each per-envelope optimisation.
This is precisely the computational gain delivered by the sequential factorisation: it preserves the same explicit representation objects while decoupling local envelope selection (LP$_{\textcolor{sthlmRed}{1}}$) from global aggregation (LP$_{\textcolor{sthlmRed}{2}}$).

\section{Sequential Lossless Reduction --- Proof of Theorem \ref{thm:sequential_lossless}}
\label{app:proof:sequential_lossless}

\begin{proof}[Proof of \Cref{thm:sequential_lossless}]
Let \(\mathcal{M}'\) denote the (simultaneous) TI-zs-SG of \citet{escudie2025varepsilonoptimally},
which is a lossless reduction of the zs-POSG \(\mathcal{M}\).

\paragraph{Step 1: \(\mathcal{M}'_{\mathrm{seq}}\) is a pure stage-unrolling of \(\mathcal{M}'\).}
The construction of \(\mathcal{M}'_{\mathrm{seq}}\) only unzips each stage \(t\) of \(\mathcal{M}'\) into two internal sub-stages:
at \((\textcolor{sthlmRed}{1},t)\) player~\(\textcolor{sthlmRed}{1}\) selects \(d_{\textcolor{sthlmRed}{1},t}\), the central state is pushed forward deterministically, the reward is \(0\), and \(\gamma_{\textcolor{sthlmRed}{1}}=1\);
at \((\textcolor{sthlmRed}{2},t)\) player~\(\textcolor{sthlmRed}{2}\) selects \(d_{\textcolor{sthlmRed}{2},t}\), the original TI kernel and occupancy-set updates of \(\mathcal{M}'\) are applied, the stage reward is realised, and \(\gamma_{\textcolor{sthlmRed}{2}}=\gamma\).
Define \(\Phi\) as the path-embedding that inserts the intermediate node \(x_{\textcolor{sthlmRed}{2},t}\) after each boundary node \(x_{\textcolor{sthlmRed}{1},t}\),
and \(\Psi\) as the projection that composes the two sub-stages at each \(t\); then \(\Psi\circ\Phi=\mathrm{Id}\), and \(\Phi\circ\Psi\) is identity up to the inserted bookkeeping nodes.
Consequently, \(\Phi\) and \(\Psi\) induce a bijection between \emph{admissible} policy profiles in \(\mathcal{M}'\) and in \(\mathcal{M}'_{\mathrm{seq}}\),
where admissibility means measurability w.r.t.\ the same decision-rule histories/occupancy sets as in \(\mathcal{M}'\) (so \(x_{\textcolor{sthlmRed}{2},t}\) is not an additional signal);
under corresponding admissible profiles, the induced law on stage-boundary states and the total discounted return coincide, hence best responses and equilibrium values are preserved.

\paragraph{Step 2: Losslessness w.r.t.\ \(\mathcal{M}\).}
Since \(\mathcal{M}'\) is lossless for \(\mathcal{M}\) and \(\mathcal{M}'_{\mathrm{seq}}\) is equivalent to \(\mathcal{M}'\) over admissible profiles via \(\Phi,\Psi\),
\(\mathcal{M}'_{\mathrm{seq}}\) is a lossless reduction of \(\mathcal{M}\).

\paragraph{Step 3: Recovering the simultaneous TI-zs-SG optimality equations.}
For any boundary state \(x_{\textcolor{sthlmRed}{1},t}\),
\[
v^*_{\mathrm{seq}}(x_{\textcolor{sthlmRed}{1},t})
=
\max_{d_{\textcolor{sthlmRed}{1},t}}
\min_{d_{\textcolor{sthlmRed}{2},t}}
\Bigl[
\rho_{\mathcal{M}'}(x_{\textcolor{sthlmRed}{1},t},d_{\textcolor{sthlmRed}{1},t},d_{\textcolor{sthlmRed}{2},t})
+\gamma\,v^*_{\mathrm{seq}}\!\bigl(\tau_{\mathcal{M}'}(x_{\textcolor{sthlmRed}{1},t},d_{\textcolor{sthlmRed}{1},t},d_{\textcolor{sthlmRed}{2},t})\bigr)
\Bigr],
\]
because \(x_{\textcolor{sthlmRed}{2},t}=\tau_{\mathrm{seq}}(x_{\textcolor{sthlmRed}{1},t},d_{\textcolor{sthlmRed}{1},t})\) is a bookkeeping node with \(\rho_{\mathrm{seq}}(x_{\textcolor{sthlmRed}{1},t},d_{\textcolor{sthlmRed}{1},t})=0\) and \(\gamma_{\textcolor{sthlmRed}{1}}=1\), and the \((\textcolor{sthlmRed}{2},t)\) update applies exactly the one-stage primitives of \(\mathcal{M}'\).
This is precisely the simultaneous TI-zs-SG optimality equation of \citet{escudie2025varepsilonoptimally}.
\end{proof}

\section{Sequential Transition-Independent Stochastic Games}
\label{app:sec:seq:ti:zs:sg}

This appendix formalises the \emph{sequential} transition-independent reduction used throughout the paper.
It follows the simultaneous TI-zs-SG construction of \citet{escudie2025varepsilonoptimally} and refines only the
\emph{timing} of the central recursion: each original stage \(t\) is split into two sub-stages \((\textcolor{sthlmRed}{1},t)\) and \((\textcolor{sthlmRed}{2},t)\),
so that player~\textcolor{sthlmRed}{1} commits first and player~\textcolor{sthlmRed}{2} completes the stage.
We present the reduction through two statistics:
(i) a value-sufficient \emph{sequential occupancy} for the uninformed recursion, and
(ii) player-specific \emph{occupancy families} supporting equilibrium-consistent extraction.

\paragraph{Timing.}
Sub-stages are indexed by \((\textcolor{sthlmRed}{i},t)\) with \(\textcolor{sthlmRed}{i}\in\{\textcolor{sthlmRed}{1},\textcolor{sthlmRed}{2}\}\) and \(t\in\{0,\ldots,\ell-1\}\).
Define \(\mathrm{next}(\textcolor{sthlmRed}{1},t)\doteq (\textcolor{sthlmRed}{2},t)\) and \(\mathrm{next}(\textcolor{sthlmRed}{2},t)\doteq (\textcolor{sthlmRed}{1},t+1)\).
The terminal stage boundary is \((\textcolor{sthlmRed}{1},\ell)\).

\paragraph{Sequential occupancies and occupancy families.}
At each sub-stage \((\textcolor{sthlmRed}{i},t)\), the (hidden) global state is a sequential occupancy \(x_{\textcolor{sthlmRed}{i},t}\),
the value-sufficient statistic manipulated by the uninformed recursion.
Player \(\textcolor{sthlmRed}{i}\) does not observe \(x_{\textcolor{sthlmRed}{i},t}\); instead it reasons over a local state \(\mathbf{o}_{\textcolor{sthlmRed}{i},t}\),
called an occupancy family.
The family \(\mathbf{o}_{\textcolor{sthlmRed}{i},t}\) collects exactly the sequential occupancies \(x_{\textcolor{sthlmRed}{i},t}\) that are consistent with
player \(\textcolor{sthlmRed}{i}\)'s decision-rule history up to \((\textcolor{sthlmRed}{i},t)\) (with the fixed information structure), while ranging over all
opponent private evolutions.
Because \(\mathbf{o}_{\textcolor{sthlmRed}{i},t}\) and \(\mathbf{o}_{\mathrm{next}(\textcolor{sthlmRed}{i},t)}\) live at consecutive sub-stages, the successor sequential
occupancy is recovered by a \emph{propagate-and-match} operation: propagate the predecessor family through the
occupancy recursion under the active decision rule, then match against the successor family.

\paragraph{Local dynamics.}
At sub-stage \((\textcolor{sthlmRed}{1},t)\), only player~\textcolor{sthlmRed}{1} is active: it selects \(d_{\textcolor{sthlmRed}{1},t}\), the central occupancy updates
\(x_{\textcolor{sthlmRed}{1},t}\mapsto x_{\textcolor{sthlmRed}{2},t}\), and the stage reward is \(0\).
At sub-stage \((\textcolor{sthlmRed}{2},t)\), only player~\textcolor{sthlmRed}{2} is active: it selects \(d_{\textcolor{sthlmRed}{2},t}\), the central occupancy updates
\(x_{\textcolor{sthlmRed}{2},t}\mapsto x_{\textcolor{sthlmRed}{1},t+1}\), and the original stage reward is realised.
Each player-specific occupancy family evolves by appending the corresponding decision rule, exactly as in
\citet{escudie2025varepsilonoptimally}, but interleaved across sub-stages (Figure~\ref{fig:sequential_ti_diagram}).

\subsection{Formal definitions}

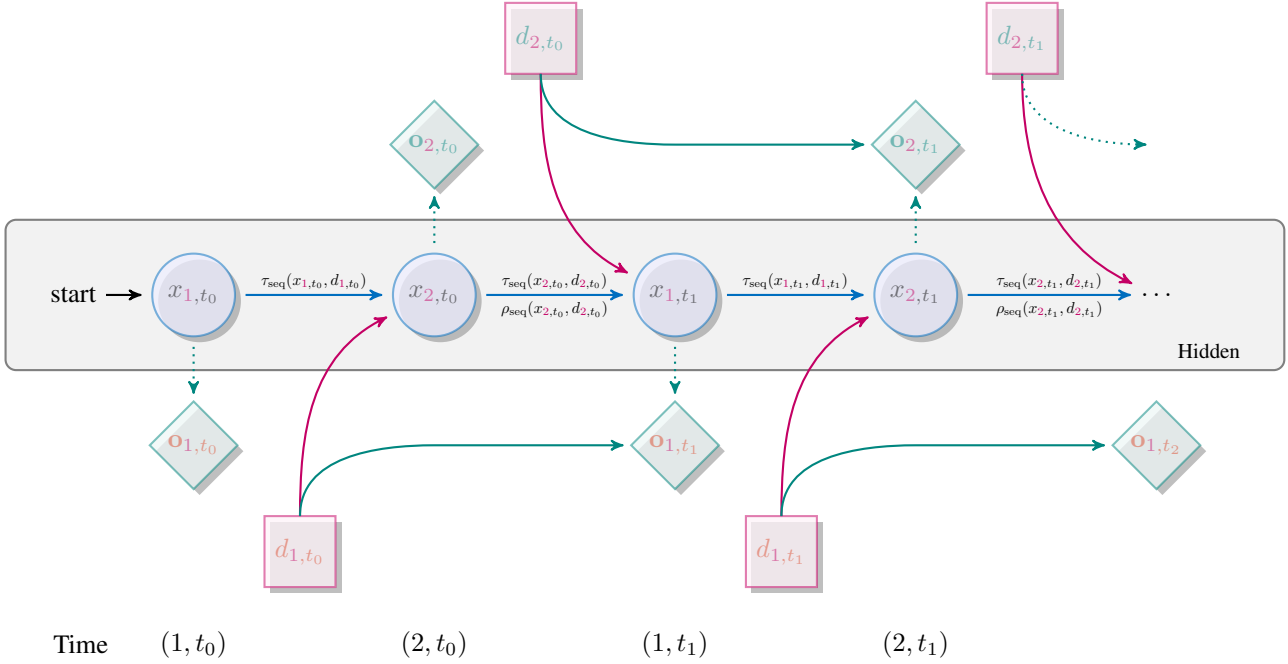
\begin{figure*}
\centering
\begin{tikzpicture}[->,>=stealth',auto,thick,node distance=3.2cm, square/.style={regular polygon,regular polygon sides=4}]
  % --- Styles ---
  \tikzstyle{every state}=[draw=black,text=black,inner color= white,outer color= white,draw= black,text=black, drop shadow,inner xsep=0pt,  inner ysep=0pt, outer sep=0pt,  minimum size=0pt]
  \tikzstyle{place}=[thick,draw=sthlmBlue,fill=blue!20,minimum size=11mm, opacity=.5, inner xsep=4pt,  inner ysep=4pt, outer sep=4pt,  minimum size=0pt]
  \tikzstyle{red place}=[square,place,draw=sthlmRed,fill=sthlmLightRed,minimum size=16mm,inner xsep=0pt,  inner ysep=0pt, outer sep=0pt,  minimum size=0pt]
  \tikzstyle{green place}=[diamond, place,draw=sthlmGreen,fill=sthlmGreen!20, inner xsep=2pt,  inner ysep=2pt, outer sep=2pt,  minimum size=0pt]
  \tikzstyle{hidden}=[draw=none, fill=none, opacity=0, text opacity=0, general shadow/.style={opacity=0}]

  % --- Bloc Central ---
  \draw[rounded corners, gray, fill=gray!10] (-2.5,-1) rectangle (14.5,1);
  \node[fill=gray!10,text=black,draw=none,scale=.8]  at (13.5,-.75) {Hidden};
  
  % --- États Globaux Séquentiels ---
  \node[initial, state, place] (S0) at (0,0) {$x_{\textcolor{sthlmRed}{1},t_0}$};
  \node[state, place] (S1) [right of=S0] {$x_{\textcolor{sthlmRed}{2},t_0}$};
  \node[state, place] (S2) [right of=S1] {$x_{\textcolor{sthlmRed}{1},t_1}$};
  \node[state, place] (S3) [right of=S2] {$x_{\textcolor{sthlmRed}{2},t_1}$};
  \node[] (S4) [right of=S3] {$\ldots$};
  
  % --- JOUEUR 1 (Bas, y=-2cm) ---
  \node[state,green place, text=sthlmOrange] (O0) [below of=S0,node distance=2cm] { $\mathbf{o}_{\textcolor{sthlmRed}{1},t_0}$ };
  \node[green place, text=sthlmOrange, hidden] (O_mid_1) [below of=S1,node distance=2cm] { $\mathbf{o}_{\textcolor{sthlmRed}{1},t_0}$ };
  \node[state,green place, text=sthlmOrange] (O1) [below of=S2,node distance=2cm] { $\mathbf{o}_{\textcolor{sthlmRed}{1},t_1}$ };
  \node[green place, text=sthlmOrange, hidden] (O_mid_2) [below of=S3,node distance=2cm] { $\mathbf{o}_{\textcolor{sthlmRed}{1},t_1}$ };
  \node[state,green place, text=sthlmOrange] (O2) [below of=S4,node distance=2cm] { $\mathbf{o}_{\textcolor{sthlmRed}{1},t_2}$ };

  % Actions P1
  \node[state,red place,text=sthlmOrange] (A0) [below right of=O0,node distance=2cm] {$d_{\textcolor{sthlmRed}{1},t_0}$};
  \node[state,red place,text=sthlmOrange] (A1) [right of=A0, node distance=6.4cm] {$d_{\textcolor{sthlmRed}{1},t_1}$};

  % --- JOUEUR 2 (Haut, y=+2cm) ---
  \node[green place, text=sthlmGreen, hidden] (O3) [above of=S0,node distance=2cm] { $\mathbf{o}_{\textcolor{sthlmRed}{2},t_0}$ };
  \node[red place,text=sthlmGreen, hidden] (A3_phantom) [above right of=O3,node distance=2cm] {$d_{\textcolor{sthlmRed}{2},t_0}$};
  \node[state,green place, text=sthlmGreen] (O4) [above of=S1,node distance=2cm] { $\mathbf{o}_{\textcolor{sthlmRed}{2},t_0}$ };
  \node[state,red place,text=sthlmGreen] (A4) [above right of=O4,node distance=2cm] {$d_{\textcolor{sthlmRed}{2},t_0}$};
  \node[green place, text=sthlmGreen, hidden] (O5) [above of=S2,node distance=2cm] { $\mathbf{o}_{\textcolor{sthlmRed}{2},t_1}$ };
  \node[state,green place, text=sthlmGreen] (O6) [above of=S3,node distance=2cm] { $\mathbf{o}_{\textcolor{sthlmRed}{2},t_1}$ };
  \node[state,red place,text=sthlmGreen] (A6) [above right of=O6,node distance=2cm] {$d_{\textcolor{sthlmRed}{2},t_1}$};
  \node[hidden] (O7) [above of=S4,node distance=2cm] {};

  % --- Axe Temporel ---
  \node[] (Time) at (-1.5,-4.65) {Time};
  \node[] (T0) [below of=S0,node distance=4.65cm] {$(1,t_0)$};
  \node[] (T1) [below of=S1,node distance=4.65cm] {$(2,t_0)$};
  \node[] (T2) [below of=S2,node distance=4.65cm] {$(1,t_1)$};
  \node[] (T3) [below of=S3,node distance=4.65cm] {$(2,t_1)$};
  
  % --- ARCS CENTRAUX ---
  \path (S0) edge[sthlmBlue] 
  node[pos=.5, fill=none,text=black,draw=none,scale=.6, above] {$\tau_{\mathrm{seq}}(x_{\textcolor{sthlmRed}{1},t_0}, d_{\textcolor{sthlmRed}{1},t_0})$} 
  (S1);
  \path (S1) edge[sthlmBlue] 
  node[pos=.5, fill=none,text=black,draw=none,scale=.6, above] {$\tau_{\mathrm{seq}}(x_{\textcolor{sthlmRed}{2},t_0}, d_{\textcolor{sthlmRed}{2},t_0})$} 
  node[pos=.5, fill=none,text=black,draw=none,scale=.6, below] {$\rho_{\mathrm{seq}}(x_{\textcolor{sthlmRed}{2},t_0}, d_{\textcolor{sthlmRed}{2},t_0})$} 
  (S2);
  \path (S2) edge[sthlmBlue] 
  node[pos=.5, fill=none,text=black,draw=none,scale=.6, above] {$\tau_{\mathrm{seq}}(x_{\textcolor{sthlmRed}{1},t_1}, d_{\textcolor{sthlmRed}{1},t_1})$} 
  (S3);
  \path (S3) edge[sthlmBlue] 
  node[pos=.5, fill=none,text=black,draw=none,scale=.6, above] {$\tau_{\mathrm{seq}}(x_{\textcolor{sthlmRed}{2},t_1}, d_{\textcolor{sthlmRed}{2},t_1})$} 
  node[pos=.5, fill=none,text=black,draw=none,scale=.6, below] {$\rho_{\mathrm{seq}}(x_{\textcolor{sthlmRed}{2},t_1}, d_{\textcolor{sthlmRed}{2},t_1})$} 
  (S4);

  % --- ARCS JOUEUR 1 ---
  \path (A0) edge [sthlmRed, out=90, in=-155] node {} (S1);
  \draw[sthlmGreen] (A0) to[out=90, in=180] (O_mid_1.center) -- (O1);

  \path (A1) edge [sthlmRed, out=90, in=-155] node {} (S3);
  \draw[sthlmGreen] (A1) to[out=90, in=180] (O_mid_2.center) -- (O2);

  % --- ARCS JOUEUR 2 ---
  \path (A4) edge [sthlmRed, out=-90, in=-205] node {} (S2);
  \draw[sthlmGreen] (A4) to[out=-90, in=180] (O5.center) -- (O6);

  \path (A6) edge [sthlmRed, out=-90, in=-205] node {} (S4);
  \draw[sthlmGreen, dotted] (A6) edge [ out=-90, in=-180] node {} (O7);

  % --- LIENS DE COMPATIBILITÉ ---
  \path[sthlmGreen, dotted] (S0) edge node {} (O0);
  \path[sthlmGreen, dotted, opacity=0] (S0) edge node {} (O3);
  
  \path[sthlmGreen, dotted, opacity=0] (S1) edge node {} (O_mid_1);
  \path[sthlmGreen, dotted] (S1) edge node {} (O4);
  
  \path[sthlmGreen, dotted] (S2) edge node {} (O1);
  \path[sthlmGreen, dotted, opacity=0] (S2) edge node {} (O5);
  
  \path[sthlmGreen, dotted, opacity=0] (S3) edge node {} (O_mid_2);
  \path[sthlmGreen, dotted] (S3) edge node {} (O6);

\end{tikzpicture}
\caption{\textbf{Sequential Transition-Independent Stochastic Game.} The diagram illustrates the interleaved dynamics extending over stages \(t_0\) and \(t_1\). 
At each sub-stage \((1,t)\), Player~\textcolor{sthlmRed}{1} is active, updating their local state \(\mathbf{o}_{\textcolor{sthlmRed}{1}}\) and driving the transition \(x_{\textcolor{sthlmRed}{1},t} \to x_{\textcolor{sthlmRed}{2},t}\). 
At sub-stage \((2,t)\), Player~\textcolor{sthlmRed}{2} acts, updating \(\mathbf{o}_{\textcolor{sthlmRed}{2}}\) and completing the stage transition \(x_{\textcolor{sthlmRed}{2},t} \to x_{\textcolor{sthlmRed}{1},t+1}\), while generating the sequential reward \(\rho_{\mathrm{seq}}\).}
\label{fig:sequential_ti_diagram}
\end{figure*}

\begin{definition}[Sequential focal planning process]
\label{def:seq:focal:planner}
For each player \(\textcolor{sthlmRed}{i}\in\{\textcolor{sthlmRed}{1},\textcolor{sthlmRed}{2}\}\), the sequential focal planning process is a tuple
\(
\mathcal{M}_{\mathrm{seq},\textcolor{sthlmRed}{i}}
=
(\mathbf{O}_{\textcolor{sthlmRed}{i}}, \mathcal{F}_{\textcolor{sthlmRed}{i}}, \mathcal{D}_{\textcolor{sthlmRed}{i}}, \tau_{\mathrm{seq},\textcolor{sthlmRed}{i}}, \rho_{\mathrm{seq},\textcolor{sthlmRed}{i}})
\),
where \(\mathbf{O}_{\textcolor{sthlmRed}{i}}\) is the set of occupancy families \(\mathbf{o}_{\textcolor{sthlmRed}{i},t}\), \(\mathcal{F}_{\textcolor{sthlmRed}{i}}\subset\mathbf{O}_{\textcolor{sthlmRed}{i}}\) is the set of terminal
families (at \((\textcolor{sthlmRed}{1},\ell)\) for \(\textcolor{sthlmRed}{i}=\textcolor{sthlmRed}{1}\), and after \((\textcolor{sthlmRed}{2},\ell-1)\) for \(\textcolor{sthlmRed}{i}=\textcolor{sthlmRed}{2}\)), \(\mathcal{D}_{\textcolor{sthlmRed}{i}}\) is the set of decision rules,
\(\tau_{\mathrm{seq},\textcolor{sthlmRed}{i}}\) maps \((\mathbf{o}_{\textcolor{sthlmRed}{i},t},d_{\textcolor{sthlmRed}{i},t})\) to the next family at the next activation of player \(\textcolor{sthlmRed}{i}\), and
\(\rho_{\mathrm{seq},\textcolor{sthlmRed}{i}}\) is \(0\) on nonterminal families and equals \(\operatorname{\mathtt{opt}}_{\neg \textcolor{sthlmRed}{i},\,x\in\mathbf{o}} g(x)\) on terminal families.
\end{definition}

\begin{definition}[Sequential TI-zs-SG]
\label{def:seq:ti:zs:sg}
The sequential transition-independent zero-sum stochastic game is
\(
\mathcal{M}'_{\mathrm{seq}}
=
(\mathcal{X}_{\mathrm{seq}}, \mathcal{F}_{\mathrm{seq}},
\mathcal{M}_{\mathrm{seq},\textcolor{sthlmRed}{1}}, \mathcal{M}_{\mathrm{seq},\textcolor{sthlmRed}{2}},
\tau_{\mathrm{seq}}, \varphi_{\mathrm{seq}}, \rho_{\mathrm{seq}}, \ell, \gamma)
\),
where \(\mathcal{X}_{\mathrm{seq}}=\{x_{\textcolor{sthlmRed}{1},t}\}_{t=0}^{\ell}\cup \{x_{\textcolor{sthlmRed}{2},t}\}_{t=0}^{\ell-1}\), \(\mathcal{F}_{\mathrm{seq}}\subseteq\mathcal{X}_{\mathrm{seq}}\) is terminal
(e.g., \(x_{\textcolor{sthlmRed}{1},\ell}\)), \(\tau_{\mathrm{seq}}\) maps \((x_{\textcolor{sthlmRed}{i},t},d_{\textcolor{sthlmRed}{i},t})\) to \(x_{\mathrm{next}(\textcolor{sthlmRed}{i},t)}\), \(\rho_{\mathrm{seq}}\) is \(0\) at \((\textcolor{sthlmRed}{1},t)\) and realises the
original stage reward at \((\textcolor{sthlmRed}{2},t)\), and the compatibility operator is
\[
\varphi_{\mathrm{seq}}(\mathbf{o}_{\textcolor{sthlmRed}{i},t}, \mathbf{o}_{\mathrm{next}(\textcolor{sthlmRed}{i},t)}, d_{\textcolor{sthlmRed}{i},t})
\doteq
\bigl\{\tau_{\mathrm{seq}}(x_{\textcolor{sthlmRed}{i},t},d_{\textcolor{sthlmRed}{i},t}) \mid x_{\textcolor{sthlmRed}{i},t}\in \mathbf{o}_{\textcolor{sthlmRed}{i},t}\bigr\}\cap \mathbf{o}_{\mathrm{next}(\textcolor{sthlmRed}{i},t)} ,
\]
which is a singleton under the standard construction of occupancy families from decision-rule histories.\footnote{%
Occupancy families are defined as the sets of sequential occupancies consistent with a fixed information structure and a given decision-rule history.
Given \(\mathbf{o}_{\textcolor{sthlmRed}{i},t}\) and \(d_{\textcolor{sthlmRed}{i},t}\), propagating any \(x_{\textcolor{sthlmRed}{i},t}\in\mathbf{o}_{\textcolor{sthlmRed}{i},t}\) through \(\tau_{\mathrm{seq}}(\cdot,d_{\textcolor{sthlmRed}{i},t})\) yields the unique successor
occupancy consistent with the successor decision-rule history; matching with \(\mathbf{o}_{\mathrm{next}(\textcolor{sthlmRed}{i},t)}\) therefore selects a unique element.}
\end{definition}

\section{Algorithms}
\label{sec:algorithms-seq}

\subsection{Point Based Value Iteration}

We now present a detailed description of the Point-Based Value Iteration (PBVI) algorithms for simultaneous zs-POSGs. We describe how PBVI \citep{pineau2003point} is adapted to solve either the original model $\mathcal{M}$ (Simultaneous PBVI, \cf Algorithm~\ref{pbvi:sim}) or its sequential reduction $\mathcal{M}_{\mathrm{seq}}$ (Sequential PBVI, \cf Algorithm~\ref{pbvi:seq}). In both cases, the algorithm alternates between two main phases: an \emph{expansion} phase, which incrementally builds a representative set of reachable occupancy states, and an \emph{improvement} phase, which updates the corresponding value function approximations over this sample.

\paragraph{Expansion phase -- \(\mathtt{expand}(\mathcal{X}_{\nu})\).}
Value updates are performed over a finite sample of occupancy states, rather than over the entire continuous occupancy space. This sample set is iteratively expanded by simulating likely trajectories from the initial belief state, typically by sampling decision rules according to a prescribed exploration strategy. By concentrating computation on reachable regions of the occupancy space, this procedure improves the accuracy of the value approximation. In the Sequential PBVI algorithm, the generated trajectories are twice as long as those of the simultaneous variant, since the effective planning horizon is doubled by the sequentialisation over the two players.

\paragraph{Improvement phase -- \(\mathtt{improve}(\mathcal{X}_{\nu}, \Gamma_{\textcolor{sthlmRed}{1},\mathtt{next}(\nu)})\).}
Value backups are carried out backward over the sampled occupancy states, from the final stage to the initial one, in a manner analogous to backward induction. This ordering enables an efficient propagation of value information across sub-stages and between players. For each occupancy state, a Bellman backup is obtained by solving a linear program that evaluates all decision rules available to the current player and selects the one that maximises the expected value. While the sequential formulation involves roughly twice as many sampled occupancy states, each linear program is substantially smaller, with significantly fewer variables and constraints.

\begin{algorithm}[H]
  \centering
  %------------------ \texttt{PBVI} procedure ------------------%
  \begin{minipage}[t]{.48\linewidth}
    \begin{algorithmic}
      \STATE ${\mathtt{function}~ \mathtt{SeqPBVI}(\ell)}$
      \STATE Initialise $\mathcal{X}_{\nu} \gets \emptyset$,  for all sub-stage $\nu$.
      \STATE Initialise $\Gamma_{\textcolor{sthlmRed}{1},\nu} \gets \emptyset$, for all sub-stage $\nu$.
      \WHILE{has not converged}
        \FOR{$\nu=(\textcolor{sthlmRed}{1},\ell),\ldots,(\textcolor{sthlmRed}{1},0)$}
          \STATE $\mathtt{improve}(\mathcal{X}_{\nu}, \Gamma_{\textcolor{sthlmRed}{1},\mathtt{next}(\nu)})$.
        \ENDFOR
        \FOR{$\nu=(\textcolor{sthlmRed}{1},0),\ldots,(\textcolor{sthlmRed}{2},\ell)$}
          \STATE $\mathcal{X}_{\mathtt{next}(\nu)} \gets \mathtt{expand}( \mathcal{X}_{\nu} )$.
        \ENDFOR
      \ENDWHILE
    \end{algorithmic}
  \end{minipage}%
  \hfill
  %---------------- improve procedure -------------------%
  \begin{minipage}[t]{.48\linewidth}
    \begin{algorithmic}
      \STATE ${\mathtt{function}~ \mathtt{improve}(\mathcal{X}_{\nu}, \Gamma_{\textcolor{sthlmRed}{1},\mathtt{next}(\nu)})}$
      \FOR{$x_{\nu} \in \mathcal{X}_{\nu}$}
        \STATE $\mathtt{sol}_{\nu} 
               \gets \texttt{LP}(\Gamma_{\textcolor{sthlmRed}{1},\mathtt{next}(\nu)}, x_{\nu})$  % 
        \STATE $\Gamma_{\textcolor{sthlmRed}{2},\nu} \gets \mathtt{Update}(\mathcal{X}_{\nu}, \Gamma_{\textcolor{sthlmRed}{1},\mathtt{next}(\nu)}, \mathtt{sol}_{\nu})$
        \STATE $\Gamma_{\textcolor{sthlmRed}{1},\nu} 
               \gets \Gamma_{\textcolor{sthlmRed}{1},\nu}
                  \cup \{\Gamma_{\textcolor{sthlmRed}{2},\nu}\}$  
                  % Update collection.
      \ENDFOR
    \end{algorithmic}
    \begin{algorithmic}
      \STATE ${\mathtt{function}~ \mathtt{expand}(\mathcal{X}_{\nu})}$
      \FOR{$x_{\nu} \in \mathcal{X}_{\nu}$}
        \STATE $\texttt{Sample randomly a set } \mathcal{D}_\nu \gets \{d_{\nu}\}$
        \STATE $\mathcal{X}_{\mathrm{next}(\nu)} 
               \gets \mathcal{X}_{\mathrm{next}(\nu)} \cup \{\tau_{\mathrm{seq}}(x_{\nu}, d_{\nu}) | d_\nu\in \mathcal{D}_\nu\}$ 
                  % Update collection.
      \ENDFOR
    \end{algorithmic}
  \end{minipage}
  \caption{Sequential \texttt{PBVI} for $\mathcal{M}_{\text{seq}}$.}
  \label{pbvi:seq}
\end{algorithm}

\begin{algorithm}[H]
  \centering
  %------------------ \texttt{PBVI} procedure ------------------%
  \begin{minipage}[t]{.48\linewidth}
    \begin{algorithmic}
      \STATE ${\mathtt{function}~ \mathtt{SimPBVI}(\ell)}$
      \STATE Initialise $\mathcal{X}_{t} \gets \emptyset$,  for all time-step $t$.
      \STATE Initialise $\Gamma_{\textcolor{sthlmRed}{1},t} \gets \emptyset$, for all time-steps $t$.
      \WHILE{has not converged}
        \FOR{$t=\ell-1,\ldots,0$}
          \STATE $\mathtt{improve}(\mathcal{X}_{t}, \Gamma_{\textcolor{sthlmRed}{1},t+1})$.
        \ENDFOR
        \FOR{$t=0,\dots,\ell$}
          \STATE $\mathcal{X}_{t+1} \gets \mathtt{expand}( \mathcal{X}_{t} )$.
        \ENDFOR
      \ENDWHILE
    \end{algorithmic}
  \end{minipage}%
  \hfill
  %---------------- improve procedure -------------------%
  \begin{minipage}[t]{.52\linewidth}
    \begin{algorithmic}
      \STATE ${\mathtt{function}~ \mathtt{improve}(\mathcal{X}_{t}, \Gamma_{\textcolor{sthlmRed}{1},t+1})}$
      \FOR{$x_{t} \in \mathcal{X}_{t}$}
        \STATE $\mathtt{sol}_{t} 
               \gets \texttt{LP}(\Gamma_{\textcolor{sthlmRed}{1},t+1}, x_{t})$  % 
        \STATE $\Gamma_{\textcolor{sthlmRed}{2},t} \gets \mathtt{Update}(\mathcal{X}_{t}, \Gamma_{\textcolor{sthlmRed}{1},t+1}, \mathtt{sol}_{t})$
        \STATE $\Gamma_{\textcolor{sthlmRed}{1},t} 
               \gets \Gamma_{\textcolor{sthlmRed}{1},t}
                  \cup \{\Gamma_{\textcolor{sthlmRed}{2},t}\}$  
                  % Update collection.
      \ENDFOR
    \end{algorithmic}
    \begin{algorithmic}
      \STATE ${\mathtt{function}~ \mathtt{expand}(\mathcal{X}_{t})}$
      \FOR{$x_{t} \in \mathcal{X}_{t}$}
        \STATE $\texttt{Sample randomly a set } \mathcal{D} \gets \{(d_{\textcolor{sthlmRed}{1},t}, d_{\textcolor{sthlmRed}{2},t})\}$
        \STATE $\mathcal{X}_{t+1} 
               \gets \mathcal{X}_{t+1} \cup \{\tau_{\mathrm{sim}}(x_{t}, (d_{\textcolor{sthlmRed}{1},t}, d_{\textcolor{sthlmRed}{2},t})) | (d_{\textcolor{sthlmRed}{1},t}, d_{\textcolor{sthlmRed}{2},t})\in \mathcal{D}\}$ 
                  % Update collection.
      \ENDFOR
    \end{algorithmic}
  \end{minipage}
  \caption{Simultaneous \texttt{PBVI} for $\mathcal{M}$.}
  \label{pbvi:sim}
\end{algorithm}

\subsection{Pruning strategy}

This section draws inspiration from \citep{escudie2025varepsilonoptimally}.  For any sub-stage~$\nu$, this procedure provides the subroutines required to prune unnecessary sets of hyperplanes $\Gamma_{\!\textcolor{sthlmRed}{2},\nu}$ from the collection $\Gamma_{\!\textcolor{sthlmRed}{1},\nu}$, as well as redundant points in $\mathcal{X}_\nu$. 

An envelope $\Gamma_{\!\textcolor{sthlmRed}{2},\nu} \in \Gamma_{\!\textcolor{sthlmRed}{1},\nu}$ is said to be dominated over $\mathcal{X}_\nu$ if, for every sequential occupancy state $x_\nu \in \mathcal{X}_\nu$, there exists another envelope $\Gamma'_{\!\textcolor{sthlmRed}{2},\nu} \in \Gamma_{\!\textcolor{sthlmRed}{1},\nu}$ such that
$\mathtt{Val}_{\Gamma_{\!\textcolor{sthlmRed}{2},\nu}}(x_\nu)
<
\mathtt{Val}_{\Gamma'_{\!\textcolor{sthlmRed}{2},\nu}}(x_\nu)
$.
According to Algorithm~\ref{pruning:zsposg}, a dominated envelope $\Gamma_{\!\textcolor{sthlmRed}{2},\nu}$ has no support point in $\mathcal{X}_\nu$, and therefore satisfies $\mathtt{refCount}(\Gamma_{\!\textcolor{sthlmRed}{2},\nu}) = 0$. The pruning routine consequently retains only envelopes with at least one support point in $\mathcal{X}_\nu$. That said, this procedure is not fully aggressive: a dominated envelope may still admit one or more support points in $\mathcal{X}_\nu$, although this situation appears unlikely in practice.

\begin{algorithm}[H]
        \caption{Bounded Pruning.}
        \label{pruning:zsposg}
        \begin{algorithmic}
            \STATE ${\mathtt{function}~ \mathtt{BoundedPruning}(\Gamma_{\!\!\textcolor{sthlmRed}{1},\nu},\mathcal{X}_\nu)}$
            \FOR{$\Gamma_{\!\!\textcolor{sthlmRed}{2},\nu}\in \Gamma_{\!\!\textcolor{sthlmRed}{1},\nu}$}
                \STATE $\mathtt{refCount}(\Gamma_{\!\!\textcolor{sthlmRed}{2},\nu}) \gets 0$.
            \ENDFOR
            \FOR{$x_\nu\in \mathcal{X}_\nu$}
                \STATE $\Gamma_{\!\!\textcolor{sthlmRed}{2},x_\nu} \gets \argmax_{\Gamma_{\!\!\textcolor{sthlmRed}{2}}\in \Gamma_{\!\!\textcolor{sthlmRed}{1}},\nu} \mathtt{Val}_{\Gamma_{\!\textcolor{sthlmRed}{2}}}(x_\nu)$
                \STATE  $\mathtt{refCount}(\Gamma_{\!\!\textcolor{sthlmRed}{2},x_\nu}) \gets \mathtt{refCount}(\Gamma_{\!\!\textcolor{sthlmRed}{2},x_\nu})+1$
            \ENDFOR
            \STATE \textbf{return} $\{\Gamma_{\!\!\textcolor{sthlmRed}{2},\nu}\in \Gamma_{\!\!\textcolor{sthlmRed}{1},\nu} \mid \mathtt{refCount}(\Gamma_{\!\!\textcolor{sthlmRed}{2},\nu})>0\}$
        \end{algorithmic}
\end{algorithm}

Similarly, we remove redundant points in \(\mathcal{X}_\nu\). Fix \(\epsilon>0\) and sequential occupancy state \(x_\nu\in \mathcal X_\nu\). Let \(\Gamma_{\!\!\textcolor{sthlmRed}{2},x_\nu}\in \Gamma_{\!\!\textcolor{sthlmRed}{1},\nu}\) denote an envelop within \(\Gamma_{\!\!\textcolor{sthlmRed}{1},\nu}\) that achieves the highest pessimistic value for player~\(\textcolor{sthlmRed}{2}\) at \(x_\nu\), \ie
\[ 
 \mathtt{Val}_{\Gamma_{\!\textcolor{sthlmRed}{2},x_\nu}}(x_\nu) 
 =
 \textstyle
 \max_{\Gamma_{\!\!\textcolor{sthlmRed}{2}}\in \Gamma_{\!\!\textcolor{sthlmRed}{1},\nu}}
 \mathtt{Val}_{\Gamma_{\!\textcolor{sthlmRed}{2}}}(x_\nu) 
 .
\]
We say that a point $x_\nu \in \mathcal{X}_\nu$ is $\epsilon$-redundant with respect to $(\mathcal{X}_\nu, \Gamma_{\!\textcolor{sthlmRed}{1},\nu})$ if and only if there exists another point $x'_\nu \in \mathcal{X}_\nu$ such that the gap
\(
|\mathtt{Val}_{\Gamma_{\!\textcolor{sthlmRed}{2},x_\nu}}(x_\nu) - \mathtt{Val}_{\Gamma_{\!\textcolor{sthlmRed}{2},x'_\nu}}(x_\nu)|
\)
is less than or equal to $\epsilon$. Algorithm~\ref{pruning:occupancystates} removes from $\mathcal{X}_\nu$ all $\epsilon$-redundant points.

\begin{algorithm}[H]
\caption{Redundant Sequential Occupancy State Pruning.}
\label{pruning:occupancystates}
\begin{algorithmic}
    \STATE ${\mathtt{function}~ \mathtt{PruneStates}(\mathcal{X}_\nu, \Gamma_{\!\!\textcolor{sthlmRed}{1},\nu}, \epsilon)}$
    \STATE Initialise $\mathcal{X}_\nu^{\circ} \gets \emptyset$
    \FOR{$x_\nu \in \mathcal{X}_\nu$}
        \STATE $\Gamma_{\!\!\textcolor{sthlmRed}{2},x_\nu} \gets \argmax_{\Gamma_{\!\!\textcolor{sthlmRed}{2}}\in \Gamma_{\!\!\textcolor{sthlmRed}{1},\nu}} \mathtt{Val}_{\Gamma_{\!\textcolor{sthlmRed}{2}}}(x_\nu)$
    \ENDFOR
    \FOR{$x_\nu \in \mathcal{X}_\nu$}
        \STATE $\texttt{isRedundant} \gets \texttt{false}$
        \FOR{$x'_\nu \in \mathcal{X}_\nu^{\circ}$}
            \IF{$|\mathtt{Val}_{\Gamma_{\!\textcolor{sthlmRed}{2},x_\nu}}(x_\nu) 
            -
            \mathtt{Val}_{\Gamma_{\!\textcolor{sthlmRed}{2},x'_\nu}}(x_\nu) | \leq \epsilon$}
                \STATE $\texttt{isRedundant} \gets \texttt{true}$ and \textbf{break}
            \ENDIF
        \ENDFOR
        \IF{$\neg \texttt{isRedundant}$}
            \STATE $\mathcal{X}_\nu^{\circ} \gets \mathcal{X}_\nu^{\circ} \cup \{x_\nu\}$
        \ENDIF
    \ENDFOR
    \STATE \textbf{return} $\mathcal{X}_\nu^{\circ}$
\end{algorithmic}
\end{algorithm}

\section{Experimental results}

\subsection{Benchmarks}

We evaluate our approach on several competitive benchmark problems, adapted from standard multi-agent settings.

% \paragraph{Matching Pennies.} Each player secretly chooses heads or tails. If the two choices match, Player 1 wins; otherwise, Player 2 wins. 

\paragraph{Multi-Agent Recycling.} In the original cooperative version, two robots must clean a room represented as a grid by emptying garbage cans. Each robot has limited battery life and a restricted view of the environment, including limited observability of the other robot. Coordination is therefore required. We adapt the task to a zero-sum setting by altering the reward function: each robot now aims to clean more efficiently than the other.

\paragraph{Multi-Agent Tiger.} The environment consists of two rooms—one containing a treasure and the other a tiger. Each agent stands before a door and may choose to either listen for cues or enter a room. Due to stochastic listening outcomes, agents receive noisy observations. Two competitive variants, \emph{Adversarial Tiger} and \emph{Competitive Tiger}, were introduced in~\citet{Wiggers16} to study adversarial behaviour under partial observability.

\paragraph{Multi-Agent Broadcast Channel (MABC).} This benchmark captures a communication scenario where two agents (nodes) must broadcast messages over a shared channel. To prevent collisions, only one node may broadcast at any time. While the original version is cooperative, maximising joint throughput, we consider a competitive variant by modifying the reward structure.

\paragraph{Kuhn Poker.} Kuhn Poker is a simple two-player, zero-sum poker game with three cards. Each player is dealt one hidden card and antes one chip. Players alternate betting or passing until one folds or the bets are equal. The higher card wins the pot.

% ADD TIME + VALUES

\subsection{Additional results}

\begin{figure}[htbp]
  \centering
  \begin{subfigure}
    \centering
    \begin{tikzpicture}
      \begin{axis}[
        title={\shortstack{Adversarial \\ Tiger}},
        title style={yshift=-20pt, fill=white, font=\subfigfontsize},
        xlabel=Iterations,
        ylabel=Exploitability,
        width=\subfigsize,
        height=\subfigsize,
        xmin=1, xmax=5,
        xtick={1, 3, 5},
        ymin=0, ymax=0.1,
        ytick={0, 0.05, 0.1},
        ylabel style={yshift=-6pt, font=\subfigfontsize},
        xlabel style={yshift=4pt, font=\subfigfontsize},
        grid=both,
        scaled y ticks=false,
        yticklabel style={
          /pgf/number format/fixed,
          /pgf/number format/fixed zerofill,
          /pgf/number format/precision=2,
          font=\subfigfontsize
        },
        style={very thick, font=\subfigfontsize},
        legend style={draw=none}
      ]
        \addplot+[const plot,mark=none,very thick,sthlmRed] table [col sep=comma, x=iter, y=exp_value] {./results/adversarial_tiger/seq2_adversarialtiger_H5_T1_S42.log};
        \addplot+[const plot,mark=none,very thick,sthlmGreen] table [col sep=comma, x=iter, y=exp_value] {./results/adversarial_tiger/sim_adversarialtiger_H5_T1.log};
      \end{axis}
    \end{tikzpicture}
  \end{subfigure}
  % \hfill
  \begin{subfigure}
    \centering
    \begin{tikzpicture}
      \begin{axis}[
        title={MABC},
        title style={yshift=-2.7ex, fill=white},
        xlabel=Iterations,
        % ylabel=Exploitability,
        width=\subfigsize,
        height=\subfigsize,
        xmin=1, xmax=4,
        xtick={1, 4},
        ymin=0, ymax=0.1,
        ytick={0, 0.05, 0.1},
        ylabel style={yshift=-6pt, font=\subfigfontsize},
        xlabel style={yshift=4pt, font=\subfigfontsize},
        grid=both,
        scaled y ticks=false,
        yticklabel style={
          /pgf/number format/fixed,
          /pgf/number format/fixed zerofill,
          /pgf/number format/precision=2,
          font=\subfigfontsize
        },
        style={very thick, font=\subfigfontsize},
        legend style={draw=none}
      ]
        \addplot+[const plot,mark=none,very thick,sthlmRed] table [col sep=comma, x=iter, y=exp_value] {./results/mabc/seq2_mabc_H5_T2_S42.log};
        \addplot+[const plot,mark=none,very thick,sthlmGreen] table [col sep=comma, x=iter, y=exp_value] {./results/mabc/sim_mabc_H5_T2.log};
      \end{axis}
    \end{tikzpicture}
  \end{subfigure}
  % \hfill
  \begin{subfigure}
    \centering
    \begin{tikzpicture}
      \begin{axis}[
        title={Recycling},
        title style={yshift=-2.7ex, fill=white},
        xlabel=Iterations,
        % ylabel=Exploitability,
        width=\subfigsize,
        height=\subfigsize,
        xmin=1, xmax=5,
        xtick={1, 3, 5},
        ymin=0, ymax=0.1,
        ytick={0, 0.05, 0.1},
        ylabel style={yshift=-6pt, font=\subfigfontsize},
        xlabel style={yshift=4pt, font=\subfigfontsize},
        grid=both,
        scaled y ticks=false,
        yticklabel style={
          /pgf/number format/fixed,
          /pgf/number format/fixed zerofill,
          /pgf/number format/precision=2,
          font=\subfigfontsize
        },
        style={very thick, font=\subfigfontsize},
        legend style={draw=none}
      ]
        \addplot+[const plot,mark=none,very thick,sthlmRed] table [col sep=comma, x=iter, y=exp_value] {./results/recycling/seq2_recycling_H5_T1_S42.log};
        \addplot+[const plot,mark=none,very thick,sthlmGreen] table [col sep=comma, x=iter, y=exp_value] {./results/recycling/sim_recycling_H5_T1.log};
      \end{axis}
    \end{tikzpicture}
  \end{subfigure}
  % \hfill
  % \begin{subfigure}[b]{0.22\textwidth}
  %   \centering
  %   \begin{tikzpicture}
  %     \begin{axis}[
  %       title={Kuhn Poker},
  %       title style={yshift=-2ex, fill=white},
  %       xlabel=Iterations,
  %       % ylabel=Exploitability,
  %       width=150pt,
  %       height=150pt,
  %       xmin=1, xmax=5,
  %       xtick={1, 3, 5},
  %       ymin=0, ymax=0.1,
  %       ytick={0, 0.05, 0.1},
  %       tick label style={font=\subfigfontsize},
  %       ylabel style={yshift=-6pt, font=\subfigfontsize},
  %       xlabel style={yshift=4pt, font=\subfigfontsize},
  %       grid=both,
  %       scaled y ticks=false,
  %       yticklabel style={
  %         /pgf/number format/fixed,
  %         /pgf/number format/fixed zerofill,
  %         /pgf/number format/precision=2,
  %         font=\subfigfontsize
  %       },
  %       style={very thick, font=\subfigfontsize},
  %       legend style={draw=none}
  %     ]
  %       % \addplot+[const plot,mark=none,very thick,sthlmRed] table [col sep=comma, x=iter, y=exp_value] {./results/};
  %       % \addplot+[const plot,mark=none,very thick,sthlmGreen] table [col sep=comma, x=iter, y=exp_value] {./results/};
  %     \end{axis}
  %   \end{tikzpicture}
  % \end{subfigure}

  \par

  \begin{subfigure}
    \centering
    \begin{tikzpicture}
      \begin{axis}[
        title={\shortstack{Adversarial \\ Tiger}},
        title style={yshift=-3.3ex, fill=white},
        xlabel=Iterations,
        ylabel=Nb. of points sampled,
        width=\subfigsize,
        height=\subfigsize,
        xmin=1, xmax=5,
        xtick={1, 3, 5},
        ymin=0, ymax=150,
        ytick={0, 75, 150},
        ylabel style={yshift=-6pt, font=\subfigfontsize},
        xlabel style={yshift=4pt, font=\subfigfontsize},
        grid=both,
        scaled y ticks=false,
        yticklabel style={
          /pgf/number format/fixed,
          /pgf/number format/fixed zerofill,
          /pgf/number format/precision=0,
          font=\subfigfontsize
        },
        style={very thick, font=\subfigfontsize},
        legend style={draw=none}
      ]
        \addplot+[const plot,mark=none,very thick,sthlmRed] table [col sep=comma, x=iter, y=S_size] {./results/adversarial_tiger/seq2_adversarialtiger_H5_T1_S42.log};
        \addplot+[const plot,mark=none,very thick,sthlmGreen] table [col sep=comma, x=iter, y=S_size] {./results/adversarial_tiger/sim_adversarialtiger_H5_T1.log};
      \end{axis}
    \end{tikzpicture}
  \end{subfigure}
  % \hfill
  \begin{subfigure}
    \centering
    \begin{tikzpicture}
      \begin{axis}[
        title={MABC},
        title style={yshift=-2.7ex, fill=white},
        xlabel=Iterations,
        % ylabel=Exploitability,
        width=\subfigsize,
        height=\subfigsize,
        xmin=1, xmax=4,
        xtick={1, 4},
        ymin=0, ymax=100,
        ytick={0, 50, 100},
        ylabel style={yshift=-6pt, font=\subfigfontsize},
        xlabel style={yshift=4pt, font=\subfigfontsize},
        grid=both,
        scaled y ticks=false,
        yticklabel style={
          /pgf/number format/fixed,
          /pgf/number format/fixed zerofill,
          /pgf/number format/precision=0,
          font=\subfigfontsize
        },
        style={very thick, font=\subfigfontsize},
        legend style={draw=none}
      ]
        \addplot+[const plot,mark=none,very thick,sthlmRed] table [col sep=comma, x=iter, y=S_size] {./results/mabc/seq2_mabc_H5_T2_S42.log};
        \addplot+[const plot,mark=none,very thick,sthlmGreen] table [col sep=comma, x=iter, y=S_size] {./results/mabc/sim_mabc_H5_T2.log};
      \end{axis}
    \end{tikzpicture}
  \end{subfigure}
  % \hfill
  \begin{subfigure}
    \centering
    \begin{tikzpicture}
      \begin{axis}[
        title={Recycling},
        title style={yshift=-2.7ex, fill=white},
        xlabel=Iterations,
        % ylabel=Exploitability,
        width=\subfigsize,
        height=\subfigsize,
        xmin=1, xmax=5,
        xtick={1, 3, 5},
        ymin=0, ymax=200,
        ytick={0,100,200},
        ylabel style={yshift=-6pt, font=\subfigfontsize},
        xlabel style={yshift=4pt, font=\subfigfontsize},
        grid=both,
        scaled y ticks=false,
        yticklabel style={
          /pgf/number format/fixed,
          /pgf/number format/fixed zerofill,
          /pgf/number format/precision=0,
          font=\subfigfontsize
        },
        style={very thick, font=\subfigfontsize},
        legend style={
            at={(0.05,0.85)},
            thin,
            legend cell align=left,
            anchor=north west,
            font=\small,
            draw=none,
            legend columns=1,
            nodes={scale=0.6},
            inner sep=0.3pt,
            name=leg
        },
        legend image post style={
            xscale=0.2
        }
      ]
        \addplot+[const plot,mark=none,very thick,sthlmRed] table [col sep=comma, x=iter, y=S_size] {./results/recycling/seq2_recycling_H4_T1_S42.log};
        \addlegendentry{Seq}
        \addplot+[const plot,mark=none,very thick,sthlmGreen] table [col sep=comma, x=iter, y=S_size] {./results/recycling/sim_recycling_H4_T1.log};
        \addlegendentry{Sim}
      \end{axis}
        \begin{scope}[on background layer]
            \draw[black, thin] (leg.north west) rectangle (leg.south east);
            \draw[black, thin] ([xshift=-0.8pt,yshift=0.8pt]leg.north west) rectangle ([xshift=0.8pt,yshift=-0.8pt]leg.south east);
        \end{scope}
    \end{tikzpicture}
  \end{subfigure}
  
  \caption{\textbf{Top:} Exploitability of PBVI across iterations on different benchmarks (\(\ell=5\)). The \textcolor{sthlmGreen}{green curve} corresponds to the simultaneous variant, while the \textcolor{sthlmRed}{red curve} represents the sequential variant. \textbf{Bottom:} Number of points sampled at each iteration by \textcolor{sthlmGreen}{simultaneous} and \textcolor{sthlmRed}{sequential} variants.}
  \label{fig:value_and_exp_plots}
\end{figure}
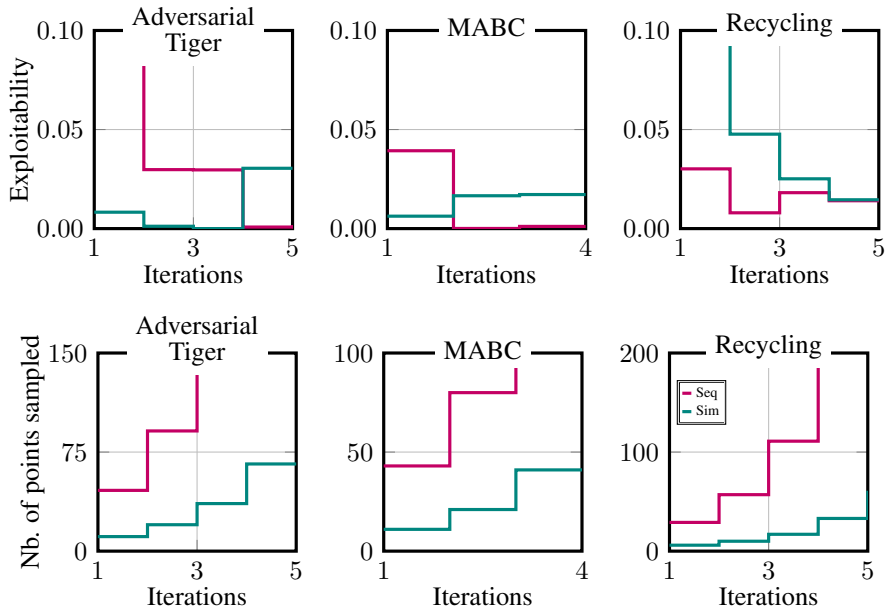
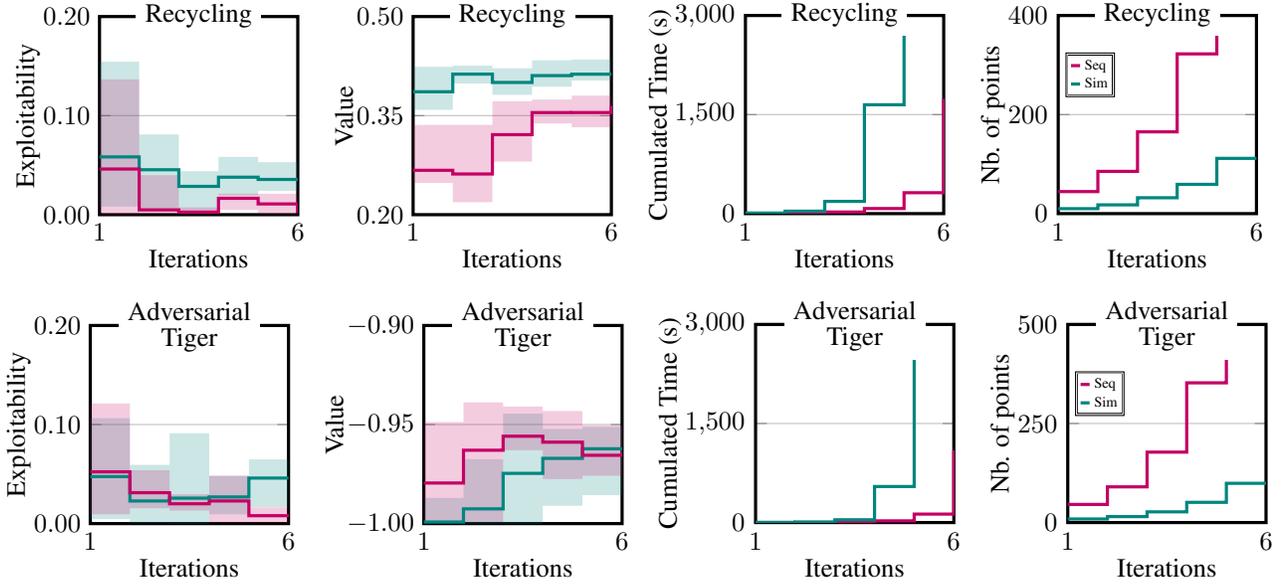
\begin{figure}[htbp]
  \centering

  \begin{subfigure}
    \centering
    \begin{tikzpicture}
      \begin{axis}[
        title={Recycling},
        title style={yshift=-14pt, fill=white},
        xlabel=Iterations,
        ylabel=Exploitability,
        width=\subfigsize,
        height=\subfigsize,
        xmin=1, xmax=6,
        xtick={1, 6},
        ymin=0, ymax=0.2,
        ytick={0, 0.1, 0.2},
        ylabel style={yshift=-6pt,font=\subfigfontsize},
        xlabel style={yshift=4pt,font=\subfigfontsize},
        grid=both,
        scaled y ticks=false,
        yticklabel style={
          /pgf/number format/fixed,
          /pgf/number format/fixed zerofill,
          /pgf/number format/precision=2,
         font=\subfigfontsize
        },
        style={very thick,font=\subfigfontsize},
        axis line style={very thick},
        legend style={draw=none}
      ]
        \addplot+[const plot, mark=none, very thick, sthlmGreen, opacity=1., solid] table [col sep=comma, x=iter, y=exp_value] {./results/recycling/random/sim_mean.log};
        \addplot+[const plot, mark=none, very thick, sthlmRed, opacity=1., solid] table [col sep=comma, x=iter, y=exp_value] {./results/recycling/random/seq_mean.log};
        
        \addplot+[name path=sim_max, const plot, mark=none, thick, sthlmGreen, opacity=0., solid] table [col sep=comma, x=iter, y=exp_value] {./results/recycling/random/sim_max.log};
        \addplot+[name path=sim_min, const plot, mark=none, thick, sthlmGreen, opacity=0., solid] table [col sep=comma, x=iter, y=exp_value] {./results/recycling/random/sim_min.log};
        \addplot[
            fill=sthlmGreen,
            fill opacity=0.2
        ]
        fill between[
            of=sim_max and sim_min
        ];

        \addplot+[name path=seq_max, const plot, mark=none, thick, sthlmRed, opacity=0., solid] table [col sep=comma, x=iter, y=exp_value] {./results/recycling/random/seq_max.log};
        \addplot+[name path=seq_min, const plot, mark=none, thick, sthlmRed, opacity=0., solid] table [col sep=comma, x=iter, y=exp_value] {./results/recycling/random/seq_min.log};
        \addplot[
            fill=sthlmRed,
            fill opacity=0.2
        ]
        fill between[
            of=seq_max and seq_min
        ];
      \end{axis}
    \end{tikzpicture}
  \end{subfigure}
  % \hfill
  \begin{subfigure}
    \centering
    \begin{tikzpicture}
      \begin{axis}[
        title={Recycling},
        title style={yshift=-14pt, fill=white},
        xlabel=Iterations,
        ylabel=Value,
        width=\subfigsize,
        height=\subfigsize,
        xmin=1, xmax=6,
        xtick={1, 6},
        ymin=0.2, ymax=0.5,
        ytick={0.2, 0.35, 0.5},
        ylabel style={yshift=-6pt,font=\subfigfontsize},
        xlabel style={yshift=4pt,font=\subfigfontsize},
        grid=both,
        scaled y ticks=false,
        yticklabel style={
          /pgf/number format/fixed,
          /pgf/number format/fixed zerofill,
          /pgf/number format/precision=2,
         font=\subfigfontsize
        },
        style={very thick,font=\subfigfontsize},
        axis line style={very thick},
        legend style={draw=none}
      ]
        \addplot+[const plot, mark=none, very thick, sthlmGreen, opacity=1., solid] table [col sep=comma, x=iter, y=value] {./results/recycling/random/sim_mean.log};
        \addplot+[const plot, mark=none, very thick, sthlmRed, opacity=1., solid] table [col sep=comma, x=iter, y=value] {./results/recycling/random/seq_mean.log};
        
        \addplot+[name path=sim_max, const plot, mark=none, thick, sthlmGreen, opacity=0., solid] table [col sep=comma, x=iter, y=value] {./results/recycling/random/sim_max.log};
        \addplot+[name path=sim_min, const plot, mark=none, thick, sthlmGreen, opacity=0., solid] table [col sep=comma, x=iter, y=value] {./results/recycling/random/sim_min.log};
        \addplot[
            fill=sthlmGreen,
            fill opacity=0.2
        ]
        fill between[
            of=sim_max and sim_min
        ];

        \addplot+[name path=seq_max, const plot, mark=none, thick, sthlmRed, opacity=0., solid] table [col sep=comma, x=iter, y=value] {./results/recycling/random/seq_max.log};
        \addplot+[name path=seq_min, const plot, mark=none, thick, sthlmRed, opacity=0., solid] table [col sep=comma, x=iter, y=value] {./results/recycling/random/seq_min.log};
        \addplot[
            fill=sthlmRed,
            fill opacity=0.2
        ]
        fill between[
            of=seq_max and seq_min
        ];
      \end{axis}
    \end{tikzpicture}
  \end{subfigure}
  % \hfill
  \begin{subfigure}
    \centering
    \begin{tikzpicture}
      \begin{axis}[
        title={Recycling},
        title style={yshift=-14pt, fill=white},
        xlabel=Iterations,
        ylabel=Cumulated Time (s),
        width=\subfigsize,
        height=\subfigsize,
        xmin=1, xmax=6,
        xtick={1, 6},
        ymin=0, ymax=3000,
        ytick={0, 1500, 3000},
        ylabel style={yshift=-6pt,font=\subfigfontsize},
        xlabel style={yshift=4pt,font=\subfigfontsize},
        grid=both,
        scaled y ticks=false,
        yticklabel style={
          /pgf/number format/fixed,
          /pgf/number format/fixed zerofill,
          /pgf/number format/precision=0,
         font=\subfigfontsize
        },
        style={very thick,font=\subfigfontsize},
        % legend style={draw=none},
        % legend style={
        %     at={(0.05,0.8)},
        %     thin,
        %     legend cell align=left,
        %     anchor=north west,
        %     font=\small,
        %     inner sep=0.3pt,
        %     draw=none,
        %     legend columns=1,
        %     nodes={scale=0.6},
        %     name=leg
        % },
        % legend image post style={
        %     xscale=0.2
        % }
      ]
        \addplot+[const plot, mark=none, very thick, sthlmRed, opacity=1.] table [col sep=comma, x=iter, y=cumul_time] {./results/recycling/random/seq_mean.log};
        % \addlegendentry{Seq}
        
        \addplot+[const plot, mark=none, very thick, solid, sthlmGreen, opacity=1.] table [col sep=comma, x=iter, y=cumul_time] {./results/recycling/random/sim_mean.log};
        % \addlegendentry{Sim}
      \end{axis}
    \end{tikzpicture}
  \end{subfigure}
  \begin{subfigure}
    \centering
    \begin{tikzpicture}
      \begin{axis}[
        title={Recycling},
        title style={yshift=-14pt, fill=white},
        xlabel=Iterations,
        ylabel=Nb. of points,
        width=\subfigsize,
        height=\subfigsize,
        xmin=1, xmax=6,
        xtick={1, 6},
        ymin=0, ymax=400,
        ytick={0, 200, 400},
        ylabel style={yshift=-6pt,font=\subfigfontsize},
        xlabel style={yshift=4pt,font=\subfigfontsize},
        grid=both,
        scaled y ticks=false,
        yticklabel style={
          /pgf/number format/fixed,
          /pgf/number format/fixed zerofill,
          /pgf/number format/precision=0,
         font=\subfigfontsize
        },
        style={very thick,font=\subfigfontsize},
        legend style={draw=none},
        legend style={
            at={(0.05,0.8)},
            thin,
            legend cell align=left,
            anchor=north west,
            font=\small,
            inner sep=0.3pt,
            draw=none,
            legend columns=1,
            nodes={scale=0.6},
            name=leg
        },
        legend image post style={
            xscale=0.2
        }
      ]
        \addplot+[const plot, mark=none, very thick, sthlmRed, opacity=1.] table [col sep=comma, x=iter, y=S_size] {./results/recycling/random/seq_mean.log};
        \addlegendentry{Seq}
        
        \addplot+[const plot, mark=none, very thick, solid, sthlmGreen, opacity=1.] table [col sep=comma, x=iter, y=S_size] {./results/recycling/random/sim_mean.log};
        \addlegendentry{Sim}
      \end{axis}
      
        \begin{scope}[on background layer]
            \draw[black, thin] (leg.north west) rectangle (leg.south east);
            \draw[black, thin] ([xshift=-0.8pt,yshift=0.8pt]leg.north west) rectangle ([xshift=0.8pt,yshift=-0.8pt]leg.south east);
        \end{scope}
    \end{tikzpicture}
  \end{subfigure}

  \par

  \begin{subfigure}
    \centering
    \begin{tikzpicture}
      \begin{axis}[
        title={\shortstack{Adversarial \\ Tiger}},
        title style={yshift=-20pt, fill=white},
        xlabel=Iterations,
        ylabel=Exploitability,
        width=\subfigsize,
        height=\subfigsize,
        xmin=1, xmax=6,
        xtick={1, 6},
        ymin=0, ymax=0.2,
        ytick={0, 0.1, 0.2},
        ylabel style={yshift=-6pt,font=\subfigfontsize},
        xlabel style={yshift=4pt,font=\subfigfontsize},
        grid=both,
        scaled y ticks=false,
        yticklabel style={
          /pgf/number format/fixed,
          /pgf/number format/fixed zerofill,
          /pgf/number format/precision=2,
         font=\subfigfontsize
        },
        style={very thick,font=\subfigfontsize},
        axis line style={very thick},
        legend style={draw=none}
      ]
        \addplot+[const plot, mark=none, very thick, sthlmGreen, opacity=1., solid] table [col sep=comma, x=iter, y=exp_value] {./results/adversarial_tiger/random/sim_mean.log};
        \addplot+[const plot, mark=none, very thick, sthlmRed, opacity=1., solid] table [col sep=comma, x=iter, y=exp_value] {./results/adversarial_tiger/random/seq_mean.log};
        
        \addplot+[name path=sim_max, const plot, mark=none, thick, sthlmGreen, opacity=0., solid] table [col sep=comma, x=iter, y=exp_value] {./results/adversarial_tiger/random/sim_max.log};
        \addplot+[name path=sim_min, const plot, mark=none, thick, sthlmGreen, opacity=0., solid] table [col sep=comma, x=iter, y=exp_value] {./results/adversarial_tiger/random/sim_min.log};
        \addplot[
            fill=sthlmGreen,
            fill opacity=0.2
        ]
        fill between[
            of=sim_max and sim_min
        ];

        \addplot+[name path=seq_max, const plot, mark=none, thick, sthlmRed, opacity=0., solid] table [col sep=comma, x=iter, y=exp_value] {./results/adversarial_tiger/random/seq_max.log};
        \addplot+[name path=seq_min, const plot, mark=none, thick, sthlmRed, opacity=0., solid] table [col sep=comma, x=iter, y=exp_value] {./results/adversarial_tiger/random/seq_min.log};
        \addplot[
            fill=sthlmRed,
            fill opacity=0.2
        ]
        fill between[
            of=seq_max and seq_min
        ];
      \end{axis}
    \end{tikzpicture}
  \end{subfigure}
  % \hfill
  \begin{subfigure}
    \centering
    \begin{tikzpicture}
      \begin{axis}[
        title={\shortstack{Adversarial \\ Tiger}},
        title style={yshift=-20pt, fill=white},
        xlabel=Iterations,
        ylabel=Value,
        width=\subfigsize,
        height=\subfigsize,
        xmin=1, xmax=6,
        xtick={1, 6},
        ymin=-1., ymax=-0.9,
        ytick={-1, -0.95, -0.9},
        ylabel style={yshift=-6pt,font=\subfigfontsize},
        xlabel style={yshift=4pt,font=\subfigfontsize},
        grid=both,
        scaled y ticks=false,
        yticklabel style={
          /pgf/number format/fixed,
          /pgf/number format/fixed zerofill,
          /pgf/number format/precision=2,
         font=\subfigfontsize
        },
        style={very thick,font=\subfigfontsize},
        axis line style={very thick},
        legend style={draw=none}
      ]
        \addplot+[const plot, mark=none, very thick, sthlmGreen, opacity=1., solid] table [col sep=comma, x=iter, y=value] {./results/adversarial_tiger/random/sim_mean.log};
        \addplot+[const plot, mark=none, very thick, sthlmRed, opacity=1., solid] table [col sep=comma, x=iter, y=value] {./results/adversarial_tiger/random/seq_mean.log};
        
        \addplot+[name path=sim_max, const plot, mark=none, thick, sthlmGreen, opacity=0., solid] table [col sep=comma, x=iter, y=value] {./results/adversarial_tiger/random/sim_max.log};
        \addplot+[name path=sim_min, const plot, mark=none, thick, sthlmGreen, opacity=0., solid] table [col sep=comma, x=iter, y=value] {./results/adversarial_tiger/random/sim_min.log};
        \addplot[
            fill=sthlmGreen,
            fill opacity=0.2
        ]
        fill between[
            of=sim_max and sim_min
        ];

        \addplot+[name path=seq_max, const plot, mark=none, thick, sthlmRed, opacity=0., solid] table [col sep=comma, x=iter, y=value] {./results/adversarial_tiger/random/seq_max.log};
        \addplot+[name path=seq_min, const plot, mark=none, thick, sthlmRed, opacity=0., solid] table [col sep=comma, x=iter, y=value] {./results/adversarial_tiger/random/seq_min.log};
        \addplot[
            fill=sthlmRed,
            fill opacity=0.2
        ]
        fill between[
            of=seq_max and seq_min
        ];
      \end{axis}
    \end{tikzpicture}
  \end{subfigure}
  % \hfill
  \begin{subfigure}
    \centering
    \begin{tikzpicture}
      \begin{axis}[
        title={\shortstack{Adversarial \\ Tiger}},
        title style={yshift=-20pt, fill=white},
        xlabel=Iterations,
        ylabel=Cumulated Time (s),
        width=\subfigsize,
        height=\subfigsize,
        xmin=1, xmax=6,
        xtick={1, 6},
        ymin=0, ymax=3000,
        ytick={0, 1500, 3000},
        ylabel style={yshift=-6pt,font=\subfigfontsize},
        xlabel style={yshift=4pt,font=\subfigfontsize},
        grid=both,
        scaled y ticks=false,
        yticklabel style={
          /pgf/number format/fixed,
          /pgf/number format/fixed zerofill,
          /pgf/number format/precision=0,
         font=\subfigfontsize
        },
        style={very thick,font=\subfigfontsize},
        legend style={draw=none},
        style={very thick,font=\subfigfontsize}
      ]
        \addplot+[const plot, mark=none, very thick, sthlmRed, opacity=1.] table [col sep=comma, x=iter, y=cumul_time] {./results/adversarial_tiger/random/seq_mean.log};
        \addplot+[const plot, mark=none, very thick, solid, sthlmGreen, opacity=1.] table [col sep=comma, x=iter, y=cumul_time] {./results/adversarial_tiger/random/sim_mean.log};
      \end{axis}
    \end{tikzpicture}
  \end{subfigure}
  \begin{subfigure}
    \centering
    \begin{tikzpicture}
      \begin{axis}[
        title={\shortstack{Adversarial \\ Tiger}},
        title style={yshift=-20pt, fill=white},
        xlabel=Iterations,
        ylabel=Nb. of points,
        width=\subfigsize,
        height=\subfigsize,
        xmin=1, xmax=6,
        xtick={1, 6},
        ymin=0, ymax=500,
        ytick={0, 250, 500},
        ylabel style={yshift=-6pt,font=\subfigfontsize},
        xlabel style={yshift=4pt,font=\subfigfontsize},
        grid=both,
        scaled y ticks=false,
        yticklabel style={
          /pgf/number format/fixed,
          /pgf/number format/fixed zerofill,
          /pgf/number format/precision=0,
         font=\subfigfontsize
        },
        style={very thick,font=\subfigfontsize},
        legend style={draw=none},
        style={very thick,font=\subfigfontsize},
        legend style={
            at={(0.05,0.75)},
            thin,
            legend cell align=left,
            anchor=north west,
            font=\small,
            inner sep=0.3pt,
            draw=none,
            legend columns=1,
            nodes={scale=0.6},
            name=leg
        },
        legend image post style={
            xscale=0.2
        }
      ]
        \addplot+[const plot, mark=none, very thick, sthlmRed, opacity=1.] table [col sep=comma, x=iter, y=S_size] {./results/adversarial_tiger/random/seq_mean.log};
        \addlegendentry{Seq}
        \addplot+[const plot, mark=none, very thick, solid, sthlmGreen, opacity=1.] table [col sep=comma, x=iter, y=S_size] {./results/adversarial_tiger/random/sim_mean.log};
        \addlegendentry{Sim}
      \end{axis}
      
        \begin{scope}[on background layer]
            \draw[black, thin] (leg.north west) rectangle (leg.south east);
            \draw[black, thin] ([xshift=-0.8pt,yshift=0.8pt]leg.north west) rectangle ([xshift=0.8pt,yshift=-0.8pt]leg.south east);
        \end{scope}
    \end{tikzpicture}
  \end{subfigure}

  \caption{Exploitability, value, cumulative computation time, and number of sampled points of PBVI over iterations on the Recycling and Adversarial Tiger benchmarks ($\ell$=5), evaluated across different random seeds. The \textcolor{sthlmGreen}{green curves} correspond to the \textcolor{sthlmGreen}{simultaneous} variant, while the \textcolor{sthlmRed}{red curves} represent the \textcolor{sthlmRed}{sequential} variant.}
  \label{fig:seed_robustness}
\end{figure}
\begin{table}[H]
\centering
\caption{Snapshot of results (time in seconds, exploitability $\varepsilon$). For each setting, we report the runtime of each algorithm (in seconds), as well as the value \(v(b)\) and the exploitability \(\varepsilon\). Speedup is \texttt{Sim} time divided by \texttt{Seq} time. \textsc{oot}: timeout; \textsc{oom}: out of memory; ‘--’: exploitability budget exceeded. Best values per row are highlighted in \highest{magenta} 
}
\label{table:complete_results:value_and_exploitability}
\scalebox{0.75}{%
\begin{tabular}{@{}r r rrr rrr rrr rrr@{}}
\toprule
\textbf{Game ($\ell$)} & \textbf{Speedup} & \multicolumn{3}{c}{\textbf{\texttt{Seq}}} & \multicolumn{3}{c}{\textbf{\texttt{Sim}}} & \multicolumn{3}{c}{\textbf{\texttt{HSVI}}} & \multicolumn{3}{c}{\textbf{\texttt{CFR+}}} \\
& & time & \(v(b)\) & \(\varepsilon\) & time & \(v(b)\) & \(\varepsilon\) & time & \(v(b)\) & \(\varepsilon\) & time & \(v(b)\) & \(\varepsilon\) \\
\cmidrule(lr){1-1}\cmidrule(lr){2-2}\cmidrule(lr){3-5}\cmidrule(lr){6-8}\cmidrule(lr){9-11}\cmidrule(lr){12-14}

\myrowcolour kuhn poker & 0.55 & 0.2 & \highest{0.055} & \highest{0.00} & 0.11 & \highest{0.055} & \highest{0.00} & \textsc{na} & \textsc{na} & \textsc{na} & 0.01 & \highest{0.055} & \highest{0.00}\\

% \hline

% matchingpennies(3) &  & 2.51 & \highest{0.6} & \highest{0.00} & 26 & \highest{0.6} & \highest{0.00} & 1 & \highest{0.6} & \highest{0.00} & \highest{0.1} & \highest{0.6} & \highest{0.00}\\
% \myrowcolour matchingpennies(4) &  & 10 & 0.79 & \highest{0.00} & 4 & \highest{0.8} & \highest{0.00} & 5 & \highest{0.8} & 0.00 & \highest{1} & \highest{0.8} & \highest{0.00}\\
% matchingpennies(5) &  & 5 & 0.98 & \highest{0.00} & 19 & \highest{1.} & 0.01 & 70 & 1. & 0.00 & \highest{13} & 1. & \highest{0.00}\\
% \myrowcolour matchingpennies(7) &  & 104 & \highest{1.34} & \highest{0.00} & 50 & \highest{1.49} & \highest{} & \textsc{oot} &  &  & \highest{55} & \highest{1.4} & \highest{0.00}\\
% matchingpennies(10) &  & 107 & \highest{1.95} & \highest{0.01} &  & \highest{} & \highest{} & \textsc{oot} &  &  & \highest{86} & \highest{2} & \highest{0.00}\\
% \myrowcolour matchingpennies(12) &  & 46 & \highest{2.23} & \highest{0.08} &  & \highest{} & \highest{} & \textsc{oot} &  &  & & \highest{} & \highest{0.00}\\
% matchingpennies(14) &  & 76 & \highest{2.61} & \highest{0.27} &  & \highest{} & \highest{} & \textsc{oot} &  &  & & \highest{} & \highest{0.00}\\

\hline

\myrowcolour adversarial-tiger(3) & 7.33 & \highest{0.15} & \highest{-0.56} & \highest{0.00} & 1.1 & \highest{-0.56} & \highest{0.00} & 500 & \highest{-0.56} & \highest{0.00} & 1 & \highest{-0.56} & \highest{0.00} \\
adversarial-tiger(4) & 1.4 & \highest{10} & -0.76 & \highest{0.00} & 14 & -0.76 & \highest{0.00} & \textsc{oot} &  &  & 17 & \highest{-0.75} & 0.01 \\
\myrowcolour adversarial-tiger(5) & 1.7 & \highest{30} & -0.97 & \highest{0.00} & 51 & -0.97 & \highest{0.00} & \textsc{oot} &  &  & 181 & \highest{-0.95} & 0.01 \\
adversarial-tiger(7) & 1.83 & \highest{264} & \highest{-1.36} & \highest{0.00} & 485 & -1.37 & 0.02 & \textsc{oot} &  &  & \textsc{oom} &  & \\
\myrowcolour adversarial-tiger(10) & 26.6 & \highest{287} & \highest{-1.95} & \highest{0.02} & 7648 & -1.98 & 0.05 & \textsc{oot} &  &  & \textsc{oom} &  & \\
adversarial-tiger(12) & & \highest{782} & \highest{-2.38} & \highest{0.02} & \textsc{oot} &  &  & \textsc{oot} &  &  & \textsc{oom} &  &  \\
\myrowcolour adversarial-tiger(14) & & \highest{1840} & \highest{-2.87} & \highest{0.18} & \textsc{oot} &  &  & \textsc{oot} &  &  & \textsc{oom} &  & \\

\hline

mabc(3) & 20 & 0.85 & \highest{0.096} & \highest{0.00} & 17 & \highest{0.096} & \highest{0.00} & 70 & \highest{0.096} & \highest{0.00} & \highest{0.5} & \highest{0.096} & \highest{0.00} \\
\myrowcolour mabc(4) & 20.8 & 15 & \highest{0.11} & \highest{0.00} & 312 & 0.10 & 0.01 & \textsc{oot} &  &  & \highest{4} & \highest{0.11} & \highest{0.00} \\
mabc(5) & 3.1 & \highest{44} & 0.11 & \highest{0.00} & 136 & \highest{0.12} & 0.02 & \textsc{oot} &  &  & 51 & 0.12 & 0.01 \\
\myrowcolour mabc(7) & 18.5 & \highest{178} & \highest{0.13} & \highest{0.00} & 3298 & \highest{0.13} & 0.04 & \textsc{oot} &  &  & \textsc{oom} &  & \\
mabc(10) & & \highest{1377} & \highest{0.17} & 0.05 & \textsc{oot} &  &  & \textsc{oot} &  &  & \textsc{oom} &  & \\

\hline

\myrowcolour recycling(3) & 0.18 & 27 & \highest{0.32} & \highest{0.00} & \highest{5} & \highest{0.32} & \highest{0.00} & 430 & 0.32 & \highest{0.00} & 6 & \highest{0.32} & \highest{0.00} \\
recycling(4) & 0.4 & 36 & 0.35 & \highest{0.01} & \highest{15} & \highest{0.36} & \highest{0.01} & \textsc{oot} &  &  & 80 & \highest{0.36} & 0.03 \\
\myrowcolour recycling(5) & 3.1 & \highest{72} & 0.38 & \highest{0.01} & 225 & \highest{0.40} & 0.03 & \textsc{oot} &  &  & \textsc{oom} &  &  \\
recycling(7) & 23.4 & \highest{84} & 0.46 & \highest{0.01} & 1969 & \highest{0.48} & 0.02 & \textsc{oot} &  &  & \textsc{oom} &  &  \\
\myrowcolour recycling(10) & 14.7 & \highest{1616} & 0.6 & \highest{0.00} & 23798 & \highest{0.62} & 0.05 & \textsc{oot} &  &  & \textsc{oom} &  &  \\

\hline

\myrowcolour competitive-tiger(3) & 0.93 & 48 & -0.04 & 0.02 & 45 & -0.04 & 0.02 & 291 & \highest{0.00} & \highest{0.00} & \highest{17} & \highest{0.00} & \highest{0.00} \\
competitive-tiger(4) & 1.98 & \highest{60} & -0.07 & 0.01 & 119 & \highest{-0.06} & \highest{0.00} & \textsc{oot} &  &  & \textsc{oom} &  &  \\
\myrowcolour competitive-tiger(5) & 2.11 & \highest{152} & -0.09 & 0.03 & 322 & \highest{-0.08} & \highest{0.02} & \textsc{oot} &  &  & \textsc{oom} &  &  \\
competitive-tiger(7) & 3.7 & \highest{450} & \highest{-0.13} & 0.08 & 1685 & -0.15 & \highest{0.06} & \textsc{oot} &  &  & \textsc{oom} &  &  \\
\myrowcolour competitive-tiger(10) &  & \highest{847} & \highest{-0.23} & \highest{0.03} & \textsc{oot} &  &  & \textsc{oot} &  &  & \textsc{oom} &  &  \\

\bottomrule
\end{tabular}
}
\end{table}

\subsection{Implementation Details}

The results presented in this paper were run on CPU (AMD Ryzen 7 PRO).

\begin{table}[H]
\centering
\caption{Hyperparameters employed by SeqPBVI.}
\label{table:complete_results:value_and_exploitability}
\scalebox{0.75}{%
\begin{tabular}{@{}r r r r@{}}
\toprule
\textbf{Game} & \textbf{Pruning threshold} & \textbf{Bound. Pruning} & \textbf{Sampling distance}\\

\hline 

\myrowcolour kuhn poker & 1e-5 & 1 & 0.0\\
adversarial tiger & 1e-5 & 1 & 0.1\\
\myrowcolour mabc & 1e-5 & 1 & 0.1\\
recycling & 1e-5 & 1 & 0.1\\
\myrowcolour competitive tiger & 1e-5 & 1 & 0.2\\

\bottomrule
\end{tabular}
}
\end{table}

%%%%%%%%%%%%%%%%%%%%%%%%%%%%%%%%%%%%%%%%%%%%%%%%%%%%%%%%%%%%%%%%%%%%%%%%%%%%%%%
%%%%%%%%%%%%%%%%%%%%%%%%%%%%%%%%%%%%%%%%%%%%%%%%%%%%%%%%%%%%%%%%%%%%%%%%%%%%%%%

\end{document}